\let\oldmathbf\mathbf
\renewcommand{\mathbf}[1]{\ifcat\noexpand#1\relax\bm{#1}\else\oldmathbf{#1}\fi}
\newcommand{\bra}[1]{\langle#1\rvert} 
\newcommand{\ket}[1]{\lvert#1\rangle} 
\newtheorem{theorem}{Theorem}
\newtheorem{corollary}{Corollary}[theorem]
\newtheorem{definition}{Definition}%
\begin{document}

\title{Non-Unitary Quantum Machine Learning}

\author{Jamie Heredge}
\email{jamie.heredge@student.unimelb.edu.au}
 \affiliation{School of Physics, University of Melbourne, Parkville, VIC School of Physics, Australia}

\author{Maxwell West}%
 \affiliation{School of Physics, University of Melbourne, Parkville, VIC School of Physics, Australia}
\author{Lloyd Hollenberg}
 \affiliation{School of Physics, University of Melbourne, Parkville, VIC School of Physics, Australia}
\author{Martin Sevior}
 \affiliation{School of Physics, University of Melbourne, Parkville, VIC School of Physics, Australia}
 

\begin{abstract}
We introduce several probabilistic quantum algorithms that overcome the normal unitary restrictions in quantum machine learning by leveraging the Linear Combination of Unitaries (LCU) method. Among our investigations are quantum native implementations of Residual Networks (ResNet), where we show that residual connections between layers of a variational ansatz can prevent barren plateaus in models which would otherwise contain them. Secondly, we implement a quantum analogue of average pooling layers from convolutional networks using single qubit controlled basic arithmetic operators and show that the LCU success probability remains stable for the MNIST database. This method can be further generalised to convolutional filters, while using exponentially fewer controlled unitaries than previous approaches. Finally, we propose a general framework for applying a linear combination of irreducible subspace projections on quantum encoded data. This enables a quantum state to remain within an exponentially large space, while selectively amplifying specific subspaces relative to others, alleviating simulability concerns that arise when fully projecting to a polynomially sized subspace. We demonstrate improved classification performance for partially amplified permutation invariant encoded point cloud data when compared to non-invariant or fully permutation invariant encodings. We also demonstrate a novel rotationally invariant encoding for point cloud data via Schur-Weyl duality. These quantum computing frameworks are all constructed using the LCU method, suggesting that further novel quantum machine learning algorithms could be created by utilising the LCU technique. 
 
\end{abstract}
\maketitle

\section{Introduction}

Quantum computing is an emerging technology with the potential to solve certain problems far more efficiently than classical techniques \cite{Shor_1997, Cao_2019, herman2022survey}. The search for useful applications of quantum computers within the domain of machine learning is of particular importance given the significant impact classical machine learning has had on many fields and industries in recent years \cite{touvron2023llama, chang2023muse, frieder2023mathematical}. Various quantum machine learning algorithms have been proposed, with some of the most common examples attempting to recreate a quantum analogue of the classical neural network \cite{Abbas_2021, Beer_2020}. However, the inherently unitary operations of quantum algorithms impose significant limitations, as they restrict the implementation of non-unitary operations that are essential in many classical machine learning models. To address this issue, we explore the Linear Combination of Unitaries (LCU) method \cite{Childs2012LCU, Berry_2015}, a technique that employs ancilla qubits to probabilistically implement a linear combination of unitary operations, which therefore permits the implementation of non-unitary operations in quantum circuits. This unlocks the possibility of implementing non-unitary operations within quantum machine learning (QML) applications. We utilise this to present several novel applications of the LCU method with various advantages in the field of quantum machine learning.

In Section~\ref{sec:resnet} we adapt the classical residual learning framework to quantum variational circuits, facilitating a quantum native implementation of Residual Networks (ResNet) \cite{he2015deep}. By partially skipping layers of the network, classical residual networks are able to avoid the vanishing gradient problem by allowing the gradients to flow through shallower sections of the network \cite{he2015deep, marion2022scaling}. We show that a quantum ResNet may similarly provide a method of avoiding barren plateaus in Variational Quantum Circuit (VQC) models, by maintaining shallow depth contributions in the final loss function. Furthermore, we show that by including terms that parameterise the strength of the residual connections, it is possible to increase the lower bound of the probability of success of the LCU procedure helping to alleviate one of the LCU method's principal issues.

In Section~\ref{sec:cnn} we implement a quantum analogue of average pooling operations from convolutional neural networks (CNN) \cite{RawatCNN, SHARMA2018377, SaffarCNN} by using the LCU method. This provides an efficient implementation of average pooling for amplitude encoded image data which we demonstrate for any size pooling window. This may further be generalised to convolutional filters leading to an exponential improvement in the number of controlled unitaries required compared to previous techniques \cite{wei2021quantumconvolutionalneuralnetwork, chen2022novelarchitectureparameterizedquantum}. We demonstrate that the LCU probability of success will equal $1$ in the case where all pixels are of the same colour and will decrease when pixels in the local pooling window of dimension $D \times D$ become more diverse. For real-world images, we provide the intuition that the probability remains relatively stable since most pixels are similar locally, except at the edges of subjects in the image. This intuition is supported by empirical evidence on $N \times N$ pixel images in the MNIST \cite{deng2012mnist} database, which shows that probability decreases but levels off to a finite value as $D$ increases, and shows no discernible trend when increasing image size $N$. 

In Section~\ref{sec:irrep_projections} we present a method for projecting quantum encoded data to any combination of irreducible representation subspaces of a finite group, presenting a general framework for implementing full or partial symmetries in the encoding step of quantum machine learning models. These techniques are intended to reduce the effective dimension of quantum encoded data in an effort to improve generalisation performance, which has been reported to decline as the number of qubits and hence the dimension of the encoded quantum states increases \cite{huang_power_2021}. We show that this technique can recreate previous work on permutation invariant encodings for point cloud data \cite{heredge2023permutation} as a special case. Furthermore, we implement a novel rotationally invariant encoding for point cloud data using the new technique by leveraging Schur-Weyl duality. This results in an encoded quantum state for point cloud data that is invariant if classical input point cloud data is rotated in 3-dimensional space, hence strongly enforcing rotation invariance on any model that subsequently uses this rotationally invariant input state. We further show that any combination of projections can be implemented at once, allowing certain symmetric subspaces of the data to be amplified or contracted to give increased flexibility over the amount of symmetry in the encoding. We demonstrate that intermediate levels of permutation symmetry for point cloud encoded data leads to an improved classification performance when compared to a non-symmetric or a fully permutation symmetric encoding.

These implementations illustrate the ability of the LCU method to benefit the field of QML and a summary of these contributions is provided in Table~\ref{tab:intro-table}. This work covers three different algorithmic frameworks that all utilise the same LCU method in their construction which are located in self-contained sections which may be read in any order:
\begin{itemize}
  \item \textbf{Quantum Native ResNet} \newline is detailed in Section~\ref{sec:resnet}
  \item \textbf{Quantum Native Average Pooling} \newline is detailed in Section~\ref{sec:cnn}
  \item \textbf{Irreducible Representation Projections} \newline are detailed in Section~\ref{sec:irrep_projections}
\end{itemize}
The remainder of this introductory section will introduce these three frameworks, followed by a summary of the LCU method used in their constructions.

\begin{table*}[t!]
 \caption{Summary of the algorithms introduced in this work using the LCU framework. $G$ is a finite group. $\ket{\psi_r}$ is the portion of a state $\ket{\psi}$ that occupies the subspace of the irreducible representation $r$, with $\ket{\psi_{\text{rot}}}$ meaning the portion of $\ket{\psi}$ that is in the rotationally invariant subspace. Note that $\langle \psi_r \rvert \psi_r \rangle$ is not necessarily equal to 1, as the condition is $\langle \psi \rvert \psi \rangle = \sum_r \langle \psi_r \rvert \psi_r \rangle$ = 1. The parameters $a_r$ can be chosen by the user subject to normalisation constraints. In average pooling we use a pooling window of dimension $D \times D$ pixels. $L$ is the number of layers in the quantum ResNet. $\beta_l \in [0,1], \beta_l\in \mathbb{R}$ parametrise the strength of the skipped connections in the quantum ResNet. Throughout this work $\log \equiv \log_2$ is the logarithmic function for base 2. 
 \vspace{0.2cm} 
}
  \label{tab:intro-table}
  \centering
\begin{tabular}{|>{\raggedright\arraybackslash}p{2.9cm}||>{\raggedright\arraybackslash}p{4.8cm}|>{\raggedright\arraybackslash}p{1.7cm}|>
{\raggedright\arraybackslash}p{3.4cm}|>{\raggedright\arraybackslash}p{1.5cm}|}
\hline
 \textbf{Algorithm} & \textbf{Result} & \textbf{Ancilla Qubits} & \textbf{Success Probability} & \textbf{Section} \\
\hline
Quantum ResNet Residual Layers & Residual connections can avoid barren plateaus  & $\mathcal{O}(L)$ &$[\prod_{l} \big(1 - 4\beta_l(1-\beta_l)\big), 1]$ & Sec~\ref{sec:resnet} \\
 \hline
Average Pooling & Exponentially fewer controlled unitaries in averaging / convolutional step than previous techniques \cite{wei2021quantumconvolutionalneuralnetwork, chen2022novelarchitectureparameterizedquantum} & $\mathcal{O}(\log(D))$ & Shown for MNIST in Fig~\ref{fig:success_prob_L} and Fig~\ref{fig:success_prob_N}. & Sec~\ref{sec:cnn} \\
\hline
 Irreducible Subspace Projections & Can freely enforce and parameterise symmetry of a quantum encoded state & $\mathcal{O}(\log(\rvert G \rvert ))$ & $\sum_{r} \rvert a_r \rvert^2 \langle \psi_r \rvert \psi_r \rangle$ & Sec~\ref{sec:irrep_projections} \\
\hline
Rotationally Invariant Encoding & Encoding respects rotational invariance of point cloud data & $\mathcal{O}(\log(\rvert G \rvert ))$ & $\langle \psi_{\text{rot}} \rvert \psi_{\text{rot}} \rangle $ & Sec~\ref{sec:rotat-invariant-encoding} \\
\hline
\end{tabular}
\end{table*}

\subsection{Quantum Native ResNet}

A common issue in the training of variational quantum circuit (VQC) models is the issue of barren plateaus leading to vanishing gradients. Classical Residual Networks (ResNet) have profoundly impacted deep learning by enabling the training of extremely deep neural networks through the introduction of skip connections that mitigate the vanishing gradient problem \cite{he2015deep, marion2022scaling}. This suggests that the implementation of Residual Networks within a quantum variational model could provide a significant advancement in quantum machine learning if they could similarly be utilised to avoid the problem of barren plateaus inherent in many VQC models. This line of reasoning was suggested in a recent review of barren plateaus \cite{larocca2024review} with the caveat that the no-cloning theorem may make residual and skipped connections difficult to achieve in quantum circuits. In this work, we shall show a probabilistic implementation of a quantum native ResNet that does not require cloning states and could show promise against barren plateaus in VQC models.

There has been active interest in building quantum ResNet inspired algorithms in the literature, with a primary focus being on quantum-classical hybrid models that use the powerful modern architecture of classical ResNet models, while including quantum machine learning subroutines to produce a hybrid algorithm \cite{HASSAN2024105560, Sagingalieva_2023,zaman2024comparative}. There have also been proposals to implement quantum residual connections by utilising several quantum neural network models in series with residual connections between them \cite{Kashif_2024}. A native VQC implementation of a quantum ResNet algorithm has previously been introduced in \cite{crognaletti2024estimateslossfunctionconcentration, CrognalettiQResNet}, in which skipped connections are possible in a VQC model if the implemented variational layers are restricted to be of the form of a unitary circuit followed by a product Pauli encoding, followed by the unitary circuit conjugate. This implements the quantum ResNet natively on a quantum device; however, as the authors note, this does not cover any general unitary variational layer $W_l(\mathbf{\theta}_l)$, since implementing residual connections for a general $W_l(\mathbf{\theta}_l)$ would not be a unitary process overall and was therefore not considered in this approach. A common theme explored in these previous works was utilising the effectiveness of classical ResNet models in tackling the vanishing gradients problem of deep neural networks and applying this reasoning in a quantum setting to the problem of barren plateaus. 

Applying residual connections in a quantum setting corresponds to operating on a state $\ket{\psi_{l-1}}$ to produce
\begin{equation}
 \ket{\psi_l} = W_l(\mathbf{\theta}_l)\ket{\psi_{l-1}} + \ket{\psi_{l-1}},
\end{equation}
whereby a portion of the previous state $\ket{\psi_{l-1}}$ is able to skip the variational operator $W_l(\mathbf{\theta}_l)$ in that layer. This allows the overall model cost function to retain terms that have only passed through one variational layer, corresponding to very shallow circuits, while still providing terms that have been passed through all layers and hence potentially very deep circuits. The operator which would result in this would be of the form
\begin{equation}
 A_l = W_l(\mathbf{\theta}_l) + I ,
\end{equation}
which in general is not a unitary operator. The key challenge therefore in implementing a quantum native ResNet analogue is that it would require implementation of non-unitary operators, something that becomes possible with the LCU framework. In this work, we translate the ResNet architecture to a quantum setting by applying residual connection with the LCU framework to facilitate the flow of quantum information across deeper or more complex quantum circuits, while allowing the strength of the residual connection to be chosen freely. This introduces a potential new class of VQC ResNet models which could provide possible protection against barren plateaus in complex VQC models.

The use of ancilla qubits to implement ResNet-like architectures in QML models has previously been explored \cite{wen2024enhancingexpressivityquantumneural}, revealing that residual connections within the data encoding segment of a circuit can expand the frequency spectrum of the resulting model, leading to more expressive encodings. In contrast, our work introduces a framework for quantum ResNets exclusively within the variational portion of a VQC model. We provide a detailed proof of the probability of success of the LCU procedure, demonstrating that the lower bound of this probability can be adjusted by varying the residual connection strength. Additionally, we illustrate that applying quantum residual layers to a model can mitigate the occurrence of barren plateaus in circuits which would otherwise contain them. We also demonstrate that quantum ResNets can be viewed as equivalent to ensembles of unitary VQC models with additional non-unitary terms. While we show that quantum ResNets can avoid barren plateaus, we also discuss how they may likely be classically simulatable in many cases, at least in the case that the connection between absence of barren plateaus and classical simulatability \cite{cerezo2023does} is valid for the constituent components. We propose the solution to this quantum ResNet simulatability issue may lie in the non-unitary terms and suggest a characterisation of these terms as a topic for further research.

\subsection{Average Pooling Layers}

Classical Convolution Neural Networks (CNN) are of significant importance to machine learning, mainly due to their structured manner of handling image and video data \cite{RawatCNN, SHARMA2018377, SaffarCNN, vu2024qcnn}. Inspiration from these models has led to the development of quantum analogues. Quantum convolution neural networks have been previously proposed \cite{Cong_2019, umeano2023learn} to classify quantum states with certain symmetry-protected topological phases and to classify image datasets \cite{Hur_2022, LiYao_2020}. In these models variational circuits are used during convolutional layers followed by further variational circuits and measurements in the pooling layer to perform dimensionality reduction such that the operations are performed in a manner that respects certain symmetries of the data. Significant benefits of these models, such as avoiding barren plateaus \cite{Pesah_2021}, have been identified. Quantum-classical hybrid techniques have also been developed that utilise classical convolution neural network architectures alongside quantum models \cite{Liu_2021}.

The implementation we present here differs in that we focus entirely on implementing a subroutine of classical CNN models, the average pooling layer, for amplitude encoded image data. We consider the subroutine where a pooling window of size $D \times D$ passes over the image and outputs the average of all pixels found within the pooling window. We show that this can be implemented natively on a quantum circuit by utilising the LCU method which could lead to an improvement over performing the subroutine classically, as quantum parallelism allows the averaging operation to apply to all pixel simultaneously. This demonstrates the possible utility of the LCU method, while providing a potentially advantageous subroutine for future quantum convolutional neural network models.

Previous work has investigated the LCU technique for creating convolutional layer filters \cite{wei2021quantumconvolutionalneuralnetwork} based on spatial filtering \cite{Yao_2017}, which can recover average pooling as a special case. This work did not generalise to any size $D$, restricting instead to $D=3$, but stated the method would be require $D^2$ multi-controlled operators in general. In contrast, we show a valid proof for any $D$ and an efficient circuit implementation which requires only $\mathcal{O}(\log(D))$ single-qubit controlled unitaries, leading to an exponential improvement in $D$ over previous work \cite{wei2021quantumconvolutionalneuralnetwork}. We also show that our construction can be generalised by adjusting the ancilla qubit state initialisation in order to implement a general convolutional layer filter, recovering the main result of \cite{wei2021quantumconvolutionalneuralnetwork} while maintaining an exponential improvement in $D$.

\subsection{Irreducible Representation Subspace Symmetry Projections}

\begin{figure*}[htb]%
\centering
\includegraphics[width=0.95\linewidth]{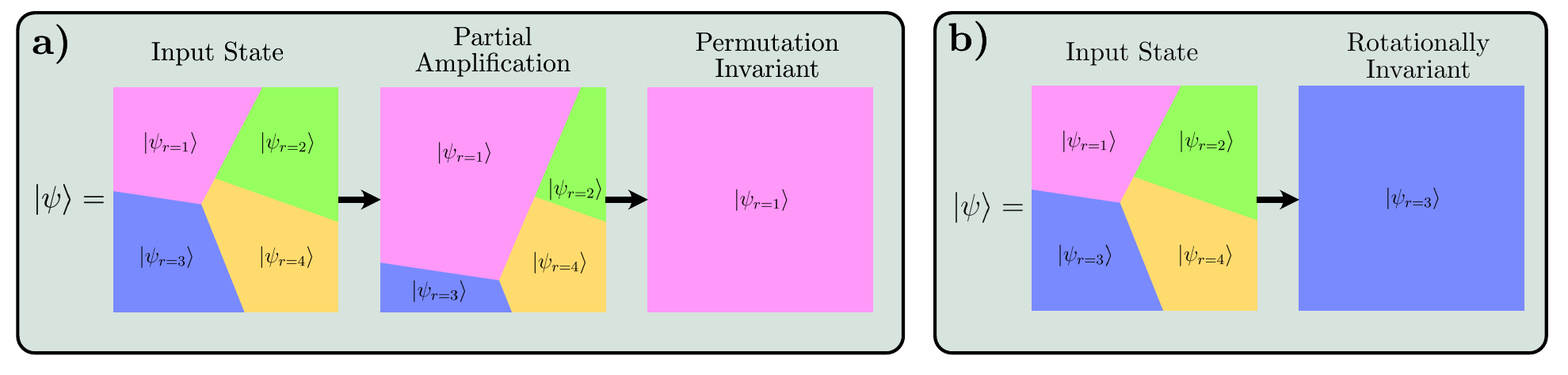}
\caption{Visual representation of the irreducible subspace projection of a quantum state. The input state $\ket{\psi}$ has components in each irreducible representation subspace denoted by $\ket{\psi_r}$. a) The component of $\ket{\psi}$ in the subspaces can be partially amplified to any desired ratio between the subspace components. Controlling the amplification of the permutation symmetric subspace $\ket{\psi_{r=1}}$ for point cloud data is the focus of Section~\ref{sec:subspace-amplification}. The state can also be fully projected to $\ket{\psi_{r=1}}$ which for the permutation group $S_n$ would correspond to a fully permutation invariant state. Permutation invariant encodings have been the subject of previous work \cite{heredge2023permutation} and are reviewed within our new framework in Section~\ref{sec:perm-invariance}. b) The input state $\ket{\psi}$ is fully projected to $\ket{\psi_{r=3}}$ within the $r=3$ irreducible subspace. In Section~\ref{sec:rotat-invariant-encoding} we show that this corresponds to creating rotationally invariant encodings of point cloud data.}\label{fig:projection-graphical}
\end{figure*}

A known issue in Quantum Machine Learning (QML) is that as the number of qubits increases there is a decrease in the generalisation performance of algorithms \cite{huang_power_2021}. A common empirical explanation of this is that the exponentially large Hilbert space leads to an overly expressive feature encoding where overtraining on the data becomes commonplace. Without an accompanied exponential increase in the training data, this leads to an overall reduction in the performance in the validation data set. A possible solution to the issue of overtraining is to reduce the expressibility of the encoding. Examples of techniques that have attempted this range from projecting kernels to a lower-dimensional space \cite{inductive_bias} and approaches that are capable of encoding inductive biases directly into quantum states \cite{bowles2023contextuality}. Furthermore, geometric QML techniques have studied methods for creating variational circuits that are equivariant with respect to data symmetries \cite{meyer22, Nguyen22, west2024provably, Schatzki_2024, tüysüz2024symmetry} in similar attempts to reduce the expressibility of QML models. We instead focus on implementing symmetries directly into the quantum encoded data, which is a stricter implementation of symmetry, meaning our procedure is agnostic to the trainable classification procedure.

In previous work a quantum encoding was proposed using permutation symmetry, which led to a reduction in the dimensionality of the encoding and improved classification performance \cite{heredge2023permutation}. This was discussed in the context of point cloud data (unordered collection of points in 3 dimensions that collectively represent an image), where each point cloud $X$ consists of $n$ points and each point $\textbf{p}_i$ is a 3-dimensional vector. As points do not have intrinsic ordering, the ordering of the points as they are input into a classification algorithm should ideally have no effect on the outcome. Therefore, the point cloud data naturally has point ordering permutation invariance. In general, a machine learning classifier $f$, could return a different result depending on the order of the points, that is, $f(\textbf{p}_1, \textbf{p}_2) \neq f(\textbf{p}_2, \textbf{p}_1)$, unless it has been specifically constructed to respect the permutation symmetry. The permutation invariant encoding \cite{heredge2023permutation} acts by creating an equal quantum superposition of all permutations of the data. In the two-state case, given two points $\textbf{p}_1, \textbf{p}_2$ encoded into the quantum state $\ket{\textbf{p}_1}\otimes\ket{\textbf{p}_2}$ this would correspond to preparing the permutation invariant state
\begin{equation} 
\ket{X_s}=\frac{1}{\sqrt{2}}(\ket{\textbf{p}_1}\ket{\textbf{p}_2}+\ket{\textbf{p}_2}\ket{\textbf{p}_1}),
\end{equation}
However, it has been shown that the permutation invariant state preparation procedure
\begin{equation}
 \ket{\textbf{p}_1}\ket{\textbf{p}_2} \Rightarrow \frac{1}{\sqrt{2}}(\ket{\textbf{p}_1}\ket{\textbf{p}_2} + \ket{\textbf{p}_2}\ket{\textbf{p}_1}),
\end{equation}
cannot be implemented via a unitary operation \cite{buzek_optimal_2000}. However, this process can be implemented in a probabilistic manner using ancilla qubits \cite{barenco1996}. This was shown to lead to improved classification performance for point cloud image classification \cite{heredge2023permutation}. 

In this work, we demonstrate a generalisation of this technique that allows projections to any irreducible representation subspace of a finite group. Unlike previous works \cite{heredge2023permutation}, this means that we are no longer restricted to the symmetric subspace or the permutation group $S_n$. Furthermore, our technique allows for linear combinations of projections to any irreducible representation subspaces.

Utilising this framework, we demonstrate a rotationally invariant encoding for point cloud data as an example use case. This encoding produces the same quantum encoded state each time, even if the data input point cloud is rotated by any amount in 3-dimensions. We show that this is achieved both theoretically and numerically. This is a highly desirable property of the model as point cloud data naturally has rotational symmetry. Especially in applications such as computer vision for autonomous vehicles it is of upmost importance that subjects in the image, such as pedestrians, are correctly identified regardless of the angle from which they are being viewed. Previous work has focused on implementing rotational and permutation symmetry in models for point cloud data \cite{li2024enforcing}, which did so by implementing an equivariant variational model. We highlight that our work does not use equivariant variational models but projects quantum input states to a rotationally invariant subspace, hence strictly enforcing rotational invariance into any model for which this input encoded state is passed, as we effectively delete all information of the state which is not rotationally invariant. 

The permutation and rotationally invariant encodings mentioned previously succeed in reducing the dimension of the encoding, but we note that this reduction may indeed be too drastic, which could lead to classically simulatable approaches or simply delete too much information about the input state, hindering the model performance. We therefore show how our new framework allows for linear combinations of projections that can be utilised to introduce parameterised symmetry subspace amplification. In this setting the dimension of the quantum state can remain exponentially large, while subspaces associated with certain symmetries can have their relative weightings adjusted. We focus on the weighting of the permutation symmetric subspace relative to all other subspaces, which can be continuously adjusted using a hyperparameter $\alpha$. We show that by implementing an intermediate amount of permutation symmetry for point cloud data classification, it is possible to gain higher accuracy scores than using either non-invariant encodings or the fully permutation invariant encodings suggested in previous work \cite{heredge2023permutation}. A visualisation of these applications is shown in Figure~\ref{fig:projection-graphical}.

\subsection{Linear Combinations of Unitaries Method}\label{sec:lcu-framework}

All results in this paper are specific cases of the LCU method described in this section. Let us define the general framework of how the LCU method works in a quantum circuit as detailed in \cite{Childs2012LCU, Berry_2015}. An insightful tutorial on the LCU method can also be found at \cite{pennylane-tutorial}. The LCU procedure allows the implementation of any operator $A$ that is itself a linear combination of $N$ unitary operators $U_j \in SU(2^n)$. The operator $A$ acts on a target state that is contained in an $n$ qubit target register. The target state is denoted $\ket{\psi} \in (\mathbb{C}^2)^{\otimes n}$. We can define the operator $A$ acting on $\ket{\psi}$ as
\begin{equation}
 A \ket{\psi} = \frac{1}{\Omega'} \sum_{j=1}^N \alpha_j U_j \ket{\psi},
\end{equation}
where for simplicity $\alpha_j \geq 0 ,\alpha_j \in \mathbb{R}$ and any negative sign or complex phase can be absorbed into the unitary $U_j \in SU(2^n)$ and $\Omega'$ is a normalisation constant for the final target state. 

\begin{figure}[h]%
\centering
\includegraphics[width=1\linewidth]{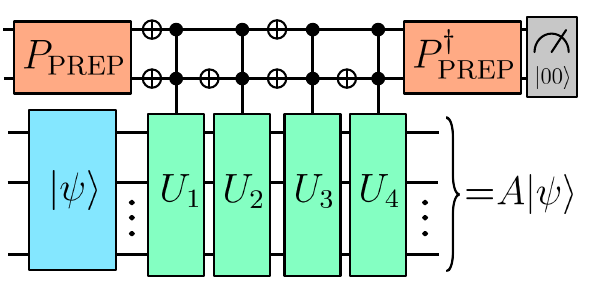}
\caption{Visual representation of the LCU procedure to implement $A\ket{\psi} = \frac{1}{\Omega'}(\alpha_1 U_1 + \alpha_1 U_2 +\alpha_1 U_3 + \alpha_1 U_4)\ket{\psi}$. The upper two qubits are the ancilla qubits initially prepared by the $P_{\text{PREP}}$ operator to be in the state $P_\text{PREP} \ket{b_1}= \frac{1}{\sqrt{\Omega}}\sum_{j=1}^{4} \sqrt{ \alpha_j }\ket{b_j}$. Note that in this case $\ket{b_1} \equiv \ket{00}$ and will implement $U_1$. $\ket{b_2} \equiv \ket{01}$ and will implement $U_2$. $\ket{b_3} \equiv \ket{10}$ and will implement $U_3$. $\ket{b_4} \equiv \ket{11}$ and will implement $U_4$. The inverse preparation $P_{\text{PREP}}^\dagger$ is then applied and the ancillas are measured. If the ancillas are measured to be in the state $\ket{b_1} \equiv \ket{00}$ then $A$ is successfully implemented on the target state $\ket{\psi}$. This example demonstrates how multi-control unitary gates can be implemented in practice which are controlled by the ancilla qubits and applied to the target register containing $\ket{\psi}$. }\label{fig:lcu-graphical}
\end{figure}

In order to implement this, we need to define an ancilla preparation operator $P_\text{PREP}$ which prepares the $k$ ancilla qubits, that are initially in a basis state $\ket{b_1}$, into the following state
\begin{equation}
 P_\text{PREP} \ket{b_1}= \frac{1}{\sqrt{\Omega}}\sum_{j=1}^{N} \sqrt{ \alpha_j }\ket{b_j},
\end{equation}
where $\Omega = \sum_{j=1}^N \alpha_j $ is a normalisation constant for the ancilla state and $\{ \ket{b_j} \} \in (\mathbb{C}^2)^{\otimes k}$ are basis states of the $2^k$ dimensional Hilbert space for the $k$ qubit ancilla register denoted by $\mathcal{H}= (\mathbb{C}^2)^{\otimes k}$, which can be taken to be the computational basis states. The operator itself can be explicitly written as
\begin{align}
  P_\text{PREP} = & \frac{1}{\sqrt{\Omega}}\sum_{j=1}^{N} \sqrt{ \alpha_j }\ket{b_j}\bra{b_1} + \sum_{i = 2}^{2^k}\sum_{j=1}^{2^k} u_{i,j} \ket{b_j}\bra{b_i},
\end{align}
where any terms $(...)\bra{b_i}$ for $i \geq 2$ will not be used and hence can be ignored.

 After the ancilla qubits are prepared, we then apply a selection operator $S_\text{SELECT}$. The selection operator applies the unitary operation $U_j \in SU(2^n)$ to the target register state $\ket{\psi}$ on the condition that the ancilla qubit is in the state $\ket{b_j}$, which can be defined as
\begin{equation}
 S_\text{SELECT}\ket{b_j}\ket{\psi} = \ket{b_j}U_j\ket{\psi}.
\end{equation}
If we now combine the preparation and selection operators, we have
\begin{equation}
 S_\text{SELECT} P_\text{PREP} \ket{b_1} \ket{\psi}= \frac{1}{\sqrt{\Omega}} \sum_{j=1}^{N} \sqrt{\alpha_j} \ket{b_j} U_j 
 \ket{\psi}.
\end{equation}
The final step consists of applying $P_\text{PREP}^\dagger$ which is defined as
\begin{equation}
 P_\text{PREP}^\dagger = \frac{1}{\sqrt{\Omega}}\sum_{j=1}^{N} \sqrt{ \alpha_j }\ket{b_1}\bra{b_j} + \sum_{i = 2}^{2^k}\sum_{j=1}^{2^k} u_{i,j}^* \ket{b_i}\bra{b_j}.
\end{equation}
Applying this to the circuit results in
\begin{align}
&P_\text{PREP}^\dagger S_\text{SELECT} P_\text{PREP} \ket{0} \ket{\psi} \nonumber \\
& = \frac{1}{\Omega} \sum_{j=1}^N \alpha_j \ket{b_1} U_j \ket{\psi} + \sum_{i=2}^{2^k}(...)\ket{b_{i}}\ket{\psi},
\end{align}
where $\sum_{i=2}^{2^k}(...)\ket{b_{i}}\ket{\psi}$ collects terms that will be discarded and can therefore be ignored. We now need to measure the ancilla register and discard any results when the ancilla is not measured in the $\ket{b_1}$ state. The probability of measuring the $\ket{b_1}$ state will equal the probability of success of the LCU method $\pi_S$ which can be written as
\begin{equation}
 \pi_S = \left\lvert \frac{1}{\Omega} \sum_{j=1}^N \alpha_j U_j \ket{\psi} \right\rvert^2.
\end{equation}
Discarding any results in which the ancilla is not measured in the $\ket{b_1}$ state we see that the remaining state will be projected to
\begin{equation}
 \bra{0}P_\text{PREP}^\dagger \rvert S_\text{SELECT} \rvert P_\text{PREP} \ket{0} \ket{\psi} \negthinspace =\negthinspace \frac{1}{\Omega'} \sum_{j=1}^N \alpha_j U_j \ket{\psi} ,
\end{equation}
where $\Omega' = \sqrt{\pi_S} \Omega$ is the normalisation constant for the final state, which can then be written as
\begin{equation}
 \frac{1}{\Omega'} \sum_{j=1}^N \alpha_j U_j \ket{\psi} = A \ket{\psi}.
\end{equation}
Hence, the operator $A$, which is a linear combination of unitaries and hence may itself be non-unitary, has been applied to the state $\ket{\psi}$ \cite{camps2023explicit}. 

The goal of this work is to demonstrate how the LCU method described above can be used in QML tasks to achieve desirable traits in the model architecture that are not possible in a strict unitary setting. The main results of this work rely on specifying preparation and selection operators, showing that they can be implemented on a quantum device, and then repeating the LCU framework detailed here to prove that they result in the desired non-unitary operation. An example LCU circuit for the implementation of $A\ket{\psi} = \frac{1}{\Omega'}(\alpha_1 U_1 + \alpha_1 U_2 +\alpha_1 U_3 + \alpha_1 U_4)\ket{\psi}$ is shown in Figure~\ref{fig:lcu-graphical}.

\section{Quantum Native ResNet}\label{sec:resnet}

\subsection{Variational Quantum Circuit Model Preliminaries}

For a vector of classical input data $\mathbf{x} \in \mathbb{R}^d$ a standard quantum variational circuit model consists of an $n$ qubit encoding circuit $V(\mathbf{x}) \in SU(2^n)$ that encodes the classical data into a quantum state $\ket{\psi_0} \equiv \ket{\phi(\mathbf{x})} = V(\mathbf{x}) \ket{0}^{\otimes n}$. This encoded state can also be represented as a density matrix defined by
\begin{equation}
 \rho(\mathbf{x}) = V(\mathbf{x}) \ket{0}^{\otimes n}\bra{0}^{\otimes n} V(\mathbf{x})^\dagger.
\end{equation}
Alternatively $\ket{\psi_0}$ and $\rho = \ket{\psi_0}\bra{\psi_0}$ could be quantum data in which no encoding process needs to be considered.

The input state $\ket{\psi_0}$ is then passed through $L$ layers of variational quantum circuits $W_l(\mathbf{\theta}_l) \in SU(2^n)$ where $\mathbf{\theta}_l = (\theta_{l1},\theta_{l2},... \theta_{lD_l} )$ is a vector of variational parameters that can be adjusted as the model is trained. We can therefore represent the overall variational circuit as
\begin{equation}
 W(\mathbf{\theta}) = \prod_{l=1}^L W_l(\mathbf{\theta}_l),
\end{equation}
where $\mathbf{\theta}$ is a vector containing all variational parameters for all layers such that $\mathbf{\theta} = \{ \mathbf{\theta}_l \}, l\in [1, L]$. Finally, measurement is made of some Hermitian observable $O$. For simplicity, we will consider a loss function for the model defined as
\begin{equation}
 \mathcal{L}_{\mathbf{\theta}}(\rho, O) \equiv \langle O \rangle_{\rho,\mathbf{\theta}} = \text{Tr}( W(\mathbf{\theta}) \rho W(\mathbf{\theta})^\dagger O),
\end{equation}
where any insights from this loss function with regards to barren plateaus can often be extended to other, more general loss functions. Training a model consists of variationally adjusting the parameters $\mathbf{\theta}$ until the cost function matches some known data labels $y$ to within some acceptable error.

\subsection{Quantum ResNet Implementation}

The key concept of ResNet \cite{he2015deep} is the introduction of residual blocks. Instead of learning a mapping $H(a)$ on some data, the model instead learns some residual function $W(a)$ such that
\begin{equation}
 H(a) = a + W(a) ,
\end{equation}
where $a$ is the input to the block, $W(a)$ is the learned residual mapping and $H(a)$ is the output of the block. This means the output $a_l$ of each layer $l$ can be defined by
\begin{equation}
 a_{l} = a_{l-1} + W_{l}(a_{l-1}, \mathbf{\theta}_{l}) ,
\end{equation}
where $a_0$ is the initial data input. In order to create a quantum version of this framework, we consider the quantum ResNet \cite{crognaletti2024estimateslossfunctionconcentration, CrognalettiQResNet} that can be defined as
\begin{equation}
 \ket{\psi_{l}} = \ket{\psi_{l-1}} + W_{l}(\mathbf{\theta}_{l}) \ket{\psi_{l-1}},
\end{equation}
where $W_{l}(\mathbf{\theta}_{l})$ is the variational unitary for layer $l$, $\ket{\psi_{l}}$ is the output of the layer, $\ket{\psi_{l-1}}$ is the input of the layer, and $\ket{\psi_{0}}$ corresponds to the initial input state. This initial data input state could be quantum data $\ket{\psi_0}$ or it could be a quantum state that encodes classical data $\ket{\psi_0} = \ket{\phi(\mathbf{x})}$, where $\mathbf{x} \in \mathbb{R}^d$ is a $d$-dimensional classical input data vector encoded into a quantum state through a data encoding circuit $V(\mathbf{x})$ such that $\ket{\phi(\mathbf{x})} = V(\mathbf{x})\ket{0}^{\otimes n}$. Implementation of quantum ResNet of this form involves applying an operator 
\begin{equation}
 A_{\text{RES}} = (I + W_l(\mathbf{\theta}_l)),
\end{equation}
to the state $\ket{\psi_{l-1}}$. This presents a problem, as the operator is not necessarily unitary in general for all $W_l(\mathbf{\theta}_l) \in SU(2^n)$. Therefore, we proceed with implementing it via the LCU method.

To maintain the most general case, we shall consider adding some additional control into the magnitude of the skipped connections parameterised for each layer by some amount $\beta_l \in \mathbb{R}$. Therefore the definition of a quantum residual connection we shall use is
\begin{equation}\label{eqn-residual-connection-quantum-def}
 \ket{\psi_{l}} = \frac{1}{\Omega'}\big((1 -\beta_l )\ket{\psi_{l-1}} + \beta_l W_{l}(\mathbf{\theta}_{l}) \ket{\psi_{l-1}}\big),
\end{equation}
where $\Omega'$ is a normalisation constant. 

This can be encoded with the LCU framework in a quantum circuit by applying the $P_{\text{PREP}}^{(l)}$ single qubit operator to the $l$-th ancilla qubit denoted by $\ket{0}^{(l)} $ defined by 
\begin{align}
 P_{\text{PREP}}^{(l)} = & ( \sqrt{1- \beta_l} \ket{0}^{(l)} + \sqrt{\beta_l} \ket{1}^{(l)}) \bra{0}^{(l)} \nonumber \\
 & + (...)\bra{1}^{(l)},
\end{align}
where terms involving $(...)\bra{1}^{(l)}$ do not have an effect on the circuit and can be ignored. Hence we can write the action of preparing the $l$-th ancilla qubit as
\begin{align}
 P_{\text{PREP}}^{(l)}\ket{0}^{(l)} = ( \sqrt{1- \beta_l} \ket{0}^{(l)} + \sqrt{\beta_l} \ket{1}^{(l)}) ,
\end{align}
such that $P_{\text{PREP}}\ket{0}^{\otimes L} = \bigotimes_{l=1}^L P_{\text{PREP}}^{(l)}\ket{0}^{(l)} $ and hence
\begin{equation}\label{eqn:lcu-resnet-prep}
 P_{\text{PREP}}\ket{0}^{\otimes L} = \bigotimes_{l=1}^L ( \sqrt{1- \beta_l} \ket{0}^{(l)} + \sqrt{\beta_l} \ket{1}^{(l)}) ,
\end{equation}
where for simplicity $\beta_l \in \mathbb{R}$. The selection operator is defined to be controlled by the $l$-th ancilla qubit as
\begin{equation}
 S_{\text{SELECT}}^{(l)} \ket{0}^{(l)} \ket{\psi}= \ket{0}^{(l)} \ket{\psi},
\end{equation}
\begin{equation}
 S_{\text{SELECT}}^{(l)} \ket{1}^{(l)} \ket{\psi}= \ket{1}^{(l)} W_{l}(\mathbf{\theta}_{l}) \ket{\psi}.
\end{equation}
We can then define the overall selection operator by specifying the ordering
\begin{equation}\label{eqn:lcu-resnet-selection}
 S_{\text{SELECT}}= \prod_{l=1}^L S_{\text{SELECT}}^{(L - l + 1)} = S_{\text{SELECT}}^{(L)} \cdots S_{\text{SELECT}}^{(1)},
\end{equation}
such that $S_{\text{SELECT}}^{(1)}$ is applied first.

 \begin{figure}[h]%
\centering
\includegraphics[width=1\linewidth]{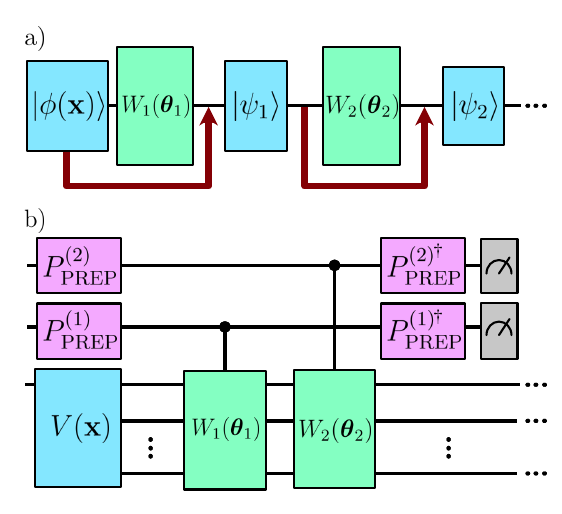}
\caption{a) Classical ResNet conceptual illustration in which residual information can be skipped over each layer. b) Quantum ResNet implementation using LCU methods, showing the first two layers only. In this case a quantum analogue of ResNet can be probabilistically implemented using ancilla qubits. The $P_{\text{PREP}}^{(1)}$ prepares the first ancilla qubit into the state $( \sqrt{1- \beta_{1}} \ket{0}^{(1)} + \sqrt{\beta_{1}} \ket{1}^{(1)})$; when this ancilla is measured at the end in the $\ket{0}^{(1)}$ state it corresponds to applying the operator $(1-\beta_1)\mathbb{I} + \beta_1 W_1(\mathbf{\theta}_1)$ to the initial input state $\ket{\phi(\mathbf{x})} = V(\mathbf{x})\ket{0}^{\otimes n}$. The process is repeated for future layers allowing a portion of the quantum state to skip each layer and be added to the output of the layer.}\label{fig:quantum_resenet_real}
\end{figure}

\begin{theorem}[Quantum ResNet]\label{thm:resnet-main}
 It is possible to probabilistically implement a quantum ResNet architecture on a quantum circuit where each layer of the variational circuit permits a residual connection defined as \[ \ket{\psi_{l}} = \frac{1}{\Omega'}\big( (1 -\beta_l )\ket{\psi_{l-1}} + \beta_l W_{l}(\mathbf{\theta}_{l}) \ket{\psi_{l-1}} \big) ,\]
 where $W_{l}(\mathbf{\theta}_{l})$ is the $l$-th layer of the variational circuit, the constants $\beta_l \in \mathbb{R}, \beta_l \in [0, 1]$ can be freely chosen, the input to the layer is the state $\ket{\psi_{l-1}}$ and the output of the layer is the state $\ket{\psi_{l}}$. The term $\Omega'$ is a normalisation constant. The algorithm can be applied for all layers $l\in[1,L]$ simultaneously with a fixed probability of success
 \[\pi_{S} \negthinspace = \negthinspace \prod_{l=1}^L \negthinspace \Big( \negthinspace 1 \negthinspace - \negthinspace 2\beta_l(1 \negthinspace - \negthinspace \beta_l)\big( 1 \negthinspace - \negthinspace \textup{Re}(\bra{\psi_{l-1}}W_l(\mathbf{\theta}_l)\ket{\psi_{l-1}})\big) \negthinspace \Big),\]
 where $\ket{\psi_0}$ is an initial input state given to the algorithm which can be quantum data, or a quantum state encoded by classical data via an encoding circuit $V(\mathbf{x})$ such that $\ket{\psi_0} = \ket{\phi(\mathbf{x})} = V(\mathbf{x})\ket{0}^{\otimes n}.$
\end{theorem}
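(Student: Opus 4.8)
The plan is to specialise the general LCU result of Section~\ref{sec:lcu-framework} to the preparation and selection operators already written down in Eqs.~\eqref{eqn:lcu-resnet-prep}--\eqref{eqn:lcu-resnet-selection}, and then evaluate the resulting success probability explicitly. First I would run the three LCU steps ($P_{\text{PREP}}$, $S_{\text{SELECT}}$, $P_{\text{PREP}}^\dagger$) on the joint ancilla--target state $\ket{0}^{\otimes L}\ket{\psi_0}$ and expand over computational basis strings $\mathbf{b}=(b_1,\dots,b_L)$ of the ancilla register. Because $P_{\text{PREP}}=\bigotimes_l P_{\text{PREP}}^{(l)}$ is a tensor product and each $S_{\text{SELECT}}^{(l)}$ is controlled by a distinct ancilla, the branch labelled by $\mathbf{b}$ carries amplitude $\prod_l \sqrt{\alpha_l^{(b_l)}}$ on the target state $W_L^{b_L}\cdots W_1^{b_1}\ket{\psi_0}$, where $\alpha_l^{(0)}=1-\beta_l$ and $\alpha_l^{(1)}=\beta_l$ and the superscript $b_l$ indicates that $W_l$ is applied only when $b_l=1$. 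Projecting every ancilla onto $\ket{0}$ after $P_{\text{PREP}}^\dagger$ multiplies each branch by a further $\prod_l\sqrt{\alpha_l^{(b_l)}}$, so the post-selected (unnormalised) target operator is $\sum_{\mathbf{b}}\big(\prod_l\alpha_l^{(b_l)}\big)W_L^{b_L}\cdots W_1^{b_1}$, which factorises as the ordered product $A=\prod_{l=L}^{1}M_l$ with $M_l=(1-\beta_l)I+\beta_l W_l(\mathbf{\theta}_l)$. This identifies the post-selected state with the recursively defined ResNet output and reduces the theorem to computing $\pi_S=\lVert A\ket{\psi_0}\rVert^2$.

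For the probability I would use a telescoping normalisation argument rather than naively multiplying single-layer LCU probabilities. Introducing the unnormalised iterates $\ket{\phi_0}=\ket{\psi_0}$, $\ket{\phi_l}=M_l\ket{\phi_{l-1}}$ and their normalised versions $\ket{\psi_l}=\ket{\phi_l}/\lVert\ket{\phi_l}\rVert$, the relation $\ket{\phi_l}=\lVert\ket{\phi_{l-1}}\rVert\,M_l\ket{\psi_{l-1}}$ gives $\lVert\ket{\phi_l}\rVert^2=\lVert\ket{\phi_{l-1}}\rVert^2\,\lVert M_l\ket{\psi_{l-1}}\rVert^2$. Since $\lVert\ket{\phi_0}\rVert=1$, this telescopes to $\pi_S=\lVert\ket{\phi_L}\rVert^2=\prod_{l=1}^{L}\lVert M_l\ket{\psi_{l-1}}\rVert^2$, with each factor evaluated on the correctly renormalised intermediate state $\ket{\psi_{l-1}}$. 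The normalised $\ket{\psi_l}$ produced here coincides with the layer output in the theorem statement, confirming consistency of the recursion.

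The remaining single-layer calculation is routine: expanding $M_l^\dagger M_l$ and using $W_l^\dagger W_l=I$ gives $M_l^\dagger M_l=\big((1-\beta_l)^2+\beta_l^2\big)I+\beta_l(1-\beta_l)\big(W_l+W_l^\dagger\big)$, so that $\lVert M_l\ket{\psi_{l-1}}\rVert^2=(1-\beta_l)^2+\beta_l^2+2\beta_l(1-\beta_l)\,\textup{Re}\big(\bra{\psi_{l-1}}W_l(\mathbf{\theta}_l)\ket{\psi_{l-1}}\big)$. Rewriting $(1-\beta_l)^2+\beta_l^2=1-2\beta_l(1-\beta_l)$ collapses this to $1-2\beta_l(1-\beta_l)\big(1-\textup{Re}(\bra{\psi_{l-1}}W_l\ket{\psi_{l-1}})\big)$, and substituting into the telescoped product yields the claimed $\pi_S$.

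The main obstacle I anticipate is not any individual computation but getting the bookkeeping of normalisation right: the joint post-selection probability factorises into single-layer contributions only when each contribution is taken against the renormalised state $\ket{\psi_{l-1}}$, and conflating the unnormalised $\ket{\phi_{l-1}}$ with the normalised $\ket{\psi_{l-1}}$ would spoil the product formula. Keeping the operator ordering consistent with Eq.~\eqref{eqn:lcu-resnet-selection} (so that $M_1$ acts first and the recursion matches $\ket{\psi_{l}}=\frac{1}{\Omega'}M_l\ket{\psi_{l-1}}$) is the only other point requiring care.
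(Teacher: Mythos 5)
Your proposal is correct, and all three of its computational ingredients check out: the bitstring expansion does yield the post-selected operator $A=M_L\cdots M_1$ with $M_l=(1-\beta_l)\mathbb{I}+\beta_l W_l(\mathbf{\theta}_l)$ (the ordering is right, since $S_{\text{SELECT}}^{(1)}$ acts first), the telescoping identity $\lVert\ket{\phi_L}\rVert^2=\prod_{l=1}^L\lVert M_l\ket{\psi_{l-1}}\rVert^2$ is valid, and your single-layer algebra reproduces Eq.~\eqref{eq:betaprob} exactly. However, your route differs from the paper's in a genuine way. The paper's proof is sequential and inductive: it measures and post-selects each ancilla immediately after its layer, shows the base case $\ket{\psi_1}$ explicitly, assumes $\ket{\psi_{p-1}}$ is correctly prepared, establishes the step to $\ket{\psi_p}$ with conditional success probability $\pi_p$ computed against the already-renormalised state, and then multiplies the conditional probabilities. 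You instead defer all measurements to the end, derive the global post-selected operator coherently via the branch expansion, and recover the product formula through the telescoping normalisation — implicitly invoking the deferred-measurement equivalence between the two protocols (which holds here because later selection operators never touch earlier ancillas). What your version buys is an explicit closed form for the total non-unitary operator $A=\prod_l M_l$, which the paper only surfaces later and informally in the ensemble discussion (the $(\mathbb{I}+W_L)\cdots(\mathbb{I}+W_1)$ expansion of Section~\ref{sec:avoiding-bp-resnet}), and a cleaner one-shot justification of why the product of conditional factors equals the joint success probability. What the paper's sequential induction buys is a direct match to the physical protocol — one can abort early upon a failed ancilla measurement — and an explicit construction of each intermediate $\ket{\psi_l}$ as a bona fide circuit output at every stage. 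You were also right to flag the normalisation bookkeeping as the one delicate point; the only residual edge case (shared with the paper) is the degenerate situation $\lVert M_l\ket{\psi_{l-1}}\rVert=0$, e.g.\ $\beta_l=\tfrac12$ with $W_l\ket{\psi_{l-1}}=-\ket{\psi_{l-1}}$, where the intermediate renormalised state is undefined but the product formula still correctly returns $\pi_S=0$.
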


\begin{proof}

We use the preparation and selection operators defined in Equation~\ref{eqn:lcu-resnet-prep} and Equation~\ref{eqn:lcu-resnet-selection}, respectively, with the LCU framework as outlined in Section~\ref{sec:lcu-framework}.

The $ P_{\text{PREP}}$ operator corresponds to performing single qubit initialisation on $L$ qubits and is clearly implementable on a quantum device. Likewise $S_{\text{SELECT}} $ corresponds to a controlled $W_l(\mathbf{\theta}_l)$ gate, where $W_l(\mathbf{\theta}_l)$ is defined to be unitary, which is controlled by the $l$-th qubit. As $W_l(\mathbf{\theta}_l)$ is unitary, the $S_{\text{SELECT}}$ operation can also be implemented on a quantum device.

We consider applying the LCU procedure to the first ancilla qubit only. Starting with the first qubit preparation operator
\begin{align}
 &P_{\text{PREP}}^{(1)}\ket{0}^{\otimes L} \ket{\psi_0} \nonumber \\
 & = ( \sqrt{1- \beta_1} \ket{0}^{(1)} + \sqrt{\beta_1} \ket{1}^{(1)}) \bigotimes_{l=2}^L \ket{0}^{(l)} \ket{\psi_0}.
\end{align}
Subsequently, the selection operator which is controlled by the first qubit is applied
\begin{align}
  S&_{\text{SELECT}}^{(1)} P_{\text{PREP}}^{(1)}\ket{0}^{\otimes L} \ket{\psi_0} \nonumber \\
  =& S_{\text{SELECT}}^{(1)}( \sqrt{1- \beta_1} \ket{0}^{(1)} + \sqrt{\beta_1} \ket{1}^{(1)}) \bigotimes_{l=2}^L \ket{0}^{(l)} \ket{\psi_0} \nonumber \\
  =& \sqrt{1- \beta_1} \ket{0}^{(1)} \bigotimes_{l=2}^L \ket{0}^{(l)} \ket{\psi_0} \nonumber \\
  & + \sqrt{\beta_1} \ket{1}^{(1)} \bigotimes_{l=2}^L \ket{0}^{(l)} W_1(\mathbf{\theta}_1) \ket{\psi_0}.
\end{align}
Consider the conjugate preparation operator, 
\begin{align}
 (P_{\text{PREP}}^{(1)})^\dagger = & \ket{0}^{(1)} ( \sqrt{1- \beta_1} \bra{0}^{(1)} + \sqrt{\beta_1} \bra{1}^{(1)}) \nonumber \\
 & + \ket{1}^{(1)}(...),
\end{align}
where $\ket{1}^{(1)}(...)$ collects terms that will not be used, as we will require the ancilla to be in the $\ket{0}^{(1)}$ state. Applying this conjugate operator we can write the state as
\begin{align}
  (P&_{\text{PREP}}^{(1)})^\dagger 
 S_{\text{SELECT}}^{(1)} P_{\text{PREP}}^{(1)}\ket{0}^{\otimes L} \ket{\psi_0} \nonumber \\
  = & \ket{0}^{(1)}\Big( (1 - \beta_1) \bigotimes_{l=2}^L \ket{0}^{(l)} \ket{\psi_0} \nonumber \\
 & + \beta_1 \bigotimes_{l=2}^L \ket{0}^{(l)} W_1(\mathbf{\theta}_1) \ket{\psi_0}\Big) + \ket{1}^{(1)}(...),
\end{align}
where $\ket{1}^{(1)}(...)$ collects terms that will be discarded and can therefore be ignored. The first ancilla qubit is now measured, and the result is discarded unless it is found to be in the $\ket{0}^{(1)}$ state. The probability of measuring the ancilla in the $\ket{0}^{(1)}$ state will be equal to
\begin{align}
 \pi_1 & = \left\lvert (1 - \beta_1) \ket{\psi_0} + \beta_1 W_1(\mathbf{\theta}_1) \ket{\psi_0}) \right\rvert^2 \nonumber \\
  & = 1 - 2\beta_1(1 - \beta_1)\big( 1 - \text{Re}(\bra{\psi_0}W_1(\mathbf{\theta}_1)\ket{\psi_0})\big),
\end{align}
which will depend on the strength of the residual connection parameterised by $\beta_1$ as well as the real component of $\bra{\psi_0}W_1(\mathbf{\theta}_1)\ket{\psi_0}$, which will depend on the quantum encoded state $\ket{\psi_0}$ and the variational operator used $W_1(\mathbf{\theta}_1)$. 

If we succeed in finding the first ancilla in the $\ket{0}^{(1)}$ state, then the target state will now be
\begin{align}
  & \bra{0}^{(1)} (P_{\text{PREP}}^{(1)})^\dagger S_{\text{SELECT}}^{(1)} P_{\text{PREP}}^{(1)}\ket{0}^{\otimes L} \ket{\psi_0} = \nonumber \\ 
  & \frac{1}{\sqrt{\pi_1}}\big( (1 \negthinspace - \negthinspace \beta_1) \bigotimes_{l=2}^L \ket{0}^{(l)} \ket{\psi_0} \negthinspace+\negthinspace \beta_1 \bigotimes_{l=2}^L \ket{0}^{(l)} W_1(\mathbf{\theta}_1) \ket{\psi_0} \big) ,
  \end{align}
and hence ignoring the extra ancilla qubits we have
\begin{equation}
 \ket{\psi_1} = \frac{1}{\Omega_1'}\big((1 - \beta_1) \ket{\psi_0} + \beta_1 W_1(\mathbf{\theta}_1) \ket{\psi_0}\big),
\end{equation}
where $\Omega'_1 = \sqrt{\pi_1}$ is the required normalisation constant. Hence, for layer $l=1$ we have shown that the quantum residual connection defined in Equation~\ref{eqn-residual-connection-quantum-def} is implemented for the original input state $\ket{\psi_0}$. 

Assume now that $\ket{\psi_{p-1}}$ is correctly implemented by the procedure and consider the subsequent implementation of $\ket{\psi_p}$. We prepare the $p$-th ancilla using

\begin{align}
 & P_{\text{PREP}}^{(p)}\ket{0}^{\otimes L-p} \ket{\psi_{p-1}} \nonumber \\
 & = ( \sqrt{1- \beta_p} \ket{0}^{(p)} + \sqrt{\beta_p} \ket{1}^{(p)}) \bigotimes_{l=p + 1}^L \ket{0}^{(l)} \ket{\psi_{p-1}}.
\end{align}
Through the same procedure as previously we see
\begin{align}
  & S_{\text{SELECT}}^{(p)} P_{\text{PREP}}^{(p)} \ket{0}^{\otimes {L- p}} \ket{\psi_{p-1}} \nonumber \\
  & = \sqrt{1- \beta_p} \ket{0}^{(p)} \bigotimes_{l=p + 1}^L \ket{0}^{(l)} \ket{\psi_{p-1}} \nonumber \\
  & + \sqrt{\beta_p} \ket{1}^{(p)} \bigotimes_{l=p + 1}^L \ket{0}^{(l)} W_p(\mathbf{\theta}_{p}) \ket{\psi_{p-1}}.
\end{align}
Applying the conjugate preparation operator for the $p$-th qubit which may be written as
\begin{align}
 (P_{\text{PREP}}^{(p)})^\dagger = & \ket{0}^{(p)} ( \sqrt{1- \beta_p} \bra{0}^{(p)} + \sqrt{\beta_p} \bra{1}^{(p)}) \nonumber \\
 & + \ket{1}^{(p)}(...),
\end{align}
to the ancillas we see that
\begin{align}
  & (P_{\text{PREP}}^{(p)})^\dagger S_{\text{SELECT}}^{(p)} P_{\text{PREP}}^{(p)}\ket{0}^{\otimes L - p} \ket{\psi_0} \nonumber \\
  & = (1 - \beta_p) \ket{0}^{(p)} \bigotimes_{l=p + 1}^L \ket{0}^{(l)} \ket{\psi_{p-1}} \nonumber \\
  & + \beta_p \ket{0}^{(p)} \bigotimes_{l=p + 1}^L \ket{0}^{(l)} W_p(\mathbf{\theta}_p) \ket{\psi_{p-1}} + \ket{1}^{(p)}(...),
  \end{align}
where $\ket{1}^{(p)}(...)$ collects terms that will end up being discarded and hence can be ignored. Measuring the ancilla qubit we can see that the probability of measuring the $p$-th qubit in the $\ket{0}^{(p)}$ state is given by
\begin{align}
 \pi_p & = \left\lvert ( (1 - \beta_p) \ket{\psi_{p-1}} + \beta_p W_p(\mathbf{\theta}_p) \ket{\psi_{p-1}} ) \right\rvert^2 \nonumber \\
   & = 1 - 2\beta_p(1 \negthinspace - \negthinspace \beta_p)\big( 1 \negthinspace - \negthinspace \text{Re}(\bra{\psi_{p-1}}W_p(\mathbf{\theta}_p)\ket{\psi_{p-1}})\big) ,\label{eq:betaprob}
\end{align}
Ensuring the process is only continued if the ancilla is in the $\ket{0}^{(p)}$ state we find
\begin{align}
   \bra{0}^{(p)} & (P_{\text{PREP}}^{(p)})^\dagger S_{\text{SELECT}}^{(p)} P_{\text{PREP}}^{(p)}\ket{0}^{\otimes L - p} \ket{\psi_0} \nonumber \\
   = & \frac{1}{\sqrt{\pi_p}}\big( (1 - \beta_p) \bigotimes_{l=p + 1}^L \ket{0}^{(l)} \ket{\psi_{p-1}} \nonumber \\
   & + \beta_p \bigotimes_{l=p + 1}^L \ket{0}^{(l)} W_p(\mathbf{\theta}_p) \ket{\psi_{p-1}} \big).
   \end{align}
 Ignoring the unused ancillas we can therefore write the output of layer $l=p$ as 
 \begin{equation}
  \ket{\psi_{p}} = \frac{1}{\Omega'_p}\big( (1 -\beta_p )\ket{\psi_{p-1}} + \beta_p W_{p}(\mathbf{\theta}_{p})\ket{\psi_{p-1}} \big),
\end{equation} 
where $\Omega'_p = \sqrt{\pi_p}$ is a normalisation constant. Hence, assuming that the state $\ket{\psi_{p-1}}$ is correctly implemented by the procedure, we have shown that the output state $\ket{\psi_{p}}$ will be correctly implemented according to the quantum residual connection defined in Equation~\ref{eqn-residual-connection-quantum-def}. As we have shown that $\ket{\psi_1}$ can be implemented correctly from the data input state $\ket{\psi_0}$, then by inductive reasoning $\ket{\psi_l}$ is correctly implemented for all $l \in [L]$.

The overall probability of success $\pi_S$ will rely on all $L$ qubits being successfully measured in the $\ket{0}^{(l)}$ state such that $\pi_{S} = \prod_{l=1}^L \pi_l $ and hence
\begin{align}
 \pi_{S} \negthinspace = \negthinspace \prod_{l=1}^L \negthinspace \Big( \negthinspace 1 \negthinspace - \negthinspace 2\beta_l(1 \negthinspace - \negthinspace \beta_l)\big( 1 \negthinspace - \negthinspace \text{Re}(\bra{\psi_{l-1}}W_l(\mathbf{\theta}_l)\ket{\psi_{l-1}})\big) \negthinspace \Big).
\end{align}
as required.
\end{proof}

We therefore show that the most general form of ResNet where connections can be skipped at every layer in a variational quantum circuit is probabilistically implementable. An example circuit architecture that demonstrates the process in the first two layers is shown in Figure~\ref{fig:quantum_resenet_real}. This implementation requires the number of ancilla qubits to equal $L$. For a more qubit efficient but less general version, see Appendix~\ref{apn:input-focussed-quantum-resnet} in which the number of ancilla used scales $\mathcal{O}(\log(L))$.

\subsection{Probabilistic Scaling}

During the proof in the previous section we found in Equation~\ref{eq:betaprob} that the probability $\pi_l$ of measuring the $l$-the qubit in the $\ket{0}^{(l)}$ state, which is required for implementing the LCU method, to be
\begin{equation}
 \pi_l = 1 - 2\beta_l(1 - \beta_l)\big( 1 - \text{Re}(\bra{\psi_{l-1}}W_l(\mathbf{\theta}_l)\ket{\psi_{l-1}})\big).
\end{equation}
This depends on the strength of the residual connection $\beta_l$, the quantum state of the previous step $\ket{\psi_{l-1}}$ and the variational circuit $W_l(\mathbf{\theta}_l)$.

Note that in general $\text{Re}(\bra{\psi_{l-1}}W_l(\mathbf{\theta}_l)\ket{\psi_{l-1}})$ lies in the range $[-1, 1]$. Therefore, the probability of success lies in the range $[1 - 4\beta_l(1-\beta_l), 1]$. The worst-case scenario therefore occurs when $\beta_l = \frac{1}{2}$ as in this case the probability of success is within the range $[0, 1]$ and therefore there is a chance that the algorithm cannot be implemented. However, as $\beta_l$ is varied to be closer to $0$ or $1$ this gives an increasing lower bound of $1 - 4\beta_l(1-\beta_l)$ for the probability of success (if $\beta_l = 1$ the probability of success is equal to $1$, but would trivially mean that the skip connection is simply not performed. Likewise, if $\beta_l = 0$ then the success probability is equal to $1$ but corresponds to not implementing the variational circuit $W_l(\mathbf{\theta}_l)$ at all). This means that the lower bound of the probability of success varies with the strength of the residual connection for the layer, as shown in Figure~\ref{fig:beta-scaling}.

 \begin{figure}[h]%
\centering
\includegraphics[width=1\linewidth]{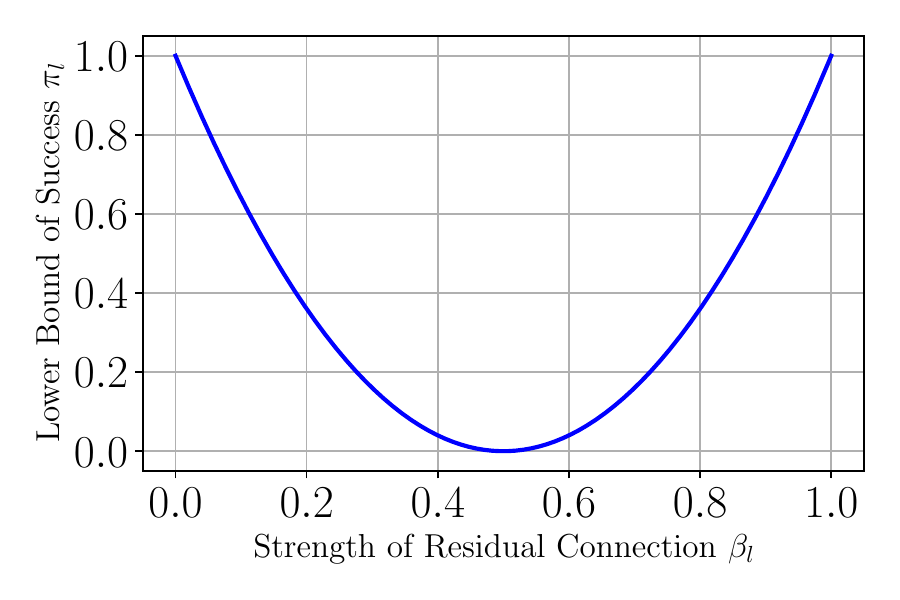}
\caption{Lower bound of success probability $\pi_l$ for layer $l$ in the LCU quantum native ResNet, as the strength of residual connection parameter $\beta_l$ varies. The lower bound is equal to $1-4 \beta_l(1-\beta_l)$. }\label{fig:beta-scaling}
\end{figure}

The term $\text{Re}(\bra{\psi_{l-1}}W_l(\mathbf{\theta}_l)\ket{\psi_{l-1}})$ determines whether we will be close to the lower bound probability $1 - 4\beta_l(1-\beta_l)$ or the upper bound of $1$. We see that if $W_l(\mathbf{\theta}_l) = \mathbb{I}$ then we reach the upper bound and if $W_l(\mathbf{\theta}_l) = - \mathbb{I}$ we reach the lower bound. If restrictions were placed on the form $W_l(\mathbf{\theta}_l)$ can take, there is potential to increase the lower bound probability even further.

As shown previously overall probability of success $\pi_S$ will rely on all $L$ qubits being successfully measured in the $\ket{0}$ state such that $\pi_{S} = \prod_{l=1}^L \pi_l $ and hence
\begin{align}
 \pi_{S} \negthinspace = \negthinspace \prod_{l=1}^L \negthinspace \Big( \negthinspace 1 \negthinspace - \negthinspace 2\beta_l(1 \negthinspace - \negthinspace \beta_l)\big( 1 - \text{Re}(\bra{\psi_{l-1}}W_l(\mathbf{\theta}_l)\ket{\psi_{l-1}})\big) \negthinspace \Big).
\end{align}
Therefore, the lower bound of the overall probability of success is given by $\prod_l^L \big(1 - 4\beta_l(1-\beta_l)\big)$ which can be adjusted to be between $0$ and $1$ by varying the strength of the residual layers through the variables $\beta_l$. Although we highlight that this cannot be arbitrarily adjusted, as if $\beta_l$ reaches $0$ or $1$ this would trivially correspond to no residual connection at all, and more generally the value of $\beta_l$ may be decided by alternative factors for a particular model or dataset.

If for a given architecture the probability of success for any layer is bounded between $[\pi_{\text{low}}, \pi_{\text{high}}]$. Then the probability of overall success must be bounded between $[\pi_{\text{low}}^L, \pi_{\text{high}}^L]$. In the case of $\pi_{\text{low}}, \pi_{\text{high}} < 1$, this means that the probability will decay exponentially with $L$, which is a key drawback of this method. However, if the number of layers used is chosen to scale logarithmically with the number of qubits $L = \mathcal{O}(\log(n))$ then the algorithm can still run in time $\mathcal{O}(\text{poly}(n))$ in general, where $n$ is the number of qubits for the target state register that initially contains $\ket{\psi_0}$. Note that when $L=1$ we have a purely unitary model. Therefore, for $L>1$ we will have access to models that are at least as expressive as a standard unitary VQC model, with the potential to be even more expressive. Hence, even setting $L=\mathcal{O}(1), L>1$ would still allow a greater variation in expressivity of the models compared to the standard unitary case. Furthermore, as discussed previously, the lower bounds can be adjusted by varying the strength of the residual connections for a given architecture.

\subsection{Avoiding Barren Plateaus with Residual Connections}\label{sec:avoiding-bp-resnet}

Several works have been undertaken to characterise the conditions under which VQC models exhibit barren plateaus \cite{larocca2024review}. A model is said to contain barren plateaus if $\forall \theta_\mu \in \mathbf{\theta}$ the following condition holds \cite{Larocca_2022}
\begin{equation}
  \text{Var}_{\mathbf{\theta}} \left( \frac{\partial \langle O \rangle_{\rho, \mathbf{\theta}}}{\partial \theta_\mu} \right) \leq F(n) \text{  ,  } F(n) = \mathcal{O}\Big(\frac{1}{b^n}\Big)\text{  ,  } b>1.
\end{equation}
where $\text{Var}_{\mathbf{\theta}}$ is the variance calculated over a uniform distribution of parameters $\mathbf{\theta}$. If the model exhibits exponential gradient concentration then this makes training models difficult. The structure of the variational ansatz plays an important role in determining this value, as for a wide range of cases the loss function gradient variance will decay exponentially if the dynamical Lie algebra (DLA) of the variational circuit generators is exponential in dimension \cite{fontana2023adjoint, ragone2023unified}. Note that an important assumption in these characterisations is usually that the circuits are sufficiently deep to form approximate designs, meaning that the depth of the circuit is also an important consideration regarding the presence of barren plateaus \cite{ragone2023unified}. Previous work has shown that using an ensemble of several shallow depth models can avoid loss function concentration \cite{friedrich2024quantumneuralnetworkensemble}. We show that the quantum ResNet provides a method of creating exponentially large ensembles of unitary VQC models, in which shallower depth components can help avoid barren plateaus, while also providing additional non-unitary contributions to the variance. However, if these shallower layers can be efficiently simulated classically and the non-unitary terms exponentially vanish, then the entire model may be at risk of being classically simulated. This finding suggests the importance of further exploring the non-unitary components of quantum ResNet models.

\subsubsection{Avoiding Barren Plateaus}

By introducing residual connections between layers of the ansatz it is possible to effectively mitigate barren plateaus in the model. This is similar to the classical role of ResNet in which vanishing gradients are mitigated in deep networks by allowing the gradients to skip over portions of the overall network \cite{marion2022scaling, he2015deep}. For example, take the case where a variational ansatz can be separated into two separate sections $W_1(\mathbf{\theta}_1)$ followed by $W_2(\mathbf{\theta}_2)$. Furthermore, assume that when combined in series the overall operator $W_2(\mathbf{\theta}_2)W_1(\mathbf{\theta}_1)$ exhibits barren plateaus. Consider a residual connection introduced between the input and the output of the first layer $W_1(\mathbf{\theta}_1)$ with residual connection strength $\beta$, and no residual connection over the second layer $W_2(\mathbf{\theta}_2)$. Starting with an initial state $\ket{\psi_0}$ we see the states produced in this circuit are given by
\begin{equation}
 \ket{\psi_1} = \frac{1}{\Omega'}( (1-\beta)\ket{\psi_0} + \beta W_1(\mathbf{\theta}_1) \ket{\psi_0} ),
\end{equation}
\begin{equation}
 \ket{\psi_2} = W_2(\mathbf{\theta}_2) \ket{\psi_1}.
\end{equation}
where $\Omega'$ is a normalisation constant. It follows that the final state can be written as
\begin{align}
 \ket{\psi_2} = \frac{1}{\Omega'}( (1-\beta) W_2(\mathbf{\theta}_2)\ket{\psi_0} + \beta W_2(\mathbf{\theta}_2) W_1(\mathbf{\theta}_1) \ket{\psi_0} ).
\end{align}
This corresponds to applying the operator 
\begin{align}
 A(\mathbf{\theta})_{\text{RES}} = ((1-\beta) W_2(\mathbf{\theta}_2) + \beta W_2(\mathbf{\theta}_2) W_1(\mathbf{\theta}_1) ),
\end{align}
and normalising the resultant state by a factor $\frac{1}{\Omega'}$. We now consider some quantum encoded density matrix state $\rho$, where $\rho \equiv \ket{\psi_0}\bra{\psi_0}$, which is evolved by this variational operator $A(\mathbf{\theta})_{\text{RES}}$. The resulting loss function for the model with respect to a measurement operator $O$ is defined by
\begin{equation}
 \langle O \rangle_{\rho, \mathbf{\theta}} = \text{Tr}(A(\mathbf{\theta})_{\text{RES}} \rho A(\mathbf{\theta})_{\text{RES}}^\dagger O ),
\end{equation}

While $A(\mathbf{\theta})_{\text{RES}}$ is not in general unitary we can decompose its corresponding loss function into unitary evolution type terms and non-unitary evolution terms by considering the following expansion
\begin{align}
 \langle O \rangle_{\rho, \mathbf{\theta}} & = \frac{1}{\Omega'^2} \Big( (1-\beta)^2 \text{Tr}(  W_2(\mathbf{\theta}_2) \rho W_2(\mathbf{\theta}_2)^\dagger O ) \nonumber \\ 
  & + \beta^2 \text{Tr}( W_2(\mathbf{\theta}_2) W_1(\mathbf{\theta}_1) \rho W_1(\mathbf{\theta}_1)^\dagger W_2(\mathbf{\theta}_2)^\dagger O ) \nonumber \\
  & + \beta (1-\beta) \text{Tr}( W_2(\mathbf{\theta}_2) \rho W_1(\mathbf{\theta}_1)^\dagger W_2(\mathbf{\theta}_2)^\dagger O ) \nonumber \\ 
  & + \beta (1-\beta) \text{Tr}( W_2(\mathbf{\theta}_2) W_1(\mathbf{\theta}_1) \rho W_2(\mathbf{\theta}_2)^\dagger O ) \Big).
\end{align}
where we can collect the non-unitary terms together by representing the expression as 
\begin{align}
 \langle O \rangle_{\rho, \mathbf{\theta}} & = \frac{1}{\Omega'^2} \Big( (1-\beta)^2 \text{Tr}( W_2(\mathbf{\theta}_2) \rho W_2(\mathbf{\theta}_2)^\dagger O ) \nonumber \\ 
  & + \beta^2 \text{Tr}( W_2(\mathbf{\theta}_2) W_1(\mathbf{\theta}_1) \rho W_1(\mathbf{\theta}_1)^\dagger W_2(\mathbf{\theta}_2)^\dagger O ) \nonumber \\
  & + \beta(1-\beta) \text{Tr}( W_2(\mathbf{\theta}_2) \tilde{\rho}_{\mathbf{\theta}_1} W_2(\mathbf{\theta}_2)^\dagger O ),
\end{align}
where $\tilde{\rho}_{\mathbf{\theta}_1} = \rho W_1(\mathbf{\theta}_1)^\dagger + W_1(\mathbf{\theta}_1) \rho$. It is now possible to identify three distinct terms in the loss function. 

\begin{description}
    \item[Deep - $L_{\text{BP}} = \text{Tr}( W_2(\mathbf{\theta}_2) W_1(\mathbf{\theta}_1) \rho W_1(\mathbf{\theta}_1)^\dagger W_2(\mathbf{\theta}_2)^\dagger O )$]
    This component is identical to the unitary VQC loss function if no residual component had been implemented. It corresponds to a deep circuit consisting of $ W_1(\mathbf{\theta}_1)$ followed by  $W_2(\mathbf{\theta}_2)$ in series, which we will assume exhibits barren plateaus.

    \item[Shallow - $L_{\text{No-BP}} = \text{Tr}( W_2(\mathbf{\theta}_2) \rho W_2(\mathbf{\theta}_2)^\dagger O )$]
    This component is identical to the unitary VQC loss function if the variational circuit consisted of $ W_2(\mathbf{\theta}_2)$ only. We will assume that this model does not exhibit barren plateaus, as has been shown to be the case for many variational circuits \cite{Schatzki_2024, wiersema2023classification, west2024provably, diaz2023showcasing}, but may be vulnerable to being classically simulated \cite{cerezo2023does}.

    \item[Non-unitary - $L_{\tilde{\rho}} = \text{Tr}( W_2(\mathbf{\theta}_2) \tilde{\rho}_{\mathbf{\theta}_1} W_2(\mathbf{\theta}_2)^\dagger O )$]
    This component contains the non-unitary element of the loss function. It appears similar to the $L_{\text{No-BP}}$ term with the key difference that instead of $\rho$ being unitary evolved we have $\tilde{\rho}_{\mathbf{\theta}_1} = \rho W_1(\mathbf{\theta}_1)^\dagger + W_1(\mathbf{\theta}_1) \rho$. While $\tilde{\rho}_{\mathbf{\theta}_1}$ is Hermitian, it does not necessarily have a trace equal to 1 nor is it positive semidefinite in general.
\end{description}

\begin{figure*}[htb]%
\centering
\includegraphics[width=0.7\linewidth]{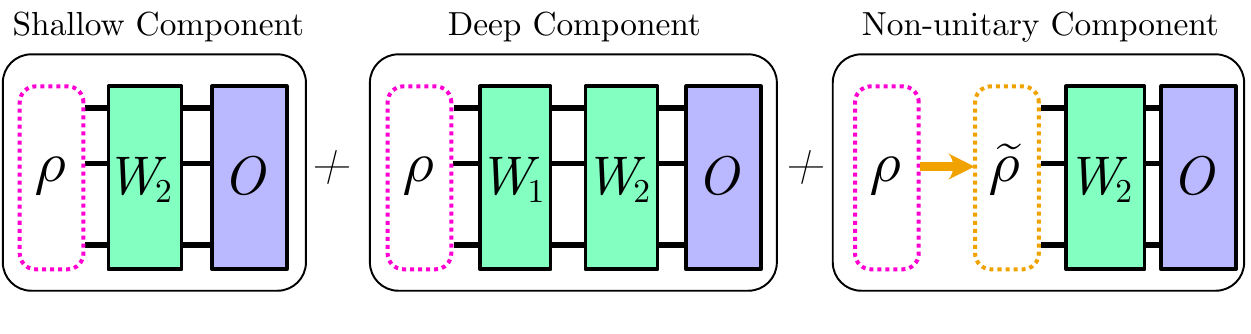}
\caption{Visualisation of the single residual connection $L=1$ quantum ResNet loss function $\langle O \rangle_{\rho, \mathbf{\theta}}$ when decomposed into the individual terms corresponding to the shallow component $L_{\text{No-BP}}$, deep component $L_{\text{BP}}$ and non-unitary component $L_{\tilde{\rho}}$. In the non-unitary component the arrow corresponds to the transformation $\rho \rightarrow \tilde{\rho} = \rho W_1^\dagger + W_1 \rho$.}\label{fig:loss_term_visualisation}
\end{figure*}

Setting $\beta=\frac{1}{2}$ and absorbing this factor along with the normalisation constant into the loss definition we can write
\begin{equation}
    \langle O \rangle_{\rho, \mathbf{\theta}} =  L_{\text{No-BP}} +  L_{\text{BP}} +  L_{\tilde{\rho}}.
\end{equation}
A visual interpretation of this resulting quantum ResNet model is shown in Figure~\ref{fig:loss_term_visualisation}. We can therefore find the gradient of this loss function with respect to parameters $\theta_\mu \in \mathbf{\theta}$ as
\begin{equation}
  \frac{\partial\langle O \rangle_{\rho, \mathbf{\theta}}}{\partial \theta_\mu} =  \frac{ \partial L_{\text{No-BP}}}{\partial \theta_\mu} +  \frac{\partial L_{\text{BP}} }{\partial \theta_\mu} +  \frac{\partial L_{\tilde{\rho}}}{\partial \theta_\mu}.
\end{equation}
The resulting gradient variance term can then be written as
\begin{align}
  & \text{Var}_{\mathbf{\theta}} \left( \frac{\partial \langle O \rangle_{\rho, {\mathbf{\theta}}}}{\partial \theta_\mu} \right) = \nonumber \\ 
  &  \text{Var}_{\mathbf{\theta}}(\frac{ \partial L_{\text{No-BP}}}{\partial \theta_\mu}) + \text{Var}_{\mathbf{\theta}}(\frac{ \partial L_{\text{BP}}}{\partial \theta_\mu}) + \text{Var}_{\mathbf{\theta}}(\frac{ \partial L_{\tilde{\rho}}}{\partial \theta_\mu}) \nonumber \\
  &\negthinspace \negthinspace + \negthinspace\text{Cov}_{\mathbf{\theta}}(\frac{ \partial L_{\text{No-BP}}}{\partial \theta_\mu}, \frac{ \partial L_{\text{BP}}}{\partial \theta_\mu}) +\text{Cov}_{\mathbf{\theta}}(\frac{ \partial L_{\text{No-BP}}}{\partial \theta_\mu}, \frac{ \partial L_{\tilde{\rho}}}{\partial \theta_\mu}) \nonumber \\
  &\negthinspace \negthinspace + \negthinspace\text{Cov}_{\mathbf{\theta}}(\frac{ \partial L_{\text{BP}}}{\partial \theta_\mu}, \frac{ \partial L_{\tilde{\rho}}}{\partial \theta_\mu}).
\end{align}

By construction we know that $L_{\text{No-BP}}$ does not lead to barren plateaus and we can assume that $\text{Var}_{\mathbf{\theta}}(\frac{ \partial L_{\text{No-BP}}}{\partial \theta_\mu}) \sim \frac{1}{\text{poly}(n)}$ does not decay exponentially. Hence the overall quantum ResNet model has gradient variance $\text{Var}_{\mathbf{\theta}} \left( \frac{\partial \langle O \rangle_{\rho, {\mathbf{\theta}}}}{\partial \theta_\mu} \right)  \sim \frac{1}{\text{poly}(n)}$ and therefore does not exhibit barren plateaus. The shallow depth component has mitigated barren plateaus in the overall model. This is demonstrated experimentally in Figure~\ref{fig:bp_evidence} which shows that the gradient variance $\text{Var}_{\mathbf{\theta}} \left( \frac{\partial \langle O \rangle_{\rho, {\mathbf{\theta}}}}{\partial \theta_\mu} \right)$ for the overall quantum ResNet model decays sub-exponentially.

 \begin{figure}[h]%
\centering
\includegraphics[width=0.8\linewidth]{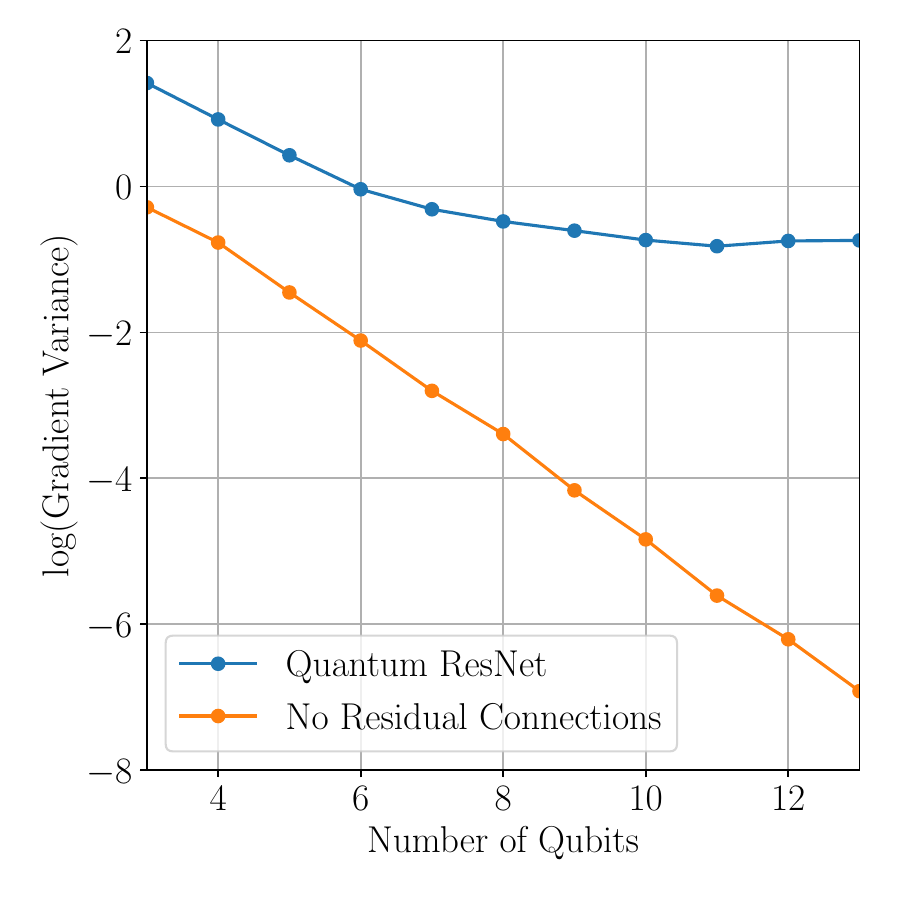}
\caption{The logarithm of the gradient variance as the number of qubits is increased. (Quantum Residual Connections) The model in which one residual layer is implemented between the input and output of the $W_1$ portion of the BP ansatz which consists of $W_2 W_1$. (No Residual Connections) The model with no residual connections is an ansatz which is known to exhibit barren plateaus and corresponds to the operator $W_2 W_1$. In this example $W_1$ was generated by $[XY, YX, YZ]$ leading to an exponential DLA dimension \cite{wiersema2023classification} while $W_2$ was generated by $[XY]$ leading to a polynomial DLA dimension \cite{wiersema2023classification}. Both $W_1$ and $W_2$ consisted of 25 repeated sublayers of parameterised gates. Each sublayer consisted of parameterised two-qubit gates that acted on all adjacent qubits of the form $\prod_{k=1} \big( (\bigotimes_{j = \text{odd}} e^{i \theta_j H^{(k)}_{j,j+1}})( \bigotimes_{j' = \text{even}} e^{i \theta_{j'} H^{(k)}_{j',j'+1}} ) \big)$ where $H^{(k)}_{j,j+1}$ is the $k$-th element in the specified set of generators which acts on the qubits indexed by $j$ and $j+1$. Variational parameters $\theta \in \mathbf{\theta}$ were sampled over the range $\theta \in [0, 2\pi]$ and 1000 samples were taken in total. The initial state $\rho_0 = \ket{0}^{\otimes n}\bra{0}^{\otimes n}$ and the final observable was $O = Y \otimes Y \otimes \mathbb{I}^{n-2}$. }\label{fig:bp_evidence}
\end{figure}

\subsubsection{Classical Simulation}

It has recently been reported that a broad class of VQC models that avoid barren plateaus are also classically simulatable  \cite{cerezo2023does}, limiting their potential for quantum advantage. While quantum residual connections offer a means to circumvent barren plateaus, they may still lead to scenarios where the model is well-approximated by the shallow component of the loss function $L_{\text{No-BP}}$, which may be classical simulatable \cite{cerezo2023does}. By construction we expect the $L_{\text{BP}}$ to exponentially vanish, and if the same happens to the $L_{\tilde{\rho}}$ term then this implies that the entire quantum ResNet model could be approximated efficiently on a classical computer within a given error. This raises the question: Are quantum ResNet models that avoid barren plateaus classically simulatable in general? If we take the assumption that $L_{\text{BP}}$ exponentially vanishes and $L_{\text{No-BP}}$ is classically simulatable, this means the answer depends on the characteristics of the non-unitary term $L_{\tilde{\rho}}$. If the $L_{\tilde{\rho}}$ term does not decay exponentially and cannot be classically simulated, then this condition could lead to a situation in which a model with at least one residual connection could be difficult to simulate, and simultaneously avoid barren plateaus. This non-unitary term, has not yet been characterised in general and therefore highlights a possible search space for future research, however, we note that this may not necessarily be a common situation. We show in Figure~\ref{fig:avg_abs_loss} that for the particular quantum ResNet architecture we consider in our numerics, that $L_{\tilde{\rho}}$ exponentially decays with an increasing number of qubits, which suggests that this specific architecture may be vulnerable to classical simulation. However, we only consider a limited example and note that there is a large space which remains unknown and is open to further work. Generalised quantum operations of the form $\sum_i \alpha_i U_i$, which are implemented by the LCU procedure, have been shown to form a convex set and the extreme points of this set are the unitary operations \cite{gudder_duality}. This means that the space of operators that can be considered is much larger than that which has previously been studied in the literature, and a full characterisation of this general space is beyond the scope of this current work.

 \begin{figure}[h]%
\centering
\includegraphics[width=0.8\linewidth]{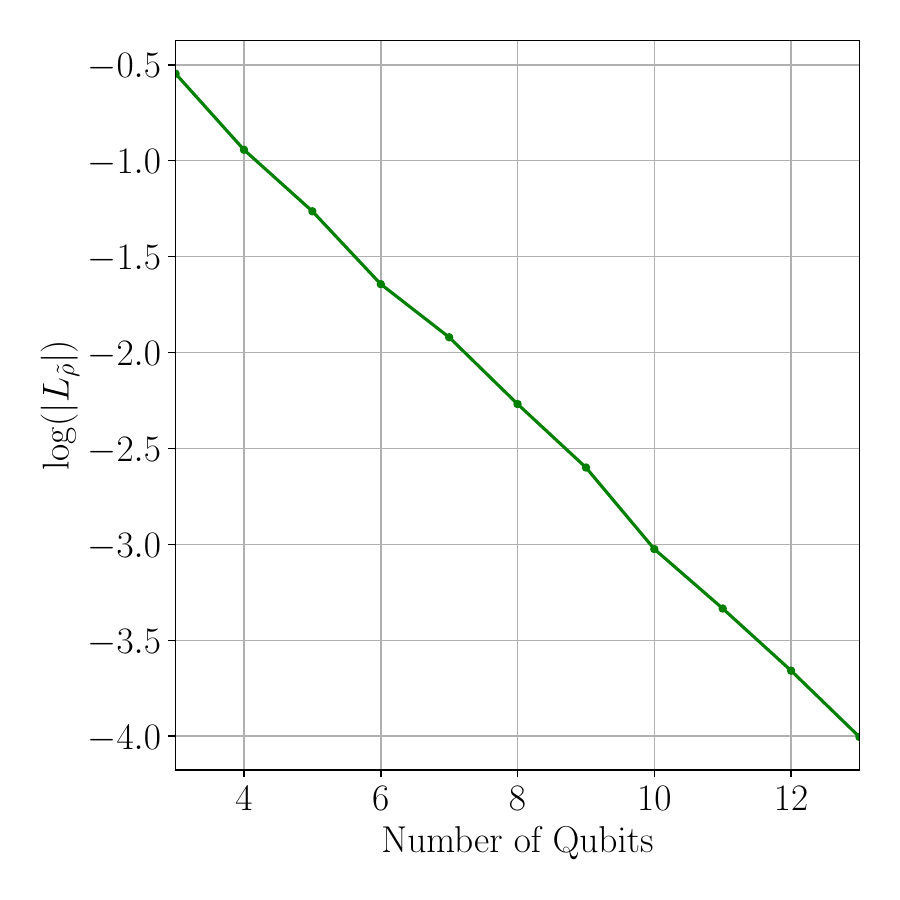}
\caption{The logarithm of the average magnitude of the non-unitary term $L_{\tilde{\rho}}$. The average was performed over 1000 initialisation. Data is from the same experiment specified in Figure~\ref{fig:bp_evidence}. In this particular setting the non-unitary contribution vanishes exponentially, suggesting it may be possible to approximate the model by $L_{\text{BP}} + L_{\text{No-BP}}$. If the $L_{\text{No-BP}}$ loss vanishes exponentially too, then the entire quantum ResNet could be approximated by $L_{\text{BP}}$. }\label{fig:avg_abs_loss}
\end{figure}

\subsubsection{Quantum ResNet as an Ensemble}

In the following portion we use the term layer to refer to quantum ResNet layers, such that $L$ corresponds to the number residual connections and hence the number of controlled unitaries in the model. Each individual controlled unitary $W_l$ is a variational ansatz, which in many cases is composed of repeated layers of a certain gate architecture; to avoid confusion, we will refer to the repeated layers within these unitary $W_l$ terms as sublayers. We will now denote the unitary in the $l$-th residual network layer as $W_l^{(\lambda)}$, where $\lambda$ specifies the number of repeated sublayers in the ansatz corresponding to that particular unitary.


Previous work on the characterisation of barren plateaus makes assumptions that the circuits involved are sufficiently deep \cite{ragone2023unified, fontana2023adjoint}. As the quantum ResNet allows layer skipping, this means that there may always be shallow depth circuit components in the final loss function, which will not satisfy this assumption. We now consider a deep circuit composed of many repeated sublayers of the same structure of variational gates. We show that it is possible with $L$ layers of quantum ResNet to effectively implement an ensemble of $2^L$ different ans\"{a}tze where the number of sublayers of the individual ansatz terms in the ensemble ranges from $1$ to $2^L$. 

The uniform ensemble quantum ResNet architecture is created using $L$ quantum residual layers by ensuring the number of repeated sublayers $\lambda$ in the $l$-th unitary $W_l^{(\lambda)}$ is given by $\lambda = 2^{l-1}$ for $1 \leq l \leq L$. To prevent a constant term in the loss function arising from the identity, we also initially apply a unitary operator consisting of a single sublayer $\lambda = 1$, without using a residual connection, which we denote as $W_0^{(1)}$. This corresponds to applying the operator

\begin{equation}
    (\mathbb{I} + W_L^{(2^{L-1})})...(\mathbb{I} + W_3^{(4)})(\mathbb{I} + W_2^{(2)})(\mathbb{I} + W_1^{(1)})W_0^{(1)},
\end{equation}
and normalising the resultant state by a factor $\frac{1}{\Omega'}$. This term expands out to give an operation consisting of a linear combination of $2^L$ terms. When unitaries multiply to create these terms, the number of sublayers add together. It is therefore clear that this procedure results in a operator 
\begin{equation}
    \sum_{\lambda' = 1}^{2^L} V^{(\lambda')},
\end{equation}
where $V^{(\lambda')}$ is a unitary corresponding to 
\begin{equation}
    V^{(\lambda')} = \Big( \prod_{k \in S_{\lambda'}} W_k^{(2^{k-1})} \Big) W_0^{(1)}.
\end{equation}
Where $S_{\lambda'} \subseteq [L,1]$ is the set that indexes the $W_k^{(2^{k-1})}$ terms that multiply together to give a total number of sublayers $\lambda' = 1 + \sum_{k \in S_{\lambda'}}2^{k-1}$. The state is also normalised by a factor $\frac{1}{\Omega''}$ after the operation is applied. In total, we have $2^L$ terms where the number of sublayers in each terms is in the range $\lambda' \in [1, 2^L]$. This will lead to a loss function with $2^L$ unitary VQC model loss functions, covering sublayer depths from $1$ to $2^L$, in addition to non-unitary cross terms.

Many investigations into barren plateaus specifically require circuits to be of a certain depth, for example, requiring sufficient depth to form an approximate design over a Lie group \cite{ragone2023unified}. While bounds have been found for this depth for certain circuits, the quantum ResNet would allow an agnostic approach where many different layers can be trialled in the same model run and where shallow circuit components could guarantee the overall model does not exhibit barren plateaus. We see that for $L$ quantum ResNet layers one can implement $2^L$ different unitary terms in an ensemble, in addition to further non-unitary cross terms. 

Loss function concentration implies loss function gradient concentration, and hence this metric can be similarly used to discuss the presence of barren plateaus \cite{Arrasmith_2022}. An ensemble sum of all terms with differing numbers of sublayers may be able to avoid barren plateaus due to the presence of shallower depth components within the average ensemble, as shown in Figure~\ref{fig:4-layer-ensemble}. This reaffirms previous results showing quantum ensemble methods can avoid barren plateaus \cite{friedrich2024quantumneuralnetworkensemble}. However, the quantum ResNet will also provide additional sources of variance from the non-unitary terms in the loss function, although a detailed characterisation of these terms is beyond the scope of this work.

 Note that as the number of quantum ResNet layers $L$ increases, the number of terms grows exponentially, meaning that any individual term will have a weighting that decays exponentially due to the normalisation of the state after the operation is applied. Therefore, if $L$ is increased, one would need to ensure a sufficient amount of non-exponentially concentrating terms survive to still avoid barren plateaus. By initialising the ancilla qubits, one can vary the weightings of particular terms in the ensemble. This flexibility could allow the creation of models that can be pushed close to the boundary of barren plateaus but still remain in the trainable region. Similarly to the arguments made for the previous model, if the non-unitary terms vanish, then we would expect the loss function to be on average well approximated by the shallow non-vanishing terms which do not exhibit barren plateaus individually (excluding rare potential configurations in which deep terms still provide non-zero contributions). If these shallow terms turn out to be classically simulatable \cite{cerezo2023does} then this means that the overall quantum ResNet will be classically simulatable to within some given error. Efforts to avoid this situation should focus on finding settings in which the non-unitary terms do not exponentially vanish; it remains an open question whether this is possible.

 \begin{figure}[h]%
\centering
\includegraphics[width=0.95\linewidth]{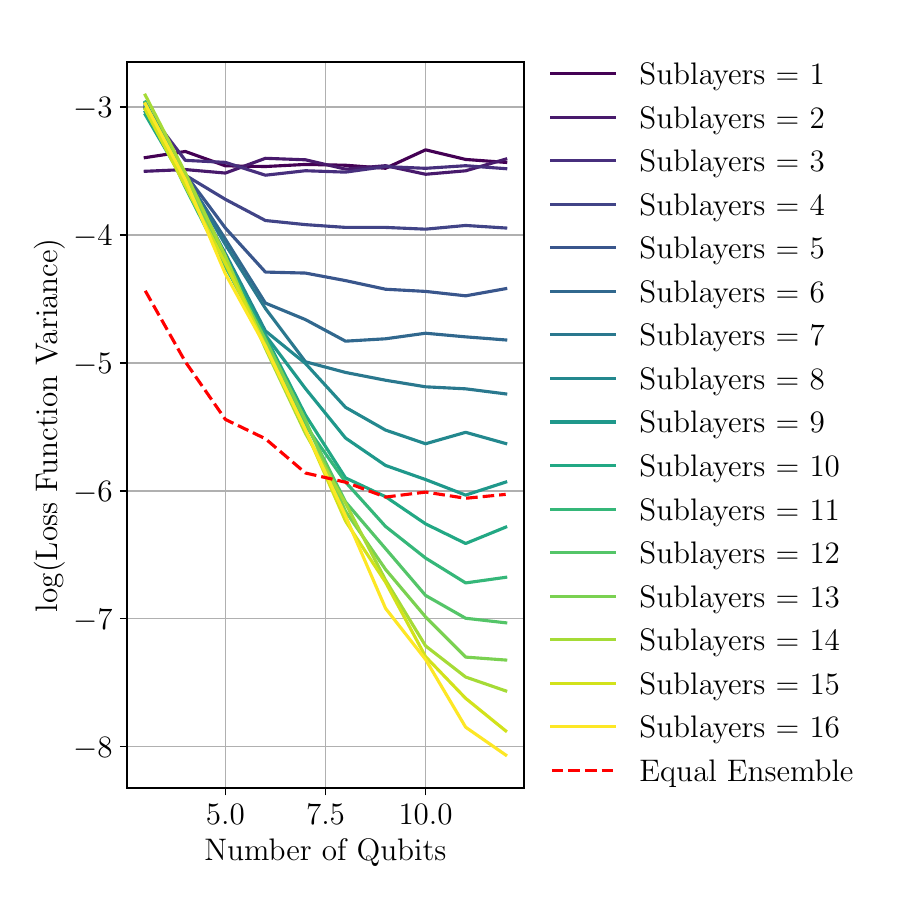}
\caption{The logarithm of the variance of the loss function for differing numbers of sublayers of the same ansatz architecture as the number of qubits increases. The variance was calculated over 500 random initialisations of the variational parameters. The ansatz was created using the generators $[XY, YX, YZ]$ as specified in Figure~\ref{fig:bp_evidence}. The red dashed line corresponds to the variance of a normalised average of all the loss functions.}\label{fig:4-layer-ensemble}
\end{figure}

Previously we showed in Equation~\ref{eq:betaprob} that the strength of the residual connection determined a lower bound of the probability of success of a given layer. We also noted that the probability of overall success decays exponentially in the number of residual connections $L$. We show in Figure~\ref{fig:Ensemble_Scaling_Beta} the impact of these effects on the expected number of repeated attempts required to achieve a successful LCU procedure. While the success probability decays exponentially in $L$ it allows the construction of ensembles of size $2^L$, which therefore leads to a linear relationship between expected attempts and ensemble size.

 \begin{figure}[h]%
\centering
\includegraphics[width=0.95\linewidth]{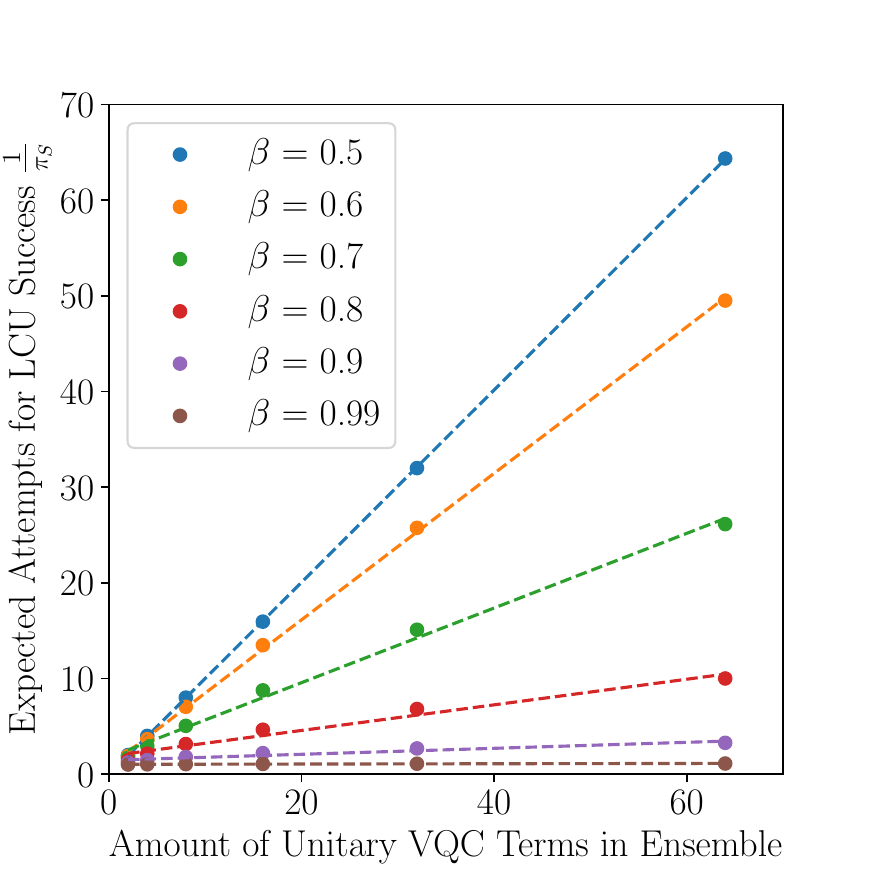}
\caption{The expected repeated attempt to implement one successful LCU procedure (equal to the reciprocal of the overall success probability $\frac{1}{\pi_S}$) plotted against the total amount of unitary VQC terms in the ensemble for various fixed residual connection strengths $\beta$. The ensemble will contain $2^L$ terms where each term corresponds to a unitary VQC model loss function, where the number of sublayers ranges in increments of $1$ from $\lambda = 1$ to $\lambda = 2^L$. In these numerics we start with an initial state $\rho = \ket{0} \bra{0}$. At each residual connection layer $l$ we calculate the probability of success of that layer $\pi_l$ and then evolve the state by the operator $(I + W^{(\lambda)}_l)$, where $W^{(\lambda)}_l$ consists of $\lambda = 2^l$ sublayers. In this case we chose each individual sublayer to be a Haar random unitary. The $\beta$ terms correspond to fixing $\beta_l = \beta, \forall l$ such that every residual connection has the same strength parameter. Due to the symmetry around $\beta = 0.5$ we restricted to plotting $0.5 \leq \beta < 1$ for readability}\label{fig:Ensemble_Scaling_Beta}
\end{figure}
 
We also note that in the case that a given residual layer provides an exponentially vanishing contribution to the overall loss, from both its unitary and non-unitary contributions, then this could potentially be used to improve the probability of success scaling of the algorithm by allowing failed implementations of that particular layer to be accepted as successes. In a scenario in which multiple ans\"{a}tze are being tested simultaneously, in which one may not know which contain barren plateaus or not, it would be possible to ignore any failures for ans\"{a}tze which do not contribute meaningfully to the loss. This means that the probability of successful implementation would decay exponentially with the number of useful quantum ResNet layers implemented $L'$ that provide non-vanishing contributions to the loss, rather than the total number of layers $L$. The implication of this however, is that one could simply discard these unused terms completely from the model and restrict to $L'$. In this case an ensemble of $2^{L'}$ models could then be manually constructed, which may be able to approximate the quantum ResNet results. Therefore, any benefit from future research in this direction would be focussed on the ability to search through a wide range of ans\"{a}tze at the same time, and find those that provide meaningful contributions, rather than any quantum advantage inherent in the model performance itself. Although in the worst case settings the cost related to the probabilistic nature of the LCU procedure may render the quantum ResNet infeasible.

Although we show that quantum ResNet frameworks can avoid barren plateaus, it is possible that many common architectures may approach a solution which is vulnerable to classical simulation. This would occur when the quantum ResNet is well approximated by an ensemble of classically simulatable shallow depth terms. Perhaps this situation should be expected, as it has previously been observed that classical deep ResNets can behave as ensembles of shallow networks, where a previous study reported that the gradient of a $110$ layer ResNet was dominated by contributions from paths of depth between $10-34$ layers \cite{veit2016residualnetworksbehavelike}. We again highlight that the key to further advancements in the quantum ResNet implementation will likely rest on constructing models in which the non-unitary terms do not exponentially vanish and remain hard to simulate classically. This provides an expanded search space compared to strictly unitary models which could be subject to a more in depth characterisation in future work. Overall the question of whether this ensemble-creating property of quantum ResNet can be used effectively therefore remains a question for further research, although the reported success of VQC ensemble models \cite{friedrich2024quantumneuralnetworkensemble} may motivate further investigation into the efficient creation of ensembles within a single quantum device.

\section{Average Pooling Layers}\label{sec:cnn}

\subsection{Quantum Average Pooling Implementation}

In a Convolution Neural Network (CNN) it is common to find a pooling layer. This layer acts to reduce the dimension of the data by considering a tile of fixed size that passes over the data and performs some operation, such as averaging, on all the datapoints in the tile. 

A classical average pooling layer consists of a pooling window of size $D \times D$ pixels that passes over an image consisting of $N \times N$ pixels, with a certain stride $K_{\text{STRIDE}}$ \cite{RawatCNN, SHARMA2018377, SaffarCNN}. We shall focus on the case where the pooling window moves across the image one pixel at a time, corresponding to a stride value $K_{\text{STRIDE}} = 1$. For each position of the pooling window, all pixels within the pooling window are averaged together, and this average is output as the value of a pixel in a new image. In classical techniques depending on the size of $K_{\text{STRIDE}}$ and $D$ the number of output pixels will be some number less than $N^2$ and hence the pooling corresponds to reducing the dimension of the image. In the quantum technique, there is no penalty for calculating all $N^2$ terms as the operations are done in parallel on the quantum state, hence we shall consider this case and leave the dimensionality reduction as a subroutine that can be implemented after the averaging. 

We consider an image of $N \times N$ pixels where each pixel is indexed by $i, j$ and the pixel colour value corresponds to $v_{i,j}$. If we focus on the average pooling window which is $D \times D$ in dimension, then classically we wish to redefine each pixel label by
\begin{equation}
  v_{i,j}' = \frac{1}{D^2} \sum_{\Delta x=0}^{D-1}\sum_{\Delta y=0}^{D-1} T_{\Delta x, \Delta y}(v_{i,j}),
\end{equation}
where $T_{\Delta x, \Delta y}$ is a pixel translation action which acts on the $i, j$ indices of $v_{i,j}$ to retrieve the values other pixels through
\begin{equation}
   T_{\Delta x, \Delta y}(v_{i,j}) = v_{i + \Delta x,j + \Delta y},
\end{equation}
so that they can be added to the average. The pixels accessed by $T_{\Delta x, \Delta y}$ define the pooling tile. For occasions in which the pooling window covers pixels outside the image, it is common to include some padding of the image \cite{Yu2023EfficientMP}, for example, any pixels outside the $N \times N$ region can be padded by zeros, giving a padded image of dimension $(N+2D) \times (N+2D)$.

There have been many investigations focused on finding quantum parallels to CNN models, whereby measurements are taken of certain qubits such that the dimensionality of the data is reduced. For example, in some setups the pooling layer corresponds to a variational ansatz followed by a measurement of a qubit \cite{Hur_2022}. In \cite{Cong_2019} an effective measurement was chosen corresponding to the symmetry and details of their specific problem concerning Quantum Phase Estimation. Although these popularised methods achieve the goal of pooling through reducing the dimensionality of the quantum state, they have not focused on giving a replication of an averaging pooling layer from classical CNN models for image data. In this section, we shall therefore demonstrate how LCU methods can be used to perform average pooling on quantum encoded image data. We shall utilise amplitude encoding with distinct coordinates $x$ and $y$ for the image, although any encoding for the data could be used, as long as the correct transformations are implemented with the LCU method.

Take an $N \times N$ image sample $(x_i, y_j, v_{i,j})$ where $v$ represents the pixel colour value and $x_i$, $y_j$ are the index coordinates of the pixel with $i, j \in [1, N]$. We shall utilise real amplitude encoding by defining a quantum state with two registers as 
\begin{equation}
  \ket{\psi} = \sum_{x_i,y_j} \frac{v_{i,j}}{\Omega} \ket{x_i}\ket{y_j} ,\label{eqn:ampl_encod}
\end{equation}
where $\Omega^2 = \sum v_{i,j}^2$ is a normalisation constant and $v_{i,j}\in\mathbb{R}$. Once we have an image encoded into a quantum state in this manner, we can consider the type of operation required by average pooling. In a simple case, we can consider only the $x$ register for $D=2$. We will need to apply an operation that takes $\ket{x} \rightarrow \frac{1}{\Omega'}(\ket{x} + \ket{x \ominus 1} )$ for some normalisation constant $\Omega'$, to perform an averaging over pixels amplitudes. This will result in the state $\ket{x}$ having its amplitude transformed to the sum of the amplitudes of the $\ket{x}$ and $\ket{x \oplus 1}$ states before being normalised (and therefore averaged). This operation is implemented by $A_{\text{POOL}} = \frac{1}{2}(I + \text{SUBTRACT}_1)$, where $\text{SUBTRACT}_1 \ket{x} = \ket{x \ominus 1}$ and $\text{ADD}_1 \ket{x} = \ket{x \oplus 1}$. These operators are implemented by decrement and increment gates. We can see that by taking
\begin{equation}
  A_{\text{POOL}}^\dagger A_{\text{POOL}} \negthinspace = \negthinspace \frac{1}{4}(2\mathbb{I}\negthinspace +\negthinspace \text{ADD}_1 \negthinspace +\negthinspace \text{SUBTRACT}_1) \negthinspace \negthinspace \neq \negthinspace \mathbb{I},
\end{equation}
that it is not a unitary operation. Hence, to implement average pooling on a quantum device, we utilise LCU methods.

 \begin{figure}[h]%
\centering
\includegraphics[width=1\linewidth]{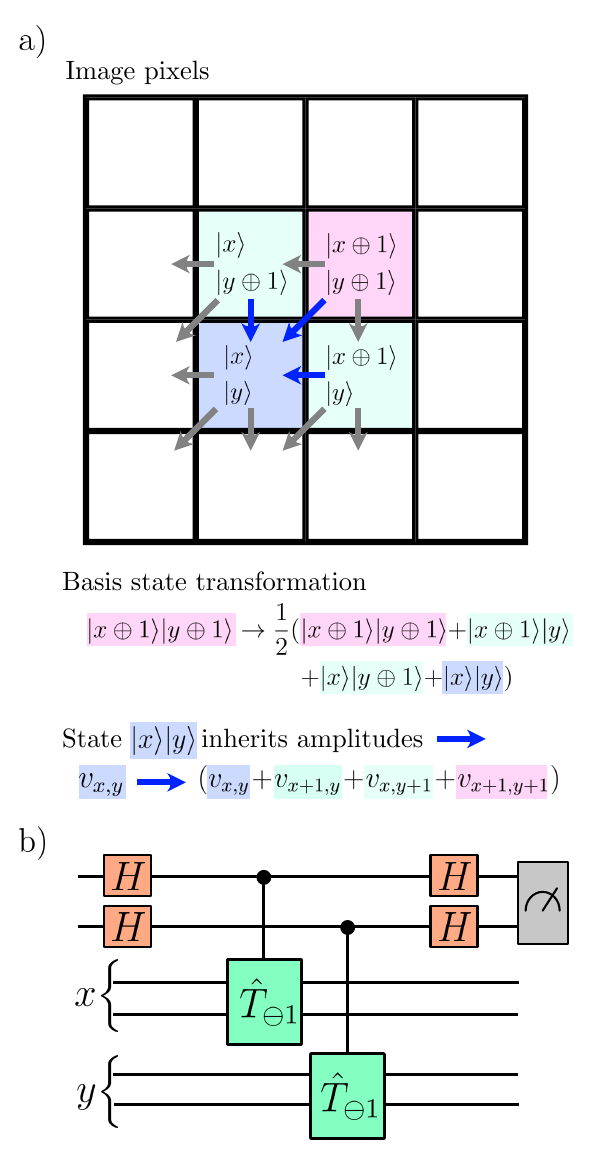}
\caption{a) Example of a $2 \times 2$ pooling window on a $4 \times 4$ pixel image indicating the desired transformation to perform average pooling. b) The circuit that implements the average pooling in this example through application of controlled $\hat{T}_{\ominus 1}$ operators to both registers. }\label{pooling_layer}
\end{figure}

\begin{theorem}[Quantum Average Pooling]
  It is possible to probabilistically implement an averaging pooling layer on an image that has been amplitude encoded into a quantum state as specified in Equation~\ref{eqn:ampl_encod}, such that the average pooling operation $A_{\text{POOL}}$ has the effect 
  \[ A_{\text{POOL}} \sum_{x_i,y_j} \frac{v_{i,j}}{\Omega} \ket{x_i}\ket{y_j} =
 \sum_{x_i,y_j} \frac{v'_{i,j}}{\Omega'} \ket{x_i}\ket{y_j}, \]
where \[ v_{i,j}' = \frac{1}{D^2} \sum_{\Delta x=0}^{D-1}\sum_{\Delta y=0}^{D-1} v_{i + \Delta x,j + \Delta y} , \]
by using the LCU framework from Section~\ref{sec:lcu-framework}. The terms $\Omega$ and $\Omega'$ are the normalisation constants for the initial and final states, respectively.
\end{theorem}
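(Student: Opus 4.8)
The plan is to recognise the target operator $A_{\text{POOL}}$ as an equally-weighted linear combination of unitary translation operators and then feed it directly into the LCU machinery of Section~\ref{sec:lcu-framework}. Since the target transformation $v_{i,j}' = \frac{1}{D^2}\sum_{\Delta x, \Delta y} v_{i+\Delta x,\, j+\Delta y}$ factorises over the two coordinate registers, I would first establish the one-dimensional case acting on a single register and then obtain the full result by tensoring two independent copies. In the one-dimensional case the operator to implement is $A = \frac{1}{D}\sum_{\Delta=0}^{D-1} \hat{T}_{\ominus \Delta}$, where $\hat{T}_{\ominus \Delta}\ket{x} = \ket{x \ominus \Delta}$ is the $\Delta$-fold decrement. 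Acting on the amplitude-encoded state, $\hat{T}_{\ominus \Delta}$ relabels the amplitude at index $x$ to $v_{x+\Delta}$, so the equal-weight sum reproduces precisely the windowed average once normalised.

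Concretely, I would take $D = 2^m$ and use $m = \log D$ ancilla qubits per register, with the efficiency gain residing entirely in the choice of $P_{\text{PREP}}$ and $S_{\text{SELECT}}$. For $P_{\text{PREP}}$ I apply a Hadamard to each ancilla, producing the equal superposition $\frac{1}{\sqrt{D}}\sum_{k=0}^{D-1}\ket{k}$, which is exactly the LCU preparation state with all weights $\alpha_k = 1$ and $\Omega = D$. For $S_{\text{SELECT}}$ I exploit the additive structure of the decrement, namely $\hat{T}_{\ominus k} = \prod_{b=0}^{m-1}\hat{T}_{\ominus 2^b}^{\,k_b}$ for the binary expansion $k = \sum_b k_b 2^b$. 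Letting the $b$-th ancilla qubit control a single $\hat{T}_{\ominus 2^b}$ operation therefore makes the computational-basis ancilla state $\ket{k}$ trigger exactly $\hat{T}_{\ominus k}$, so the full controlled sum is realised with only $m = \log D$ single-qubit-controlled unitaries rather than the naive $D$ controls. This is the source of the claimed exponential improvement in $D$.

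With these operators fixed, I would run the LCU template verbatim: applying $S_{\text{SELECT}} P_{\text{PREP}}$ to $\ket{0}^{\otimes m}\ket{\psi}$ entangles each ancilla branch $\ket{k}$ with $\hat{T}_{\ominus k}\ket{\psi}$, after which $P_{\text{PREP}}^\dagger$ followed by post-selection of the ancilla register in $\ket{0}^{\otimes m}$ collapses the target onto $\frac{1}{\Omega'}\sum_{k=0}^{D-1}\hat{T}_{\ominus k}\ket{\psi}$, the sought one-dimensional average. Tensoring the identical construction on the $y$-register (the $x$- and $y$-translations commute and act on disjoint registers) yields the two-dimensional operator $A_{\text{POOL}}$ with $2\log D = \mathcal{O}(\log D)$ total ancillas, and the success probability $\pi_S = \big\lvert \frac{1}{D^2}\sum_{\Delta x, \Delta y}\hat{T}_{\ominus \Delta x}\hat{T}_{\ominus \Delta y}\ket{\psi}\big\rvert^2$ drops out of the general LCU expression, equalling $1$ precisely when the window amplitudes coincide.

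The main obstacle I anticipate is not the algebra of the LCU sum but the careful handling of boundary effects: because $\hat{T}_{\ominus \Delta}$ is a cyclic decrement, a naive sum would wrap pixels around the image edges. I would resolve this using the zero-padding convention introduced above, embedding the $N \times N$ image into a larger $(N+2D)\times(N+2D)$ register so that the cyclic shifts only ever draw in padded-zero amplitudes at the borders, matching the classical padded average-pooling definition. A secondary point to verify is that the Hadamard preparation gives the exact uniform weight only when $D$ is a power of two; for general $D$ one substitutes a $P_{\text{PREP}}$ that prepares a uniform superposition over $\{0,\dots,D-1\}$, which retains the $\mathcal{O}(\log D)$ scaling while leaving the binary-decomposed $S_{\text{SELECT}}$ unchanged.
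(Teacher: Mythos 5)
Your proposal is correct and follows essentially the same route as the paper's proof: the average pooling operator is realised as an equal-weight LCU of controlled cyclic decrement (translation) unitaries acting on the separate $x$- and $y$-registers, with a uniform ancilla preparation, the standard $P_{\text{PREP}}^\dagger$ step, and post-selection on the all-zero ancilla state, yielding the same final state $\frac{1}{\sqrt{\pi_S}D^2}\sum_{\Delta x,\Delta y}\hat{T}_{\Delta x,\Delta y}\ket{\psi}$ and success probability. The only differences are organisational rather than mathematical: you fold the binary-decomposed $\mathcal{O}(\log D)$ select circuit, the zero-padding boundary convention, and the fix for non-power-of-two $D$ directly into the proof, whereas the paper proves the theorem with a generic select operator and treats those three points separately in its circuit-scaling discussion and in Appendices~\ref{apn:dimensionality-reduction-pooling} and~\ref{apnd:degeneracy-prevention}.
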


\begin{proof}

For simplicity, we consider the ancilla qubits to be composed of two registers. We initialise these registers to the following
\begin{equation}
  P_{\text{PREP}} \ket{0}^{\otimes k} \ket{0}^{\otimes k} = \frac{1}{D} \Big( \sum_{\Delta x = 0}^{D-1} \ket{\Delta x} \negthinspace \Big) \Big( \negthinspace \sum_{\Delta y = 0}^{D-1} \ket{\Delta y} \negthinspace \Big). 
\end{equation}
In this setting $\ket{\Delta x}$ and $\ket{\Delta y}$ correspond to the computational basis states of a Hilbert space of $k$ qubits $\mathcal{H} = (\mathbb{C}^2)^{\otimes k}$, which are labelled by $\Delta x$ and $\Delta y$. In this basis $\ket{0} \equiv \ket{0}^{\otimes k}$, $\ket{1} \equiv \ket{0}^{\otimes k-1}\ket{1}$ and $\ket{2} \equiv \ket{0}^{\otimes k-2}\ket{1}\ket{0}$ continuing for all values until $\ket{2^k -1} \equiv \ket{1}^{\otimes k}$. The preparation operator therefore takes the form
\begin{align}
  P_{\text{PREP}} = & \frac{1}{D}\negthinspace \Big( \sum_{\Delta x = 0}^{D-1} \ket{\Delta x} \negthinspace \Big) \Big( \sum_{\Delta y = 0}^{D-1} \ket{\Delta y} \negthinspace \Big) \bra{0}^{\otimes k} \bra{0}^{\otimes k} \nonumber \\
  &+ \sum_{i=0}^{2^k -1}\sum_{j=1}^{2^k -1} (...) \bra{i}\bra{j} + \sum_{i=1}^{2^k -1} (...) \bra{i}\bra{0}, 
\end{align}
where $(...)$ collects terms with $\bra{i}\bra{j}$ such that $i+j \geq 1$ and will therefore not be used in the algorithm and can be ignored. This can be clearly implemented by a unitary operation. It can also be shown that $T_{\Delta x, \Delta y}$ can be realised with a unitary operator. Binary addition is possible to implement on a quantum circuit \cite{Vedral_1996} as shown in Figure~\ref{binary_addition} that shows $\text{ADD}_1\ket{x} = \ket{x \oplus 1}$. In general, we shall define a $k$ adding operator as $\text{ADD}_k \ket{x} = \ket{x \oplus k}$, which may be formed by applying $ADD_1$ in series $k$ times or through a more optimised gate specific for that value of $k$. The inverse of this operator, found by taking the Hermitian conjugate, will subsequently be the subtraction operator $\text{SUBTRACT}_k \ket{x} = \ket{x \ominus k} $.

 \begin{figure}[h]%
\centering
\includegraphics[width=0.8\linewidth]{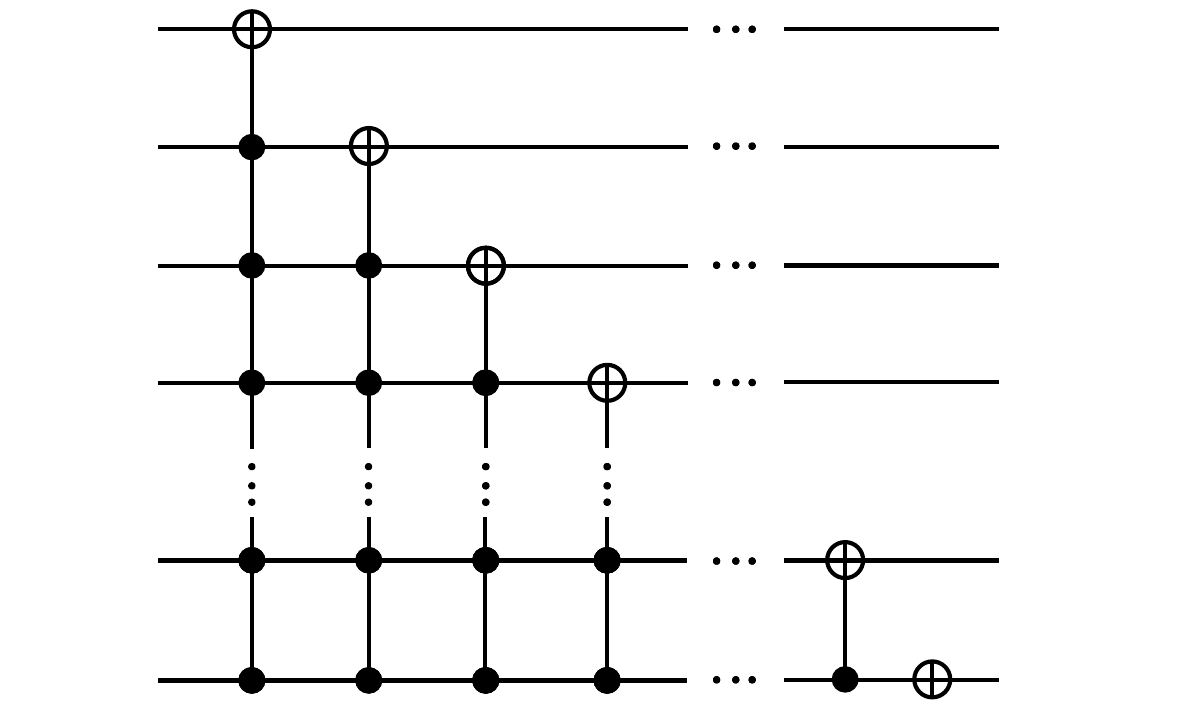}
\caption{Example circuit architecture to perform binary addition increasing all binary values by one. This takes $\text{ADD}_1 \ket{x} = \ket{x \oplus 1}$ \cite{Vedral_1996}. }\label{binary_addition}
\end{figure}

In order for $T_{\Delta x, \Delta y}(v_{i,j})$ to recover the correct amplitude, we can utilise the $\text{ADD}_k$ and $\text{SUBTRACT}_k$ operators. We can therefore define $\hat{T}_{\Delta x, \Delta y}$ through the action of these operators as
\begin{equation}
  \hat{T}_{\Delta x, \Delta y}\ket{x_i}\ket{y_j} = \ket{x_i \ominus \Delta x}\ket{y_j \ominus \Delta y} ; \Delta x, \Delta y > 0,
\end{equation}
\begin{equation}
  \hat{T}_{\Delta x, \Delta y} \ket{x_i}\ket{y_j} = \ket{x_i \oplus \rvert \Delta x \rvert }\ket{y_j \oplus \rvert \Delta y\rvert } ; \Delta x, \Delta y \leq 0.
\end{equation}
Now consider the action of $\hat{T}_{\Delta x, \Delta y}$ on the amplitude encoded image as
\begin{align}
  & \hat{T}_{\Delta x, \Delta y}\ket{\psi} = \hat{T}_{\Delta x, \Delta y} \Big( \sum_{x_i,y_j} \frac{v_{i,j}}{\Omega} \ket{x_i}\ket{y_j} \Big) \nonumber \\
  & = \sum_{x_i,y_j} \frac{v_{i,j}}{\Omega} \ket{x_i \ominus \Delta x}\ket{y_j \ominus \Delta y} \nonumber \\
  & = \sum_{x_i,y_j} \frac{v_{i,j}}{\Omega} \ket{x_{i-\Delta x}}\ket{y_{j-\Delta y}}.
\end{align}
 Relabelling the indices in the sum as $(i\rightarrow i+\Delta x)$ and $(j \rightarrow j+\Delta y)$ this can be written as
\begin{equation}
    \hat{T}_{\Delta x, \Delta y}\ket{\psi} = 
   \sum_{x_i,y_j} \frac{v_{i + \Delta x,j + \Delta y}}{\Omega} \ket{x_i}\ket{y_j}.
\end{equation}
We can now define the selection operator as
\begin{align}
  & S_{\text{SELECT}} \ket{\Delta x} \ket{\Delta y}\ket{\psi} = \ket{\Delta x}\ket{\Delta y}\hat{T}_{\Delta x, \Delta y}\ket{\psi} \nonumber \\
  & = \ket{\Delta x}\ket{\Delta y} \sum_{x_i,y_j} \frac{v_{i + \Delta x,j + \Delta y}}{\Omega} \ket{x_i}\ket{y_j},
\end{align}
which corresponds to implementing controlled $\hat{T}_{\Delta x, \Delta y}$ operators, which are respectively applied when the ancilla is in the state $\ket{\Delta x}\ket{\Delta y}$. The selection operator is therefore unitary and can be applied on a quantum circuit.

Following the LCU procedure by first preparing the ancilla qubits gives
\begin{align}
   & P_{\text{PREP}} \ket{0}^{\otimes k} \ket{0}^{\otimes k}\ket{\psi} \nonumber \\
   & = \frac{1}{D} \Big( \sum_{\Delta x = 0}^{D-1} \ket{\Delta x} \Big) \Big( \sum_{\Delta y = 0}^{D-1} \ket{\Delta y} \ \Big) \ket{\psi},
\end{align}
applying the selection operator results in
\begin{align}
     & S_{\text{SELECT}} P_{\text{PREP}} \ket{0}^{\otimes k} \ket{0}^{\otimes k}\ket{\psi} \nonumber \\
     &= \frac{1}{D} \sum_{\Delta x = 0}^{D-1} \sum_{\Delta y = 0}^{D-1} \ket{\Delta x} \ket{\Delta y} \hat{T}_{\Delta x, \Delta y} \ket{\psi}.
\end{align}
Applying the inverse preparation operator defined as
\begin{align}
  P_{\text{PREP}}^\dagger = & \frac{1}{D} \ket{0}^{\otimes k} \ket{0}^{\otimes k} \Big( \sum_{\Delta x = 0}^{D-1} \bra{\Delta x} \ \Big) \Big( \sum_{\Delta y = 0}^{D-1} \bra{\Delta y} \Big) \nonumber \\
  &+ \sum_{i=0}^{2^k -1}\sum_{j=1}^{2^k -1} \ket{i}\ket{j} (...) + \sum_{i=1}^{2^k -1} \ket{i}\ket{0}(...) , 
\end{align}
we can therefore see the state can be written as
\begin{align}
     P&_{\text{PREP}}^\dagger S_{\text{SELECT}} P_{\text{PREP}} \ket{0}^{\otimes k} \ket{0}^{\otimes k}\ket{\psi} \nonumber \\
     = & \frac{1}{D^2} \ket{0}^{\otimes k} \ket{0}^{\otimes k} \sum_{\Delta x = 0}^{D-1} \sum_{\Delta y = 0}^{D-1} \hat{T}_{\Delta x, \Delta y} \ket{\psi} \nonumber \\
     & + \sum_{i=0}^{2^k -1}\sum_{j=1}^{2^k -1} \ket{i}\ket{j} (...) + \sum_{i=1}^{2^k -1} \ket{i}\ket{0}(...).
\end{align}
We now wish to measure the ancilla qubits, discarding any states when $\ket{0}^{\otimes k} \ket{0}^{\otimes k}$ is not measured, hence we can ignore the $(...)$ terms. The probability $\pi_S$ of measuring the $\ket{0}^{\otimes k} \ket{0}^{\otimes k}$ state will therefore be
\begin{equation}
  \pi_S = \left\lvert \frac{1}{D^2} \sum_{\Delta x = 0}^{D-1} \sum_{\Delta y = 0}^{D-1} \hat{T}_{\Delta x, \Delta y} \ket{\psi} \right\rvert^2.
\end{equation}
After selecting only the cases where the ancillas are measured in the $\ket{0}^{\otimes k} \ket{0}^{\otimes k} $ state, the final state will be projected to
\begin{align}
     & \bra{0}^{\otimes k} \bra{0}^{\otimes k} P_{\text{PREP}}^\dagger S_{\text{SELECT}} P_{\text{PREP}} \ket{0}^{\otimes k} \ket{0}^{\otimes k}\ket{\psi} \nonumber \\
     & = \frac{1}{\sqrt{\pi_S}D^2}  \sum_{\Delta x = 0}^{D-1} \sum_{\Delta y = 0}^{D-1} \hat{T}_{\Delta x, \Delta y} \ket{\psi} .
\end{align}
where $\sqrt{\pi_S}$ term is required such that the state is normalised. Expanding out the operation of $\hat{T}_{\Delta x, \Delta y}$ it is possible to see that
\begin{align}
  & \frac{1}{\sqrt{\pi_S}D^2} \sum_{\Delta x=0}^{D-1}\sum_{\Delta y=0}^{D-1} \hat{T}_{\Delta x, \Delta y}\ket{\psi} 
  \nonumber \\
  & = \frac{1}{\sqrt{\pi_S}D^2} \sum_{x_i,y_j} \sum_{\Delta x=0}^{D-1}\sum_{\Delta y=0}^{D-1}\frac{v_{i + \Delta x,j + \Delta y}}{\Omega} \ket{x_i}\ket{y_j} 
   \nonumber \\
   & = \sum_{x_i,y_j} \frac{v'_{i,j}}{\Omega'} \ket{x_i}\ket{y_j}.
\end{align}
The term $\Omega' = \sqrt{\pi_S}\Omega$ is the overall normalisation factor of the final state. Hence, the average pooling operation 
\begin{equation}
v_{i,j}' = \frac{1}{D^2} \sum_{\Delta x=0}^{D-1}\sum_{\Delta y=0}^{D-1} v_{i + \Delta x,j + \Delta y},
\end{equation}
has been implemented on a quantum circuit.
\end{proof}

Through this framework the average pooling operation can be implemented as is visualised for a $2 \times 2$ pooling window in Figure~\ref{pooling_layer}. 

In terms of the image boundaries, it would be possible to apply null padding to the image by introducing extra qubits in the $\ket{0}$ states. Without any padding, the quantum pooling window will treat the image as periodic and could include pixels from opposite sides of the image when located close to the image edge. It is worth mentioning that classical CNNs perform this averaging as a means of dimensionality reduction, as the total number of averages taken is usually less than the total number of pixels in the image. In the quantum case, however, it takes no additional time to average every single pixel in the image, and therefore we presented this as the most general case. Subsequently, to truly perform the quantum analogue of pooling, certain values would be discarded to reduce the dimensionality of the problem. The exact method will depend on exactly what kind of dimensionality reduction is desired, although we discuss how this could be performed in Appendix~\ref{apn:dimensionality-reduction-pooling}.

\subsection{Algorithm Scaling}

The advantage of average pooling layers is well documented in classical neural networks \cite{Galanis2022ConvolutionalNN, gholamalinezhad2020pooling, Zafar2022}. The method suggested in our work would allow this feature to be implemented on quantum encoded images in quantum machine learning models. The utility of this in the context of a quantum advantage will therefore be dependent on whether other quantum operations within the model can achieve some advantage over classical models. For example, recent empirical studies for amplitude encoded image data suggest that quantum models may in some circumstances display an improved robustness against adversarial attacks compared to classical methods \cite{west2023benchmarking,west2023drastic}. Indeed, theoretical guarantees on quantum robustness have recently been reported \cite{dowling2024adversarial}. While this is not the focus of our work, the quantum average pooling layers we introduce may be of use in further studies regarding quantum encoded image data. We will therefore focus instead on examining the circuit complexity and probabilistic scaling of the method.

\subsubsection{Circuit Scaling}

In order to implement an average pooling layer with a pooling window size of $D \times D$ one needs to implement a linear combination $D^2$ unitaries via the LCU method, whereby these unitaries correspond to selecting all pixels within the average pooling window. This means that in a given $x / y$ register one needs to create a linear combination of $D$ unitaries. A general technique for doing this using only $\log(D)$ ancilla for any LCU circuit would be to use $D$ multi-controlled operators, whereby each ancilla basis state activates exactly one of the multi-controlled operators and no others i.e. $S_\text{SELECT} \ket{i}\ket{\psi} = \ket{i}\hat{T}_i \ket{\psi}$. Previous related work on implementing general filter masks for quantum convolutional layers used this approach where they state that the number of multi-controlled unitaries that must be implemented is equal to the size of the filter mask $D^2$ \cite{wei2021quantumconvolutionalneuralnetwork, chen2022novelarchitectureparameterizedquantum}. However, we show that by focusing on average pooling and considering that subtraction operators form a closed set, that a more efficient circuit implementation is possible which uses $\mathcal{O}(\log(D))$ single-qubit controlled operators.

Without loss of generality consider only the $x$ register. We need to ensure that this has an equal superposition of $\hat{T}_{\Delta x}$ terms such that $\Delta x$ runs from $0$ to $D - 1$. This can be achieved using $\mathcal{O}(\log(D))$ qubits and single qubit controlled applications of $\hat{T}_{k}$ operators. In this case $\hat{T}_{k}$ is implemented by the $k$-th ancilla qubit and corresponds to the $\hat{T}_{k} = \text{SUBTRACT}_{2^{k-1}}$ operation. This operation subtracts $2^{k-1}$ from the binary register and can be formed by the $\text{SUBTRACT}_1$ operation applied to the $n - k + 1$ most significant qubits and the identity operator applied to the $k - 1$ least significant qubits. If $k \in [1, L]$, where $L$ is the total number of ancilla qubits and controlled operations then we have a set of operators $T = [\hat{T}_{1}, \hat{T}_{2}, \hat{T}_{4},...,\hat{T}_{2^{L-1}}]$ where the $k$-th element is applied if the $k$-th ancilla is in the 1 state. As we have an equal superposition of all $2^L$ basis states in the ancillas, this means that we will find an equal superposition of $2^L$ terms which consist of all possible combinations of the operators in $T$. The ancilla basis states can be thought of as selecting subsets of $T$, with the resultant operation found by multiplying all operators together in that subset such that the indices decrease from left to right. As we have chosen the scaling of these elements to be $\hat{T}_{k} = \text{SUBTRACT}_{2^{k-1}}$, we can see that considering all possible combinations of these terms will result in an equal sum of $\hat{T}_{\Delta x}$ operators where $\Delta x$ runs from $0$ to $2^L - 1$. 

We can then set $D = 2^L$ to see that the number of ancillas and controlled operators must scale as $L = \mathcal{O}(\log(D))$. Therefore, the procedure requires $\mathcal{O}(\log(D))$ unitaries that correspond to subtraction operators, that act on the separate $x$ and $y$ target registers and are controlled by $\mathcal{O}(\log(D))$ ancilla registers. An example circuit which involves 4 ancilla qubits and 4 controlled unitaries is shown in Figure~\ref{fig:pooling_layer_expanded} in which 16 total combinations of operators are implemented. Note that if $D$ is less than $2^L$ then the final layer could be adjusted to be $\hat{T}_{D-2^{L-1}}$ but we may require multi-control qubits in order to prevent degeneracy and maintain and equally weighted combination of operators, see Appendix~\ref{apnd:degeneracy-prevention}.

\begin{figure}[h]%
\centering
\includegraphics[width=1\linewidth]{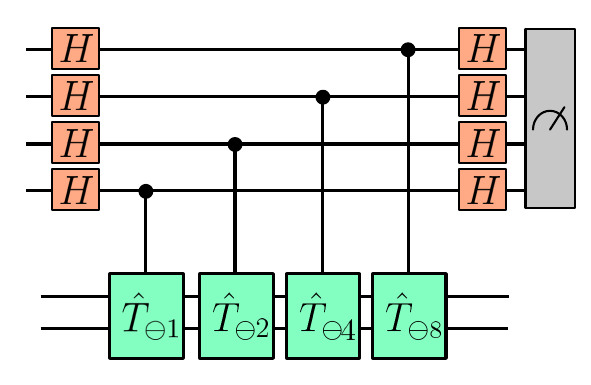}
\caption{Example efficient LCU circuit which implements a linear combination of 16 operators in the range $[\mathbb{I}, \hat{T}_{1}, \hat{T}_{2},...,\hat{T}_{15}]$. Each operator is controlled by a single qubit and corresponds to a subtraction operator, where the amount subtracted is doubled for each successive qubit. This approach is more efficient than the most general possible implementation and is possible due to the fact that subtraction operators applied in sequence form a closed set.} \label{fig:pooling_layer_expanded}
\end{figure}

To perform the averaging classically for a stride size $K$ it would require finding the average of $D^2$ values for $N^2$ pixels, giving a scaling of $\mathcal{O}(D^2 N^2)$ in general. Therefore classical averaging requires $\mathcal{O}(D^2 N^2)$ arithmetic operations, while the quantum average pooling we suggest requires $\mathcal{O}( \log (D))$ controlled subtraction operations. The exact scaling of the controlled subtraction operations depends on if ancilla qubits can be utilised, however we show this uses at worst $\mathcal{O}( (\log (N))^2)$ basic operations (see Appendix~\ref{apn:decrementscaling}) giving a total of $\mathcal{O}( \log (D)(\log (N))^2)$ basic operations in the overall LCU method while using $\mathcal{O}( \log (N))$ target qubits for the image encodings and $\mathcal{O}( \log (D))$ ancilla qubits.

Previous work built an LCU based quantum convolutional layer using a $3\times 3$ filter of the form 
\begin{equation}\label{eqn:4by4filter}
    \begin{pmatrix}
        \omega_{00} \quad \omega_{01} \quad \omega_{02} \\
        \omega_{10} \quad \omega_{11} \quad \omega_{12} \\
        \omega_{20} \quad \omega_{21} \quad \omega_{22} 
    \end{pmatrix}
\end{equation}
such that the filter passes over the image and calculates $\sum \omega_{\Delta x \Delta y} v_{i+\Delta x, j + \Delta y}$ for the pixel value in the window \cite{wei2021quantumconvolutionalneuralnetwork}. They show that average pooling can be obtained as a special case where $\omega_{ij} = 1 ;  \forall i,j$. We note that this work did not explicitly show a proof for a general $D \times D$ window, instead considering only $D=3$. They used multi-controlled operators in which each ancilla qubit state only implements exactly one unitary, meaning that their technique requires $D^2$ multi-controlled operators. They report their unitary operations as consisting of $\mathcal{O}( (\log (N))^6)$ basic operators, which would correspond to an overall scaling of $\mathcal{O}( D^2 (\log (N))^6)$. Hence, the circuit implementation that we present, which scales as $\mathcal{O}( \log (D)(\log (N))^2)$, corresponds to a polynomial advantage in $N$ and an exponential advantage in $D$ over this previous technique when considering average pooling specifically.

Furthermore, although we started with the more specific case of average pooling, our entire algorithm can be generalised to recreate the quantum convolutional filters as presented in \cite{wei2021quantumconvolutionalneuralnetwork}. This is achieved by introducing amplitudes into the ancilla qubits instead of using an equal superposition. We see this by considering a new preparation operator in which all ancillas (for both $x$ and $y$ registers) are entangled such that
\begin{align}
   & P_{\text{PREP}} \ket{0}^{\otimes k} \ket{0}^{\otimes k}\ket{\psi} \nonumber \\
   & = \frac{1}{D} \Big( \sum_{\Delta x = 0}^{D-1} \sum_{\Delta y = 0}^{D-1} \sqrt{\omega_{\Delta x, \Delta y}} \ket{\Delta x}\ket{\Delta y} \ \Big) \ket{\psi},
\end{align}
where for simplicity we say $\sqrt{\omega_{\Delta x, \Delta y}} \in \mathbb{R}$ is a real amplitude of the ancilla basis state $\ket{\Delta x}\ket{\Delta y}$. This will mean that the terms in the linear combination of $\hat{T}_{\Delta x , \Delta y}$ operators will have an associated weighting equal to $\omega_{\Delta x, \Delta y }$. Therefore the transformation on the pixel value amplitudes in the image will be 
\begin{equation}
v_{i,j}' = \frac{1}{D^2} \sum_{\Delta x=0}^{D-1}\sum_{\Delta y=0}^{D-1}  \omega_{\Delta x, \Delta y } v_{i + \Delta x,j + \Delta y}.
\end{equation}
We can therefore identify that this performs a general quantum convolutional filter as originally described in \cite{wei2021quantumconvolutionalneuralnetwork}, while utilising exponentially fewer unitary operators in terms of $D$. A comparison of this reduction in circuit complexity is shown in Figure~\ref{fig:convolutionalcomparison} when applying a $4 \times 4$ convolutional filter defined as
\begin{equation}
    \begin{pmatrix}
        \omega_{00}\quad \omega_{01}\quad \omega_{02}\quad \omega_{03} \\
        \omega_{10}\quad \omega_{11}\quad \omega_{12}\quad \omega_{13}\\
        \omega_{20}\quad \omega_{21}\quad \omega_{22}\quad \omega_{23}\\
        \omega_{30}\quad \omega_{31}\quad \omega_{32}\quad \omega_{33} 
    \end{pmatrix}
\end{equation} 
This general convolution case reduces to the previously discussed averaging case when the ancilla qubits are in an equal superposition. We highlight that in this section we do not discuss the actual pooling step itself, and hence any improvement is confined to the averaging / convolutional subroutine of the circuit. Further details on the practical implementation of the pooling step and quantum convolutional neural network framework can be found in \cite{wei2021quantumconvolutionalneuralnetwork}.


\begin{figure}[h!]%
\centering
\includegraphics[width=0.94\linewidth]{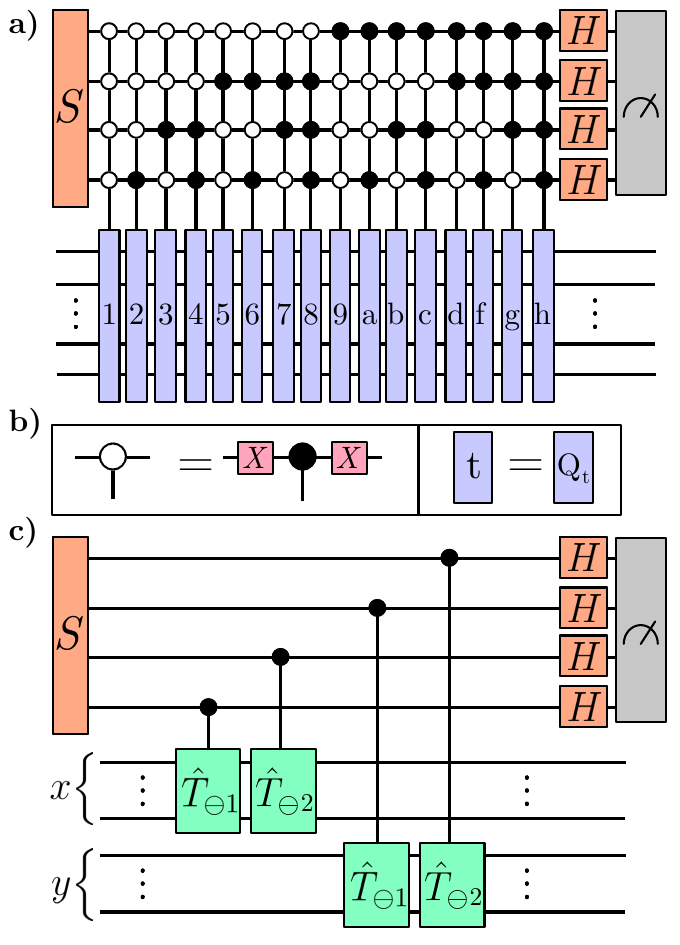}
\caption{ a) The circuit used to implement quantum convolutional filters in previous work \cite{wei2021quantumconvolutionalneuralnetwork}. For an equal comparison we show the generalisation of their method to allow the implementation of a $4 \times 4$ size convolutional window, using $16$ multi-controlled unitaries. b) White circles correspond to controls that activate if the control qubit is in the $\ket{0}$ state. The $t$ labelled controlled-gates apply the unitary operator $Q_t$ which when combined using the LCU technique reconstruct the desired convolutional filter, as outlined in \cite{wei2021quantumconvolutionalneuralnetwork}. A complete definition of the $Q_t$ unitaries within their framework is extensive and omitted here. For both circuits $S$ is the unitary used to initialise the ancilla qubit amplitudes corresponding to the desired filter, which is then achieved when the ancillas are all measured in the $\ket{0000}$ state at the end of the circuit. c) The circuit implementation of the same $4 \times 4$ convolutional filter using our framework as described in the main text. Only a single qubit controls each $\hat{T}_{\ominus t}$ which acts to subtract $t$ in binary on the $x$ or $y$ register. This arithmetic interpretation simplifies the circuit considerably as we can see each ancilla basis state applies one of $16$ different transformations $\hat{T}_{\Delta x, \Delta y}$ on the pixel registers, which provides all the transformations required to apply a $4 \times 4$ size convolutional window, using only 4 controlled unitary gates in total, rather than having to implement $16$ terms individually as multi-controlled unitary gates. For example $\ket{1111}$ will be responsible for $\Delta x = 3, \Delta y = 3$, hence the ancilla amplitude $\alpha_{1111}$ should control the relative weight of the top right pixel of the convolution filter ($\omega_{03}$ in Equation~\ref{eqn:4by4filter}), while $\alpha_{0000}$ corresponds to the identity and would control the bottom left pixel weighting ($\omega_{30}$ in Equation~\ref{eqn:4by4filter}).} \label{fig:convolutionalcomparison}
\end{figure}

We also note that our implementation is easily generalised to data types with dimension greater than two by adding additional registers for extra dimensions. For $Q$-dimensional data we require $Q \log(N)$ qubits to encode the image into $Q$ different registers. In this case, the window has a total $Q$-dimensional volume of $D^Q$, and hence using the implementation from previous work \cite{wei2021quantumconvolutionalneuralnetwork} would require $D^Q$ unitaries. In contrast, our technique would require $Q \log(D)$ operators in total with $Q \log(D)$ ancilla total qubits when considering all registers and would therefore use exponentially fewer unitaries in both data dimension $Q$ and window size $D$.

\subsubsection{Probabilistic Scaling}

When considering the fact that this is a probabilistic algorithm with a chance of failure, the situation becomes slightly more complicated, as the probability will depend on the image itself. As shown previously, the probability of success is equal to
\[ \pi_S = \left\lvert \frac{1}{D^2} \sum_{\Delta x = 0}^{D-1} \sum_{\Delta y = 0}^{D-1} \hat{T}_{\Delta x, \Delta y} \ket{\psi} \right\rvert^2. \]
If we had an image in which every pixel has the exact same colour value, we can effectively set $\hat{T}_{\Delta x, \Delta y} = \mathbb{I}$ and therefore $\pi_S = 1$. We expect real-world images to have pixels that are close in colour value to other local pixels nearby on average with the exceptions occurring at the edges of subjects within the image. We therefore expect the probability to remain relatively stable overall, although one could construct adversarial example images that result in low probabilities. To gain a practical understanding of this probability scaling we considered the MNIST fashion dataset \cite{deng2012mnist}, we empirically show the scaling of the probability of success with respect to $D$ and $N$ in Figure~\ref{fig:success_prob_L} and Figure~\ref{fig:success_prob_N}, respectively. These results indicate that the probability of success decreases but levels off after a certain point as $D$ increases, and that there is no discernible trend when increasing $N$. These results align with our intuition for real-world images. The general structure and content of real-world images (and their local pixel similarities) remain consistent as $N$ increases; therefore, there should be no discernible decrease in $\pi_S$ from increasing $N$. On the other hand, increasing $D$ means considering a larger neighbourhood of pixels. Initially, as $D$ increases, the probability of encountering edges within this neighbourhood also increases, causing $\pi_S$ to decrease. However, beyond a certain point, further increasing $D$ adds less new information because most of the image regions are already accounted for and edges covered, so $\pi_S$ levels off. The exact probability will be very image dependent; there may also exist techniques to prepare images such that the probabilities are improved, which we leave as an open question for further research. In cases where the probability of success remains high, it corresponds to pixels being similar to each other locally, this in turn means that the average pooling action on those qubits locally will be well approximated by the identity operation. It is important to note therefore that these conditions could be more prone to classical simulation techniques and further work should be carried out on this possibility before any concrete advantages are claimed.

 \begin{figure}[h]%
\centering
\includegraphics[width=1\linewidth]{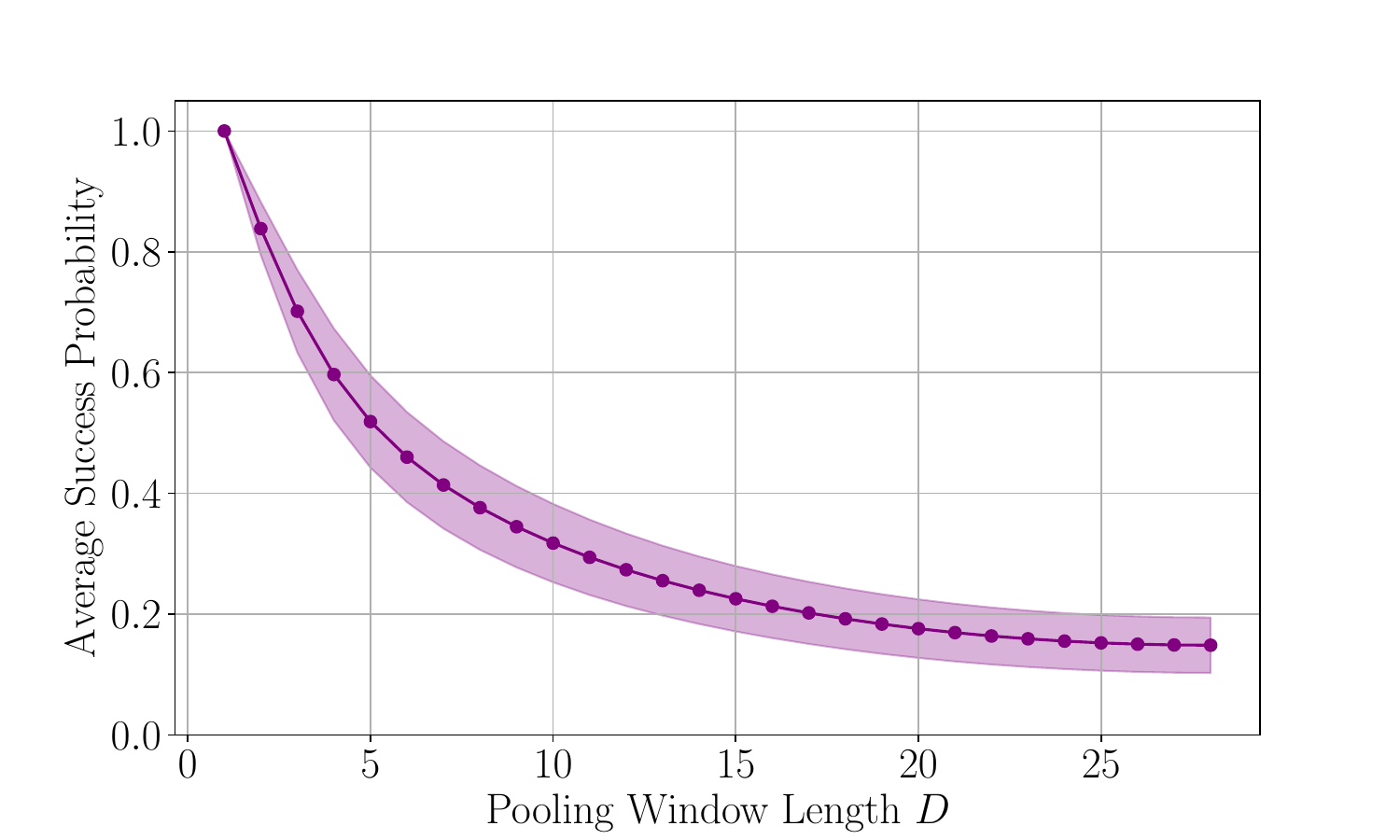}
\caption{Average success probability of the LCU procedure for implementing an average pooling layer as the pooling window dimension $D$ is increased, while the image size is set to $N=28$. The average is taken over 100 image samples from the MNIST database \cite{deng2012mnist}, with the shaded area indicating the standard deviation of the samples.} \label{fig:success_prob_L}
\end{figure}

 \begin{figure}[h]%
\centering
\includegraphics[width=1\linewidth]{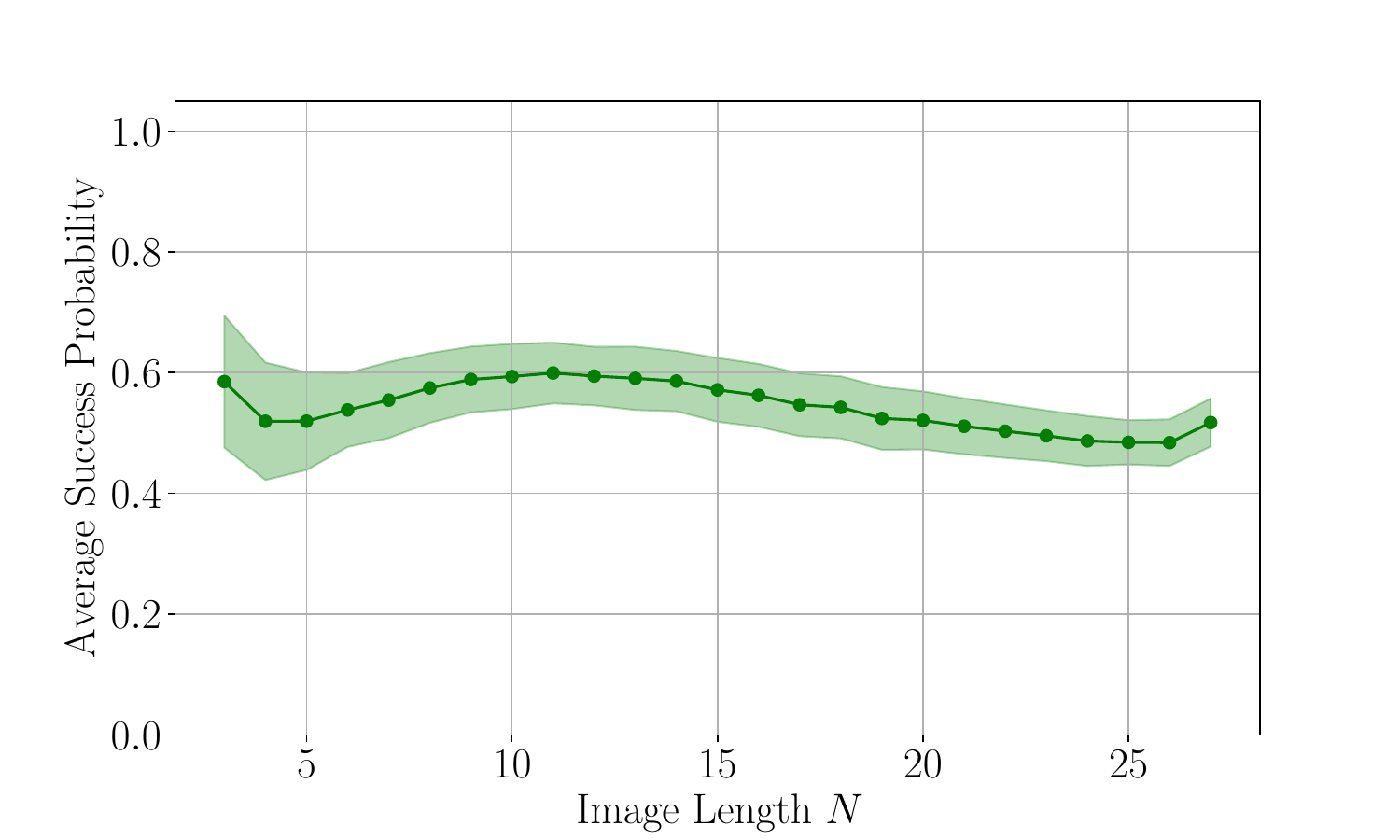}
\caption{Average success probability of the LCU procedure for implementing an average pooling layer as the image size $N$ is increased, while the pooling window size is kept constant at $D=3$. The average is taken over 100 image samples from the MNIST database \cite{deng2012mnist}, with the shaded area indicating the standard deviation of the samples.}\label{fig:success_prob_N}
\end{figure}

\section{Irreducible Subspace Projections}\label{sec:irrep_projections}

In order to discuss the irreducible subspace projections circuit, we first give a brief overview of representation theory definitions.

\subsection{Representation Theory Preliminaries}

Following standard texts on group and representation theory \cite{Fulton1991RepresentationTA, ragone2023representation} we provide a brief summary and introduction to the underlying mathematical framework utilised in this section.

The definition of a representation of a group $G$ can be given as 
\begin{definition}[Representation]
A representation of a group $G$ refers to a pair $(U, V)$, where $V$ is a vector space and $U$ is a group homomorphism $U: G \rightarrow GL(V)$, where $GL(V)$ is the general linear group of invertible matrices that acts on the vector space $V$ of the representation. The homomorphism maps group elements $g\in G$ to matrices $U_g \in GL(V)$.
\end{definition}
Often either the homomorphism $U$, the vector space $V$ or the image subgroup $U_G \subseteq GL(V)$ can be referred to as the ``representation" depending on context. 
\begin{definition}[Subrepresentation]
For a given representation $(U, V)$ a subrepresentation is a vector subspace $W \subseteq V$ for which any vector $w \in W$ remains within $W$ when acted on by any representation of the group, that is, $U_g w \in W, \forall g \in G$.
\end{definition}
\begin{definition}[Irreducible Representation]
   If a representation does not contain any non-trivial subrepresentations (the trivial spaces being the empty subspace $\{ 0 \}$ and the entire space $V$) then it is said to be an irreducible representation. 
\end{definition}
If a representation does contains a non-trivial subrepresentation, then it is called a reducible representation. Representations can often be decomposed into a direct sum of their irreducible representations, in which case they are called completely reducible. Maschke's Theorem \cite{maschke_ueber_1898} states that any finite-dimensional representation of a finite group $G$ over a field $F$ will be completely reducible (so long as the field characteristic does not divide the order of the group). It is also the case that every finite-dimensional representation of compact Lie groups is completely reducible \cite{hall_lie_2015}. If a representation $(U, V)$ is completely reducible, then there exists a basis in which we can write $U_g$ as 
\begin{equation}
  U_g \rightarrow \bigoplus_{r = 1}^R\bigoplus_{j=1}^{m_r}u^{(r)}_g = \bigoplus_{r=1}^R u^{(r)}_g \otimes \bm{I}_{m_r},
\end{equation}
where $(u^{(r)}, V_r)$ are the irreducible representations indexed by $r \in [1, R]$, where $R$ is the total number of irreducible representations. In this work we will often refer to $r$ as the irreducible representation for simplicity, although strictly it is the label that indexes the irreducible representations $(u^{(r)}, V_r)$. The quantity $m_r$ is called the multiplicity of the irreducible representation. We can also define the degree of a representation as $n_r \equiv \dim(V_r)$. In this basis $U_g$ is in a block-diagonal form. This change of basis also decomposes the representation vector space $V$ as
\begin{equation}
  V \rightarrow \bigoplus_{r =1}^R V_r \otimes \mathbb{C}^{m_r}.
\end{equation}
The conjugacy class of a group and the character of a representation for a given group element can be defined as

\begin{definition}[Conjugacy Class]
  For a group $G$ two given elements $g, h \in G$ are said to be conjugate if $\exists x \in G$ such that
\[ g = x h x^{-1}, \]
and hence the conjugacy class can be defined as  \[ C_{(g)} = \{ x g x^{-1} \mid x \in G \}, \]
  corresponding to the set of all elements that are conjugate to $g$.
\end{definition}

\begin{definition}[Character]
  For a representation $(U, V)$ of the group $G$, the character can be defined as \[ \chi_U(g) = \textup{Tr}\big( U_g\big), \]
  which is the trace of the representation of $g$ on $V$. For the irreducible representations $(u^{(r)}, V_r)$ indexed by $r$ we shall define $\chi_{u^{(r)}}(g) \equiv \chi_r(g)$ for simplicity. 
\end{definition}
In particular, as the trace permits cyclic permutation, this means that $\chi_r(xgx^{-1}) = \chi_r(g)$ and therefore $\chi_r(g)$ is constant on the conjugacy classes of $G$. i.e., if $g_1, g_2 \in C_{(g_1)}$ then $\chi_r(g_1) = \chi_r(g_2)$. It is also worth noting that as the identity element corresponds to an identity matrix, taking the trace of an identity matrix will result in the fact that $\chi_r(I) = \dim (V_r)$. Hence, the degree of an irreducible representation can be found by $n_r = \dim (V_r) = \chi_r(I)$.

We will now introduce a key result of this work regarding projecting quantum encoded states to any linear combination of irreducible subspaces of a given finite group.

\subsection{Subspace Projection Circuit}

We start by observing a result from the representation theory of finite groups which states \cite{serre_linear_reps}
\begin{theorem}[Irreducible Subspace Projection]
Let $U: G \rightarrow GL(V)$ be a representation of $G$. The canonical decomposition into an irreducible representation is given by $V = V_1 \oplus V_2 \oplus ... \oplus V_R$, where the irreducible representations have characters $\chi_1,...,\chi_R$ and degrees $n_1,...,n_R$. Then the projection $\hat{P}_r$ of $V$ onto the space $V_r$ is given by:
\begin{equation}
  \hat{P}_r = \frac{n_r}{\rvert G \rvert} \sum_{g_i \in G} \chi_r(g_i)^* U_{g_i},
\end{equation}
where $\chi_r(g_i)$ is the character of group element $g_i \in G$ for irreducible representation $r$, and $U_{g_i}$ is the matrix representation of group element $g_i \in G$ acting on the space $V$.
\label{thm:irreducible representationProj}
\end{theorem}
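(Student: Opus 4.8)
The plan is to prove that $\hat{P}_r$ acts as the identity on the isotypic subspace $V_r$ and as zero on every $V_s$ with $s\neq r$; since $V=\bigoplus_s V_s$ this identifies $\hat{P}_r$ with the projection onto $V_r$ and incidentally delivers idempotency $\hat{P}_r^2=\hat{P}_r$. Because $G$ is finite, I would first invoke the standard averaging argument to assume without loss of generality that $U$ is unitary, so that in the block-diagonal basis recalled in the preliminaries $U_g=\bigoplus_s u^{(s)}_g\otimes \bm{I}_{m_s}$, with each $u^{(s)}$ a unitary irreducible representation and $\chi_r(g)^*=\sum_\gamma \overline{u^{(r)}_{\gamma\gamma}(g)}$.

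Since $\hat{P}_r$ is a linear combination of the $U_g$, all of which are block-diagonal in this basis, $\hat{P}_r$ is itself block-diagonal, so it suffices to evaluate it within each $V_s$ block. (Equivalently, one checks that $\hat{P}_r$ is an intertwiner, $U_h\hat{P}_r U_h^{-1}=\hat{P}_r$, by relabelling $g\mapsto h^{-1}gh$ and using that $\chi_r$ is constant on conjugacy classes; this forces $\hat{P}_r$ to preserve every isotypic component.) The key step is then the Schur (grand) orthogonality relation for matrix coefficients, specialised to the diagonal coefficients appearing in the character,
\begin{equation}
\frac{1}{\lvert G\rvert}\sum_{g\in G}\overline{u^{(r)}_{\gamma\gamma}(g)}\,u^{(s)}_{\alpha\beta}(g)=\frac{1}{n_r}\,\delta_{rs}\,\delta_{\gamma\alpha}\,\delta_{\gamma\beta}.
\end{equation}
Labelling a basis of the $V_s$ block by an irreducible index $\alpha,\beta\in\{1,\dots,n_s\}$ and a multiplicity index $a,b\in\{1,\dots,m_s\}$, so that $[U_g]_{(\alpha a),(\beta b)}=u^{(s)}_{\alpha\beta}(g)\,\delta_{ab}$, I would substitute $\chi_r(g)^*=\sum_\gamma\overline{u^{(r)}_{\gamma\gamma}(g)}$ into $\hat{P}_r$ and apply the orthogonality relation term by term. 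The sum over $g$ collapses, and the remaining sum over $\gamma$ leaves $[\hat{P}_r]_{(\alpha a),(\beta b)}=\delta_{rs}\,\delta_{\alpha\beta}\,\delta_{ab}$, i.e. the identity on the $r$-block and zero on all others, which is exactly the claim.

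I expect the main obstacle to be the presence of multiplicities $m_s>1$. The tempting shortcut is to observe that $\hat{P}_r$ is an intertwiner, invoke Schur's lemma, and fix the resulting operator by the trace computation $\Tr_{V_s}(\hat{P}_r)=\frac{n_r}{\lvert G\rvert}\sum_g\chi_r(g)^*\chi_s(g)\,m_s=n_r\,m_s\,\delta_{rs}$, using only the first character orthogonality relation. But when $m_s>1$ Schur's lemma permits $\hat{P}_r$ to act on $V_s$ as $\bm{I}_{n_s}\otimes M$ for an arbitrary $m_s\times m_s$ matrix $M$, and the trace determines only $\Tr(M)$, not that $M=\bm{I}_{m_r}$ (for $s=r$) or $M=0$ (for $s\neq r$). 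This is precisely why the finer matrix-coefficient orthogonality above, rather than character orthogonality alone, is required to complete the argument rigorously.
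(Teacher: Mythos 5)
Your proof is correct and complete, and it is worth noting that the paper itself offers no proof of this theorem: it states the result as classical and defers to Serre and Fulton--Harris, so your argument is a self-contained derivation rather than a variant of anything in the text. The steps check out: averaging makes $U$ unitary without loss of generality (the formula conjugates covariantly under change of basis and characters are basis-independent), block-diagonality of $\hat{P}_r$ is inherited from the $U_g$, and substituting $\chi_r(g)^*=\sum_\gamma \overline{u^{(r)}_{\gamma\gamma}(g)}$ into $[U_g]_{(\alpha a),(\beta b)}=u^{(s)}_{\alpha\beta}(g)\,\delta_{ab}$ and applying the grand orthogonality relation does collapse the sum to $[\hat{P}_r]_{(\alpha a),(\beta b)}=\delta_{rs}\,\delta_{\alpha\beta}\,\delta_{ab}$, i.e.\ the identity on the $r$-th isotypic block and zero elsewhere, with idempotency following for free.

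Your closing paragraph, however, misdiagnoses the multiplicity issue: the ``shortcut'' via Schur's lemma and character orthogonality is not actually blocked. The operator $\hat{P}_r$ is not merely an intertwiner --- it is the image of an element of the group algebra, so on the $s$-th isotypic block, where the representation acts as $u^{(s)}_g\otimes\bm{I}_{m_s}$, it automatically takes the form
\begin{equation*}
\hat{P}_r\big|_{V_s} = \Big(\frac{n_r}{\lvert G\rvert}\sum_{g\in G}\chi_r(g)^*\,u^{(s)}_g\Big)\otimes \bm{I}_{m_s},
\end{equation*}
so the multiplicity factor is forced to be the identity before Schur's lemma is ever invoked, and the feared $\bm{I}_{n_s}\otimes M$ ambiguity never arises. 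The remaining $n_s\times n_s$ factor commutes with the irreducible $u^{(s)}$ (your $g\mapsto h^{-1}gh$ relabelling), hence equals $\lambda_s\bm{I}_{n_s}$ by Schur's lemma, and taking traces gives $\lambda_s=\frac{n_r}{n_s}\cdot\frac{1}{\lvert G\rvert}\sum_{g}\chi_r(g)^*\chi_s(g)=\delta_{rs}$ by first character orthogonality alone. (Equivalently: decompose $V$ into irreducible --- not merely isotypic --- summands; each is invariant under every $U_g$ and hence under $\hat{P}_r$, so Schur applies summand by summand.) This is precisely the proof in Serre, the very reference the paper cites. The grand orthogonality relation is therefore sufficient, as your proof shows, but not necessary; this misjudgement affects only your commentary, not the validity of your argument.
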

This is a known result in representation theory; for more details, see reference texts such as \cite{Fulton1991RepresentationTA, serre_linear_reps}. In this section, we explore a method for implementing this projection on a quantum circuit using the Linear Combination of Unitaries (LCU) technique. Specifically, we demonstrate the practicality of realising the projection described in Theorem~\ref{thm:irreducible representationProj} on a quantum device. In the most comprehensive scenario, we establish that a quantum circuit can execute combinations of such $\hat{P}_r$ projections at the same time via the LCU method. 

\begin{theorem}[Quantum Generic Irreducible Subspace Projection]
Let $\ket{\psi} \in V = (\mathbb{C}^2)^{\otimes n}$ be a quantum state encoded on $n$ qubits. Let $U: G \rightarrow SU(2^n)$ be a unitary representation of a finite group $G$, where the representatives for group elements $g_i \in G$ are denoted $U_{g_i} \in SU(2^n)$ and are unitary operators that acts on $\ket{\psi}$ and can be implemented in a quantum device. Utilising the Linear Combination of Unitaries framework on a quantum circuit, one can probabilistically apply a linear combination of irreducible representation projections
\begin{equation}\label{eqn:theorem-lcu-project}
 \sum_{r=1}^R a_r \hat{P}_r \ket{\psi} = \frac{1}{\Omega'} \sum_{r=1}^R a_r \big( \frac{n_r}{\rvert G \rvert}\sum_{g_i \in G} \chi_r(g_i)^* U_{g_i} \ket{\psi} \big),
\end{equation}
such that $\hat{P}_r$ projects $\ket{\psi} \in V$ to the subspace $V_r$ corresponding to the irreducible representation subspace indexed by $r$ with degree $n_r$. The constants $a_r$ can be freely chosen, $\Omega'$ is the normalisation constant for the final state after projection, and $R$ is the total number of irreducible representations. 
\label{thm:lcu_irrep_projection_general}
\end{theorem}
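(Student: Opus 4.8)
The plan is to recognise the entire operator $A \equiv \sum_{r=1}^R a_r \hat{P}_r$ as a \emph{single} linear combination of the group-representation unitaries $U_{g_i}$, and then to feed this directly into the LCU machinery of Section~\ref{sec:lcu-framework}. First I would substitute the projector formula of Theorem~\ref{thm:irreducible representationProj} into the left-hand side of Equation~\ref{eqn:theorem-lcu-project} and interchange the sum over irreducible representations $r$ with the sum over group elements $g_i$:
\begin{align}
A &= \sum_{r=1}^R a_r \frac{n_r}{|G|}\sum_{g_i \in G}\chi_r(g_i)^* U_{g_i} = \sum_{g_i \in G} c_{g_i} U_{g_i}, \nonumber \\
c_{g_i} &\equiv \frac{1}{|G|}\sum_{r=1}^R a_r\, n_r\, \chi_r(g_i)^*.
\end{align}
This rearrangement is the conceptual heart of the argument: it shows that applying an arbitrary linear combination of all $R$ projectors is no harder than a \emph{single} LCU over the $|G|$ unitaries $\{U_{g_i}\}$, which is exactly what permits all projections to be realised simultaneously using only $\mathcal{O}(\log|G|)$ ancilla qubits.

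Next I would reduce to the nonnegative-coefficient form assumed by the framework. Writing each coefficient in polar form $c_{g_i} = |c_{g_i}|\,e^{i\phi_{g_i}}$, I absorb the phase into a redefined operator $\tilde U_{g_i} \equiv e^{i\phi_{g_i}} U_{g_i}$, which remains unitary and implementable, so that $A = \sum_{g_i} |c_{g_i}|\,\tilde U_{g_i}$ with $|c_{g_i}| \ge 0$. Setting $N = |G|$ and $\alpha_{g_i} = |c_{g_i}|$ this matches the template $A\ket{\psi} = \frac{1}{\Omega'}\sum_j \alpha_j U_j \ket{\psi}$ of Section~\ref{sec:lcu-framework} verbatim.

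I would then exhibit the preparation and selection operators and verify they are physically realisable. Using $\lceil \log_2 |G|\rceil = \mathcal{O}(\log|G|)$ ancilla qubits, I index the group elements by computational basis states $\ket{b_{g_i}}$ and define $P_\text{PREP}\ket{b_1} = \frac{1}{\sqrt{\Omega}}\sum_{g_i}\sqrt{\alpha_{g_i}}\ket{b_{g_i}}$ with $\Omega = \sum_{g_i}\alpha_{g_i}$; this is a single ancilla-register state preparation and is therefore implementable. The selection operator $S_\text{SELECT}\ket{b_{g_i}}\ket{\psi} = \ket{b_{g_i}}\tilde U_{g_i}\ket{\psi}$ is a controlled application of the (by hypothesis implementable) representation unitaries, hence also realisable. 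Having supplied valid $P_\text{PREP}$ and $S_\text{SELECT}$, I invoke the derivation of Section~\ref{sec:lcu-framework} unchanged: applying $P_\text{PREP}^\dagger S_\text{SELECT} P_\text{PREP}$ and post-selecting the ancilla in $\ket{b_1}$ produces $\frac{1}{\Omega'}\sum_{g_i} c_{g_i} U_{g_i}\ket{\psi} = \frac{1}{\Omega'} A\ket{\psi}$, which is precisely the right-hand side of Equation~\ref{eqn:theorem-lcu-project}.

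I expect the principal obstacle to be not any single calculation but the careful bookkeeping that the combined object is a genuine LCU: confirming that the double sum collapses correctly, that complex characters are carried consistently through the phase-absorption step, and that the post-selection normalisation $\Omega'$ is coherent across the $R$ orthogonal subspaces. As a consistency check I would use the fact that the $\hat P_r$ project onto mutually orthogonal spaces with $\hat P_r\ket{\psi} = \ket{\psi_r}$, so that the post-selection probability evaluates to $\frac{1}{\Omega^2}\big|\sum_r a_r \ket{\psi_r}\big|^2 = \frac{1}{\Omega^2}\sum_r |a_r|^2 \langle\psi_r|\psi_r\rangle$, reproducing the form of the success probability quoted in Table~\ref{tab:intro-table}.
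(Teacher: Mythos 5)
Your proposal is correct, but it takes a genuinely different route from the paper's own proof. You collapse the double sum into a \emph{single} standard LCU over the $\rvert G \rvert$ group unitaries with coefficients $c_{g_i} = \frac{1}{\rvert G \rvert}\sum_r a_r n_r \chi_r(g_i)^*$, absorb the phases into redefined select unitaries $\tilde{U}_{g_i} = e^{i\phi_{g_i}}U_{g_i}$, and then invoke the Section~\ref{sec:lcu-framework} machinery verbatim; the phase-absorption step is handled correctly (had you instead placed complex phases in the preparation amplitudes, they would cancel against $P_\text{PREP}^\dagger$ in the post-selected branch, since that branch carries $\rvert \beta_j \rvert^2$ rather than $\beta_j^2$). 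The paper keeps the representation-indexed structure explicit instead: a pre-initialisation $\hat{\gamma}$ encodes the weights $a_r n_r$ into ancilla basis states labelled by irreducible representations, a fixed character operator $\hat{\chi}$ — unitary precisely by the character orthogonality theorem of Equation~\ref{eqn:charorththm}, completed by Gram--Schmidt — fans each $\ket{b_r}$ out over group-element states with weights $\chi_r(g_i)^*$, and post-selection on $\ket{b_1}$ after applying only $\hat{\chi}^\dagger$ (not $\hat{\gamma}^\dagger$) exploits $\chi_1(g_i)=1$ to reassemble the desired combination, which is how complex $a_r$ and complex characters are carried through without any phase surgery on the select unitaries. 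Each route buys something. Yours is simpler, and in fact slightly better probabilistically: character orthogonality gives $\sum_{g_i}\rvert c_{g_i}\rvert^2 = \frac{1}{\rvert G \rvert}\sum_r \rvert a_r n_r \rvert^2$, so by Cauchy--Schwarz your $1$-norm satisfies $\sum_{g_i}\rvert c_{g_i}\rvert \leq \big(\sum_r \rvert a_r n_r\rvert^2\big)^{1/2}$, making your post-selection probability at least as large as the paper's Equation~\ref{eqn:probability-lcu-irrep-proj}, with equality when the $\rvert c_{g_i}\rvert$ are uniform. The paper's factorised construction buys modularity: $\hat{\chi}$ and $S_\text{SELECT}$ are fixed once per group, so retuning the weights $a_r$ — e.g.\ sweeping the hyperparameter $\alpha$ in Section~\ref{sec:subspace-amplification} — changes only the cheap pre-initialisation $\hat{\gamma}$, whereas in your scheme every change of $a_r$ alters both the preparation amplitudes $\sqrt{\rvert c_{g_i}\rvert}$ and the phases $\phi_{g_i}$ that must be recompiled into the controlled gates; the factorised $\hat{\chi}$ is also the structure on which the conjugacy-class refinement of Appendix~\ref{apn:improvement} is built. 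Your closing consistency check, using orthogonality of the $\ket{\psi_r}$ to reduce the success probability to $\frac{1}{\Omega^2}\sum_r \rvert a_r\rvert^2 \langle \psi_r \rvert \psi_r \rangle$, matches the paper's result up to the differing normalisation constant just discussed.
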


\begin{figure}[h]%
\centering
\includegraphics[width=1\linewidth]{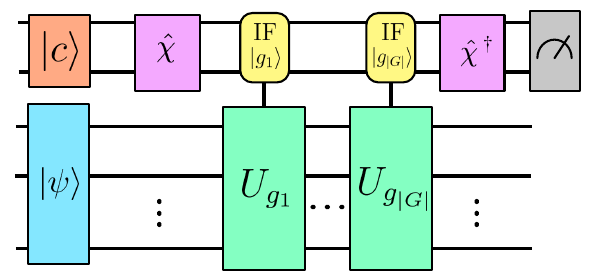}
\caption{The general irreducible representation projection circuit. The upper register consisting of $k$ qubits is prepared into a superposition state $\ket{c}$, where the basis states $\ket{b_r}$ are labelled by the irreducible representations $r$. The operator $\hat{\chi}$ then takes each irreducible representation labelled basis state to its corresponding group element labelled basis state multiplied by the correct character. Each group element labelled basis state $\ket{g_i} \equiv \ket{b_i}$ where $g_i \in G$ then controls the application of its corresponding representation on the quantum states in the lower register, represented by $U_{g_i} \in SU(2^n)$. This is achieved by using multi-qubit control gates, such that if the ancilla is in the state $\ket{g_i}$, the operator $U_{g_i}$ is applied and nothing else. This is demonstrated in the graphic using yellow boxes for these conditional control gates, which are controlled by specific multi-qubit basis states $\ket{g_i}$. These can be constructed as multi-qubit control gates along with the correct implementation of $X$ gates to ensure the controlled action $U_{g_i}$ is only applied when the ancilla register is in the correct basis state $\ket{g_i}$. We apply $\hat{\chi}^\dagger$ on the upper register and then measure and post-select for the $\ket{b_1} \equiv \ket{0}^{\otimes k}$ state. We demonstrate this will create the appropriate combinations of projections onto the desired subspaces with the proportion of each projector determined by the initialisation of the state $\ket{c}$.}\label{fig/circuit_drawing}
\end{figure}

\begin{proof}
As we intend to implement a linear combination of projections, we shall consider a slightly more general form of the LCU method. We first perform a pre-initialisation step on the $k$ ancilla qubits, initially in the basis state $\ket{b_1}$, using an operator $\hat{\gamma}$ to prepare the state $\ket{c} =\hat{\gamma} \ket{b_1}$ defined as
\begin{equation}
  \ket{c} = \frac{1}{\Omega} \sum_{r = 1}^R a_r n_r \ket{b_r},
\end{equation}
where $a_r$ can be chosen to adjust the relative amounts of a given representation, $n_r$ is the degree of the representation, and $\Omega=\sum_{r=1}^R \rvert a_r n_r \rvert^2$ is a normalisation constant for the quantum state. The states $\{ \ket{b_j} \}_{j \in [1, 2^k]} \in (\mathbb{C}^2)^{\otimes k}$ are basis states of the $2^k$ dimensional Hilbert space for the $k$ qubit ancilla register denoted by $\mathcal{H}= (\mathbb{C}^2)^{\otimes k}$. They can be taken to be computational basis states. We highlight that after this pre-initialisation the ancilla register is in a combination up to $R$ basis states, which are indexed by the irreducible representations $r$. The inclusion of the $n_r$ term here ensures that the representation weightings are correct later on. At this point, every state $\ket{b_r}$ corresponds to a different irreducible representation of the group $G$. We define $r=1$ to correspond to the trivial representation.

The next step corresponds to applying an operator that takes each representation labelled state $\ket{b_r}$ into a sum of basis states $\ket{b_i}$, $i \in [1, \rvert G \rvert]$ which are now labelled by the elements of the group $g_i \in G$, where each basis state $\ket{b_i}$ inherits a weighting according to the character $\chi_r(g_i)^*$ corresponding to the representation $r$ and the relevant group element $g_i$. To help improve readability, we define $\ket{g_i} \equiv \ket{b_i}$. These correspond to the same basis states in the ancilla register, but emphasises the fact that are labelled by the group elements $g_i$.

As we consider a combination of multiple projections at the same time, then, in order to maintain generality of our framework, instead of requiring $P_{\text{PREP},r}$ to be a different operator for each individual irreducible representation (indexed by $r$) as would be the standard case in LCU methods, we will consider a generalised preparation operator corresponding to the unitary $\hat{\chi}$ which has the effect of applying the correct character for every irreducible representation $r$ and every group element $g_i \in G$ such that
\begin{equation}
\hat{\chi}\ket{b_r} = \frac{1}{\sqrt{\rvert G \rvert}}\sum_{g_i \in G} \chi_r(g_i)^* \ket{g_i}.
\label{eqn:charoperator}
\end{equation}
where in the most general form
\begin{equation}
\hat{\chi} = \frac{1}{\sqrt{\rvert G \rvert}}\sum_{r=1}^R \sum_{g_i \in G} \chi_{r}(g_i)^* \ket{g_i}\bra{b_{r}} + \sum_{r=R+1}^{2^k}(...)\bra{b_r},
\label{eqn:charoperator-full}
\end{equation}
where the terms grouped together as $\sum_{r=R+1}^{2^k}(...)\bra{b_r}$ will not be used in the algorithm and can hence be ignored. Implementing this corresponds to constructing a matrix that contains the character of every group element for every irreducible representation. This matrix will have the form

\begin{equation}
[\hat{\chi}]_{i, j} = 
\begin{cases}
\frac{1}{\sqrt{\rvert G \rvert}}\chi_j(g_i), &\text{ if } i \in [1,\rvert G \rvert], j \in [1, R]\\
 0 & \text{ if } i \in (\rvert G \rvert, 2^k], j \in [1, R] \\
\frac{1}{\sqrt{\rvert G \rvert}} u_{i,j} &\text{ if } i \in [ 1 , 2^k], j \in (R, 2^k],
\end{cases}
\end{equation}
where the $u_{i,j}$ will never affect the algorithm, but are required as the matrices involved must be of size $2^k$. The first $R$ columns correspond to the vectors that contain the characters for each element in a given representation. Considering the character orthogonality theorem \cite{Huppert1998} which states that
\begin{equation}
  \frac{1}{\rvert G \rvert }\sum_{g_i \in G} \chi_j(g_i) \chi_k(g_i)^* = \delta_{jk},
  \label{eqn:charorththm}
\end{equation}
this means that the character vectors of irreducible representations form an orthonormal basis. The $ u_{i,r}$ terms can be chosen utilising the Gram-Schmidt procedure or otherwise to ensure that all column vectors of $\hat{\chi}$ form an orthonormal basis. A matrix whose column vectors form an orthonormal basis is unitary. Hence, our generalised preparation step $P_{\text{prep}} = \hat{\chi}$ can be implemented on a quantum circuit. 

The $S_\text{SELECT}$ operator can be implemented by applying the unitary quantum gate representation $U_g \in SU(2^n)$ to the $n$ qubit target state $\ket{\psi} \in V = (\mathbb{C}^2)^{\otimes n}$ for each corresponding group element $\ket{g_i}$ in the ancilla register
\begin{equation}
S_\text{SELECT}\ket{g_i}\ket{\psi} = \ket{g_i} U_{g_i} \ket{\psi}.
\end{equation}
As long as $G$ is a compact group there will be unitary irreducible representations $U_{g_i}$ via the Peter-Weyl theorem \cite{DeitmarWeyl}. Hence $U_{g_i}$ can be implemented in a quantum circuit for compact groups $G$. Using controlled gates $U_{g_i}$, in which the operation $U_{g_i}$ is applied only when the ancilla qubit is in the state $\ket{g_i}$, then the operator $S_\text{SELECT}$ is successfully implemented.

We have shown that the preparation and selection operators can be implemented on a quantum circuit, now all that remains is to combine the operators together to view the full action of the LCU procedure. Pre-initialising the ancillas
\begin{align}
\hat{\gamma}\ket{b_1}\ket{\psi} =\frac{1}{\Omega} \sum_{r =1}^R a_r n_r \ket{b_r} \ket{\psi},
\end{align}
followed by applying preparation operator $\hat{\chi}$ on the ancillas gives
\begin{equation}
\hat{\chi} \hat{\gamma}\ket{b_1}\ket{\psi} = \frac{1}{\Omega} \sum_{r =1}^R a_r n_r \frac{1}{\sqrt{\rvert G \rvert}} \sum_{g_i \in G} \chi_r(g_i)^* \ket{g_i}\ket{\psi}.
\end{equation}
The selection operator applied to the circuit can then be written
\begin{align}
&S_\text{SELECT} \hat{\chi} 
 \hat{\gamma}\ket{b_1}\ket{\psi} \nonumber \\
 & = \frac{1}{\Omega} \sum_{r =1}^R a_r n_r \frac{1}{\sqrt{\rvert G \rvert}} \sum_{g_i \in G} \chi_r(g_i)^* \ket{g_i} U_{g_i}\ket{\psi}.
\end{align}
 
The conjugate $\hat{\chi}$ term can be written as
\begin{equation}
\hat{\chi}^\dagger = \frac{1}{\sqrt{\rvert G \rvert}}\sum_{r=1}^R \sum_{g_i \in G} \chi_{r}(g_i) \ket{b_{r}}\bra{g_i} + \sum_{r=R+1}^{2^k}\ket{b_r}(...),
\label{eqn:charoperator-full-conj}
\end{equation}
where $\sum_{r=R+1}^{2^k}\ket{b_r}(...)$ represents terms that will not end up contributing and hence we can ignore. Applying the conjugate $\hat{\chi}^\dagger$ term, and separating out terms that will not be used further we find
\begin{align}
& \hat{\chi}^\dagger S_\text{SELECT}  \hat{\chi} 
 \hat{\gamma}\ket{b_1}\ket{\psi} = \nonumber \\
 & \frac{1}{\Omega} \negthinspace \sum_{r =1}^R \negthinspace a_r \frac{n_r}{\rvert G \rvert} \negthinspace \sum_{g_i \in G} \ket{b_1}\chi_{1}(g_i) \chi_r(g_i)^* U_{g_i}\ket{\psi} \negthinspace + \negthinspace\sum_{r=2}^{2^k}\ket{b_r}(...),
\end{align}
where we can ignore all terms $\sum_{r=R+1}^{2^k}\ket{b_r}(...)$ as they will be discarded if ever measured. Note that since $r=1$ is defined as the trivial representation, it follows that $\chi_1(g_i) =1, \forall g_i \in G$. We now measure the ancilla qubits and find that they will be found in the state $\ket{b_1}$ with a probability $\pi_S$ given by
\begin{equation}\label{eqn:probability-lcu-irrep-proj}
  \pi_S = \left\lvert \frac{1}{\Omega} \sum_{r =1}^R a_r \frac{n_r}{\rvert G \rvert} \sum_{g_i \in G} \chi_r(g_i)^* U_{g_i}\ket{\psi} \right\rvert^2.
\end{equation}
By only retaining states in which the ancilla was measured in the $\ket{b_1}$ state, we will have prepared the state 
\begin{align}\label{eqn:end_of_lcu_proof}
& \bra{b_1}\hat{\chi}^\dagger S_\text{SELECT}  \hat{\chi} 
 \hat{\gamma}\ket{b_1}\ket{\psi} \nonumber \\
 & = \frac{1}{\Omega'} \sum_{r =1}^R a_r \frac{n_r}{\rvert G \rvert} \sum_{g_i \in G} \chi_r(g_i)^* U_{g_i}\ket{\psi},
\end{align}
where $\Omega' = \sqrt{\pi_S}\Omega$ is the normalisation constant of the final state, as required. 
\end{proof}

A simple corollary regarding the special case in which during the pre-initialisation stage $a_{r'=r} = 1$ and $a_{r' \neq r} = 0$, such that only one irreducible representation is selected.

\begin{corollary}[Quantum Individual Subspace Projection]
The projection given in Theorem~\ref{thm:irreducible representationProj} can be probabilistically implemented on an $n$ qubit quantum state $\ket{\psi} \in V = (\mathbb{C}^2)^{\otimes n}$ by utilising a Linear Combination of Unitaries in a quantum circuit where
\begin{equation}
  \hat{P}_r \ket{\psi} = \frac{n_r}{\Omega' \rvert G \rvert} \sum_{g_i \in G} \chi_r(g_i)^* U_{g_i} \ket{\psi},
\end{equation}
such that $\hat{P}_r$ projects $\ket{\psi}$ from the space $V$ to an irreducible representation subspace $V_r$. 
\end{corollary}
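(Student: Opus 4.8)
The plan is to obtain this directly as a specialisation of Theorem~\ref{thm:lcu_irrep_projection_general}, which already establishes that the linear combination $\sum_{r'=1}^R a_{r'} \hat{P}_{r'} \ket{\psi}$ can be implemented via the LCU framework for freely chosen coefficients $a_{r'}$. First I would make the explicit choice $a_{r'} = \delta_{r' r}$, so that only the coefficient associated with the target representation $r$ is non-zero while all others vanish. This is exactly the degenerate case flagged in the sentence preceding the corollary statement.

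With this choice, the pre-initialisation operator $\hat{\gamma}$ from the proof of Theorem~\ref{thm:lcu_irrep_projection_general} prepares the single basis state $\ket{c} = \ket{b_r}$: the normalisation $\Omega$ collapses to the degree $n_r$, which cancels the $n_r$ amplitude factor and leaves a single computational basis state of the ancilla register, which is trivially implementable. The remaining steps then proceed verbatim, namely applying $\hat{\chi}$ to spread $\ket{b_r}$ across the group-element labelled basis states with weights $\chi_r(g_i)^*$, applying the controlled selection operator $S_\text{SELECT}$, applying $\hat{\chi}^\dagger$, and post-selecting the ancillas on $\ket{b_1}$.

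Next I would substitute $a_{r'} = \delta_{r' r}$ into the general output state of Equation~\ref{eqn:end_of_lcu_proof}. The sum over $r'$ collapses to its single surviving term, leaving
\[ \frac{1}{\Omega'}\, \frac{n_r}{|G|} \sum_{g_i \in G} \chi_r(g_i)^* U_{g_i}\ket{\psi}, \]
which by Theorem~\ref{thm:irreducible representationProj} is precisely $\hat{P}_r \ket{\psi}$, with the factor $\Omega'$ absorbing the renormalisation induced by post-selection. The success probability follows identically by specialising Equation~\ref{eqn:probability-lcu-irrep-proj} to the same single non-zero coefficient.

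Because this is a corollary of an already-established theorem, there is no substantive obstacle to overcome. The only point requiring genuine care is bookkeeping: checking that the degenerate pre-initialisation (a single non-zero $a_{r'}$) remains a well-defined, normalisable ancilla state, and confirming that the degree factor $n_r$ is retained correctly so that the resulting operator matches the projector $\hat{P}_r$ of Theorem~\ref{thm:irreducible representationProj} exactly rather than a rescaled version of it.
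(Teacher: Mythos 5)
Your proposal is correct and matches the paper exactly: the paper presents this corollary as the special case of Theorem~\ref{thm:lcu_irrep_projection_general} with $a_{r'=r}=1$ and $a_{r'\neq r}=0$, so the pre-initialised ancilla state collapses to the single basis state $\ket{b_r}$ and Equation~\ref{eqn:end_of_lcu_proof} reduces to the stated projection, just as you argue. Your bookkeeping on the normalisation and the retained degree factor $n_r$ is also consistent with the paper's treatment.
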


This shows that the LCU method can be adapted through careful selection of the preparation and selection operators to reproduce the projection onto any combination of irreducible representations. An overall schematic of the representation projection circuit is shown in Figure~\ref{fig/circuit_drawing}. Note that after the projection, the state may no longer be a pure state.

\subsubsection{Circuit Scaling}

In order for $k$ ancilla qubits to have a basis state for each group element representation of $G$ it would require $2^k \geq \rvert G \rvert $. Hence, the number of qubits required scales as $k \approx \mathcal{O}(\log (\rvert G \rvert))$. However, the character implementation unitary $\hat{\chi}$ which is used in the ancilla preparation stage has a dimension that scales with $\mathcal{O}(\rvert G \rvert)$ which may be prohibitive in certain cases. For example, if $G$ is taken to be the permutation group $G = S_n$, where the representation operators $U_{\sigma_i}, \sigma_i \in S_n$ correspond to permuting $n$ qubits in the target state, then we require $k \approx \mathcal{O}(\log (n!)) \approx \mathcal{O}(n \log (n))$ ancilla qubits, however we are required to initialise a unitary of size $\mathcal{O}(n!)$, which may be difficult. A proposed solution outlined in Appendix~\ref{apn:improvement} is to utilise a smaller version of $\tilde{\chi}$ corresponding to the character table where each ancilla qubit state represents a conjugacy class rather than individual group elements. In this case, the size of the unitary required could be significantly reduced, as it would scale only with the number of conjugacy classes, denoted by $\rvert \mathcal{C} \rvert$. Hence, this alternative implementation would require ancilla qubits of order $\mathcal{O}(\log (\rvert \mathcal{C} \rvert))$ to encode the character table state, but then it would also require $\rvert \mathcal{C} \rvert$ extra registers, one per conjugacy class, with sufficient qubits to produce a superposition of states for every element in the conjugacy class. This means more qubits are required in total, but the unitary is easier to implement. The number of conjugacy classes in $S_n$, for example, that is approximately approached in the asymptotic limit is
\begin{equation}
  \rvert \mathcal{C}_n \rvert \approx \frac{1}{4n \sqrt{3}} \exp \Big( \pi \sqrt{\frac{2n}{3}} \Big) \text{ ; } n \rightarrow \infty ,
\end{equation}
which is super-polynomial but sub-exponential \cite{Fulton1991RepresentationTA}. This would be a significant improvement in scaling compared to $\mathcal{O}(n!)$. This allows a trade-off between total qubits and the ease of unitary preparation. Note that in this section, we introduced a very general framework, and improvements in circuit efficiency to implement $\hat{\chi}$ may well be possible for specific groups, representations, and data types.

\subsubsection{Probabilistic Scaling}

The probability of success $\pi_S$ is shown in Equation~\ref{eqn:probability-lcu-irrep-proj} which is equal to the probability of measuring the ancillas in the $\ket{b_1}$ state. This can be simplified by utilising the result of Theorem~\ref{thm:lcu_irrep_projection_general} to observe that
\begin{align}
 \pi_S = \left\lvert \frac{1}{\Omega} \sum_{r =1}^R a_r \hat{P}_r \ket{\psi} \right\rvert^2.
\end{align}
Any quantum state can be written as a decomposition in terms of its components on the irreducible subspaces
\begin{equation}
  \ket{\psi} =\bigoplus_{r=1}^R \bigoplus_{j=1}^{m_r} \ket{\psi_r}_j,
\end{equation}
where $m_r$ is the multiplicity of representation $r$. We define $\ket{\psi_r} \equiv \bigoplus_{j=1}^{m_r} \ket{\psi_r}_j$ and see that it denotes the component of $\ket{\psi}$ that occupies the subspaces of the irreducible representation $r$. After $\ket{\psi}$ is projected by $\hat{P}_r$ onto the irreducible subspace of $r$ it will equal $\ket{\psi_r}$. Noting that $\ket{\psi_r}$ will be orthogonal for different $r$ values, we can therefore write
\begin{align}
 \pi_S = \frac{1}{\Omega^2} \sum_{r =1}^R \rvert a_r \rvert^2 \langle \psi_r \rvert \psi_r \rangle.
\end{align}
Therefore, the probability of success depends on the amount that the initial state $\ket{\psi}$ occupies the relevant irreducible subspaces in the projection. The normalisation condition in the general state means that $\langle \psi \rvert \psi \rangle = \sum_r \langle \psi_r \rvert \psi_r \rangle = 1$. In the case that we project fully to one specific space $a_r = 1, a_{r' \neq r} =0$ then 
\begin{align}
 \pi_S = \langle \psi_r \rvert \psi_r \rangle,
\end{align}
which will depend on how much of the state $\ket{\psi}$ lies in the subspace of irreducible representation $r$. If $\ket{\psi}$ lies fully within the space, then $\pi_S = \langle \psi_r \rvert \psi_r \rangle = 1$. Conversely, if $\ket{\psi}$ has no components in the subspace for irreducible representation $r$ then $\pi_S = \langle \psi_r \rvert \psi_r \rangle = 0$ and the algorithm is impossible to run. An advantage of the additional control introduced by the $a_r$ parameters is that we can freely choose the weightings to potentially improve the probability of success. In general, the probability of success will depend greatly on data encoding and choice of representations.

\subsection{Symmetry Invariant Encodings for Point Cloud Data}

In this subsection we discuss projections to a single irreducible subspace in order to highlight how these subspaces can correspond to symmetries of the input data. However, it should be noted that these projections could take the quantum state to a polynomially sized space which may be classically simulatable, as has previously been shown to be the case with equivariant variational models \cite{anschuetz_efficient_2023, goh2023liealgebraic}. The following subsection will address this issue by instead considering all subspaces, and hence maintaining an exponentially large space overall, but allowing certain subspaces to be amplified relative to the others. Investigating single subspace projections remains useful in this context as it allows the identification of which irreducible representation subspace correspond to symmetries of the underlying data. We therefore proceed to highlight how certain irreducible representation subspaces correspond to symmetries of the underlying data when it has been encoded into a quantum state. In particular we focus on point cloud data due to its inherent permutation and rotation symmetry. 

A point cloud is a collection of three-dimensional vectors (the points) that when viewed as a collective represent an image. Amongst other applications they are often associated with computer vision algorithms, as a primary method for performing three-dimensional imaging is the use of the Light Detection and Ranging (LiDAR) system, which produces point clouds as its data output \cite{raut2023endtoend}. We can consider a point cloud $P$ as a set of three-dimensional vectors $P = \{ \mathbf{p}_1, \mathbf{p}_2 , ..., \mathbf{p}_n \}$ that overall forms an image. We can consider a form of quantum encoding in which each individual point $\mathbf{p}_i$ is encoded into a quantum state $\ket{\mathbf{p}_i}$ leading to a separable quantum state for the overall point cloud as
\begin{equation}
  \ket{P} = \ket{\mathbf{p}_1} \otimes \ket{\mathbf{p}_2} \otimes \dots \otimes \ket{\mathbf{p}_n}.
\end{equation}
The states $\ket{\mathbf{p}_i}$ could be single qubits, in which case the total number of qubits in this target register would be $n$. However, in general, each $\ket{\mathbf{p}_i}$ could be a state encoded on $t$ qubits, in which case we can consider each $\ket{\mathbf{p}_i}$ as a $2^t$-dimensional qudit and $n$ would denote the number of qudits in the system that encodes $\ket{P}$.

\subsubsection{Permutation Invariance}\label{sec:perm-invariance}

The ordering of the points within the set $P$ does not affect the overall image; however, in general a machine learning algorithm may output different results depending on the ordering of the points in the data input array. A solution to prevent this by utilising permutation symmetric encodings for point clouds was suggested in previous work \cite{heredge2023permutation} where a quantum superposition of all possible permutations was used, leading to a permutation invariant quantum encoding such that
\begin{equation}\label{eqn:heredgeperminvariantencoding}
  \ket{P}_{\text{perm}} \negthinspace = \negthinspace \frac{1}{\Omega'} \sum_{\sigma \in S_n} \ket{\mathbf{p}_{\sigma^{-1}(1)}} \otimes \ket{\mathbf{p}_{\sigma^{-1}(2)}} \otimes \dots \otimes \ket{\mathbf{p}_{\sigma^{-1}(n)}} .
\end{equation}
This previous work can be viewed as a special case of the LCU method described previously, in which the group is $S_n$ and the state $\ket{P}$ is projected to the symmetric irreducible representation subspace (in which the character for all group elements is equal to one). The representations of the group elements $\sigma \in S_n$ that act on $\ket{P}$ are denoted by $U_\sigma$ and correspond to SWAP gates that permute the constituent states $\ket{\mathbf{p}_i}$ accordingly. This projection would correspond to applying the projector
\begin{equation}
  \sum_{\sigma \in S_n} U_{\sigma} \ket{P}  = \frac{1}{\Omega'} \sum_{\sigma \in S_n} \ket{\mathbf{p}_{\sigma^{-1}(1)}} \otimes\dots \otimes \ket{\mathbf{p}_{\sigma^{-1}(n)}},
\end{equation}
providing the result proposed in Equation~\ref{eqn:heredgeperminvariantencoding} and recreating the work of \cite{heredge2023permutation} as a special case of a more general framework.

\subsubsection{Rotational Invariance}\label{sec:rotat-invariant-encoding}

The novelty of the projection technique we propose in the LCU framework is that it allows expansion to many other possible symmetries beyond permutation invariance. Here we shall outline another example, rotationally invariant encodings of point clouds that contain four points utilising an $SU(2)^{\otimes 4}$ invariant encoding. This is motivated by the fact that $SU(2)$ is the double cover of $SO(3)$, which is the group that corresponds to rotations in three dimensions, which means that each element of $SO(3)$ corresponds to exactly two elements in $SU(2)$.

In this framework, we consider point clouds in which the points $\mathbf{p}_i$ are represented in spherical coordinates $\mathbf{p}_i = (r_i, \theta_i, \phi_i)$. We shall ignore the radial component $r_i$ for the sake of simplicity, although this could be included in any practical implementation in an appropriate manner.

We propose an encoding in which for each point $\mathbf{p}_i = (r_i, \theta_i, \phi_i)$ is encoded into the following quantum state
\begin{equation}\label{eqn:rotat-inv-encoding}
  \ket{\mathbf{p}_i} = \cos \left( \frac{\theta_i}{2} \right) \ket{0} + e^{i \phi} \sin \left( \frac{\theta_i}{2} \right) \ket{1} ,
\end{equation}
such that the $\theta_i$ and $\phi_i$ angle of the point is encoded into the respective angles of a single qubit in the Bloch sphere representation.

Rotation of a point cloud corresponds to rotating every individual point by the same amount and in the same direction such that $(r_i, \theta_i, \phi_i) \rightarrow (r_i, \theta_i + \Delta \theta, \phi_i + \Delta \phi) \forall i \in [1 , n]$. Considering this in terms of the quantum encoded state above, it would correspond to the application of the same rotation $U_{SU(2)} \in SU(2)$ on each qubit $\ket{\mathbf{p}_i}$, rotating every qubit state about the Bloch sphere by the same amount. In order to have a quantum encoded point cloud that is invariant to rotations of the point cloud, it would require the following invariance
\begin{align}  & \ket{\mathbf{p}_1}  \ket{\mathbf{p}_2} \ket{\mathbf{p}_3} \ket{\mathbf{p}_4} \nonumber \\
& = U_{SU(2)} \ket{\mathbf{p}_1} U_{SU(2)} \ket{\mathbf{p}_2} U_{SU(2)} \ket{\mathbf{p}_3} U_{SU(2)} \ket{\mathbf{p}_4},
\end{align}
which would be invariance under the action $SU(2)^{\otimes 4}$ .

We consider irreducible representations of the group $S_4$, where the representations of group elements $\sigma \in S_4$ correspond to the SWAP gates $U_\sigma = \text{SWAP}_\sigma$ that permute the constituent points $\ket{\mathbf{p}_i}$ accordingly. We are able to use this because Schur-Weyl duality (see Appendix~\ref{sec:rot-invariance-schur-weyl} for a detailed discussion) provides a relation between irreducible finite-dimensional representations between the symmetric group and general linear group. This means that for the one-dimensional trivial representation of $SU(2)^{\otimes n}$ of multiplicity $m$, there will correspond an $m$-dimensional representation of $S_n$ covering the same subspace \cite{Kirby_2018, kirby_thesis, ragone2023representation}. The character table for $S_4$ is shown in Figure~\ref{tab:char_table_s4}.

\begin{table}[t!]
\centering
\caption{Character table for the group $S_4$ which corresponds to the group of all permutations of the set $\{ 1, 2, 3, 4 \}$. Each conjugacy class is characterised by the cycle structure of its permutations, where a cycle of length $k$ corresponds to a sequence of $k$ elements being permuted cyclically. The identity if represented by $(I)$, 2-cycles are represented by $(12)$ and correspond to swapping any two elements. 3-cycles and 4-cycles are represented by $(123)$ and $(1234)$, respectively. Disjoint cycles are those that affect different sets of elements and can be represented together, such as $(12)(34)$. There are five irreducible representations with their characters denoted by $\chi_1, \chi_2, \chi_3, \chi_4, \chi_5 $.
 \vspace{0.2cm}}
  \label{tab:char_table_s4}
\begin{tabular}{|>{\raggedright\arraybackslash}p{1.6cm}||>{\raggedright\arraybackslash}p{0.6cm}|>{\raggedright\arraybackslash}p{0.7cm}|>{\raggedright\arraybackslash}p{1.2cm}|>
{\raggedright\arraybackslash}p{0.9cm}|>
{\raggedright\arraybackslash}p{1.0cm}|}
\hline
 Conjugacy Class & $C_{(I)}$& $C_{(12)}$&$C_{(12)(34)}$& $C_{(123)}$ & $C_{(1234)}$ \\
 \hline
 Number of Elements &
 1 & 6 & 3 & 8 & 6 \\\hline  
 $\chi_1$& 1 & 1 & 1 & 1 & 1\\
 $\chi_2$& 1 & -1 & 1 & 1 & -1\\
 $\chi_3$& 2 & 0 & 2 & -1 & 0\\ 
 $\chi_4$& 3 & -1 & -1 & 0 & 1\\
 $\chi_5$& 3 & 1 & -1 & 0 & -1\\
\hline
\end{tabular}
\end{table}

If we utilise the irreducible representation projection procedure to project the state $\ket{\mathbf{p}_1}\otimes \ket{\mathbf{p}_2}\otimes \ket{\mathbf{p}_3}\otimes \ket{\mathbf{p}_4} $ to the subspace of irreducible representation $r = 3$, with characters $\chi_3$, then this corresponds to projecting to the basis states \cite{Kirby_2018, kirby_thesis, ragone2023representation}
\begin{align}
  \ket{d_1} = \frac{\sqrt{3}}{6}( & 2\ket{0011} + 2\ket{1100} - \ket{0101} \nonumber \\
  & - \ket{1010} - \ket{0110} - \ket{1001}),
\end{align}
 and
 \begin{equation}
   \ket{d_2} = \frac{1}{2}(\ket{0101} + \ket{1010} - \ket{0110} - \ket{1001}).
 \end{equation}

When writing the action of $(U \otimes U \otimes U \otimes U)$ in the Schur basis, as shown in Figure~\ref{fig:schurdecomp}, it can be seen that the above basis states $\ket{d_1}, \ket{d_2}$ are invariant under the action of $(U \otimes U \otimes U \otimes U)$ as they are in the subspace associated with the trivial representation of $SU(2)^{\otimes 4}$ \cite{kirby_thesis, Kirby_2018}. We are able to reach this space through projecting in $S_4$ due to Schur-Weyl duality, since the irreducible subspaces of $S_4$ and $SU(2)^{\otimes 4}$ are simultaneously block diagonalised. This means that we have effectively projected the quantum encoded point cloud state to a subspace which is invariant with respect to three-dimensional rotations of the point cloud. This is a desirable symmetry for point clouds, as they are naturally invariant to three-dimensional rotations.

In order to demonstrate this numerically, we show in Figure~\ref{overlap_rotationals} that the rotationally invariant encoding introduced by this projection gives a constant overlap equal to unity when comparing two identical but rotated point clouds. This constant overlap of unity is maintained as the degree of rotation is increased, demonstrating that the encoding is producing rotationally invariant encoded states. It is also demonstrated that, without applying this symmetry projection, a non-invariant encoding produces an overlap between identical but rotated point clouds that decreases as the magnitude of the rotation is increased. This is significant as it means that without a rotationally invariant encoding, the exact same point cloud, rotated by $\pi$ radians, would produce a quantum state that appears completely different and will give zero overlap with the original orientation of the point cloud.

 \begin{figure}[h]%
\centering
\includegraphics[width=1\linewidth]{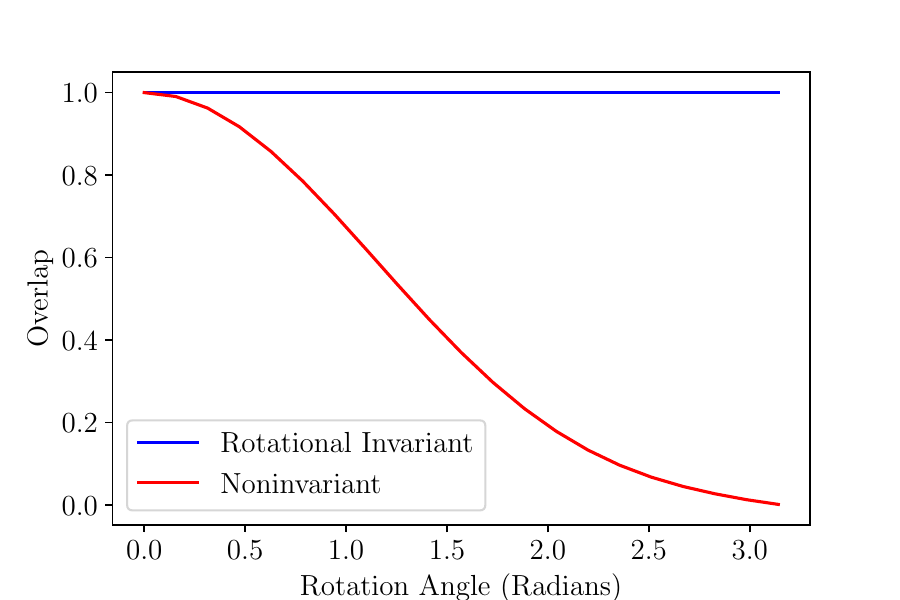}
\caption{The overlap between two identical Point Clouds that have been rotated about a random direction in three dimensions as the amount of rotation is increased, for the rotational invariant encoding and generic noninvariant encoding. To generate this plot, we randomly generated $n=4$ points on a sphere of radius $1$ to act as the point cloud. We then encoded the $(\theta, \phi)$ radial coordinates of each point into a qubit using the specification of Equation~\ref{eqn:rotat-inv-encoding}. This results in an initial state $\ket{P}_{\text{init}} = \ket{\mathbf{p}_1}  \ket{\mathbf{p}_2} \ket{\mathbf{p}_3} \ket{\mathbf{p}_4}$. 
We then repeat the procedure after the point cloud data is transformed by a three-dimensional rotation in random direction by some angle $\Theta$ to produce non-invariant state $\ket{P}_\Theta$. To generate rotationally invariant states $\ket{P_{\text{rot}}}$ we apply the result in Theorem~\ref{thm:irreducible representationProj} to project $\ket{P}_\Theta$ states to the irreducible representation denoted by $r=3$ as specified in Table~\ref{tab:char_table_s4}, which via Schur-Weyl duality lie in the trivial representation subspace of $SU(2)^{\otimes 4}$ and are hence rotationally invariant under our data encoding set-up. This means that we now have rotationally invariant encoded states $\ket{P_{\text{rot}}}_\Theta$ for all $\Theta$ values. The blue line plots the overlap $\langle {P}_{\text{init}} \rvert {P_{\text{rot}}} \rangle_\Theta$ and the red line plots the overlap $\langle {P}_{\text{init}} \rvert {P} \rangle_\Theta$ as the point cloud data rotation angle $\Theta$ is varied over $[0,\pi]$. As the overlap $\langle {P}_{\text{init}} \rvert {P_{\text{rot}}} \rangle_\Theta$ is constant we therefore numerically confirm that projection to the irreducible representation denoted by $r=3$ creates rotationally invariant states.}\label{overlap_rotationals}
\end{figure}

\subsection{Symmetric Subspace Amplification}\label{sec:subspace-amplification}

The rotationally invariant encoding discussed previously, as well as previous work on permutation invariant encodings \cite{heredge2023permutation} have been implemented in a binary fashion: the initial quantum state is either projected to a fully permutation invariant state or remains unchanged. Such an implementation significantly reduces the dimensionality of the encoding, yielding benefits in certain scenarios through improved generalisation of the model. This section explores the potential advantages of introducing a continuous spectrum of invariance with respect to some symmetry, rather than adhering to binary extremes, and investigates whether optimising this aspect as a hyperparameter could enhance classification performance in quantum machine learning (QML) models. 

Recent studies suggest that permutation equivariant variational circuits may be classically tractable under specific conditions \cite{anschuetz_efficient_2023, goh2023liealgebraic}, indicating that a drastic reduction in dimensionality might increase the likelihood of classical simulatability. This raises concerns that fully projecting to a polynomially sized subspace could render algorithms susceptible to classical simulation. To address these issues, we propose a novel approach that involves partially amplifying the portion of the quantum state into a given invariant space. Specifically, we will consider amplifying the permutation invariant subspace, although any irreducible subspace could be chosen. Apart from edge cases, the quantum state will still in general be exponential in dimension but the permutation symmetric subspace will have a higher weighting compared to all other subspaces; the amount of permutation symmetry amplification is governed by a hyperparameter $\alpha$ which will therefore also affect the expressivity of the encoding. This method provides enhanced control over training performance and facilitates the tuning of the model to achieve optimal results. For data types with inherent permutation symmetry, such as point cloud data, the optimal global classification function must also exhibit permutation invariance. However, the specific outcome of a QML model depends on the circuit architecture, classification protocol, and inherent limitations of QML models, suggesting that a purely permutation invariant QML algorithm may not necessarily be the closest to the global solution. We hypothesise that by incrementally adjusting the degree of permutation invariance in the model, we can converge towards a value that is closer to the true global optimum performance than is possible with an non-invariant, or fully permutation invariant model. This concept is illustrated in Figure~\ref{fig/explanation_figure}.

\begin{figure}[h]%
\centering
\includegraphics[width=1\linewidth]{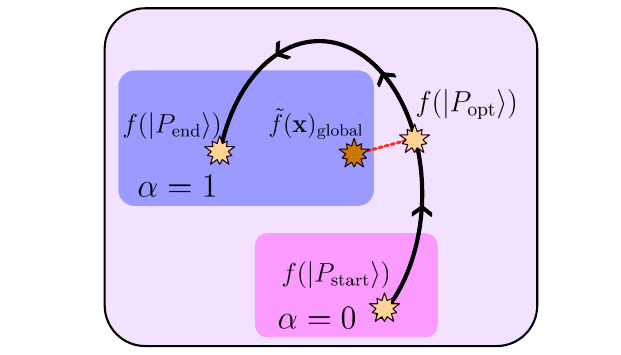}
\caption{Diagram providing a conceptual overview of the symmetric subspace amplification motivation. We start at $\alpha = 0$ with a state encoding $\ket{P_{\text{start}}} = \bigotimes_{i=1}^n \ket{\mathbf{p}_i}$ corresponding to an initial state without symmetry amplification, which is input to some classification model $f(\cdot)$. As $\alpha$ increases, the state moves through the solution space until finally we have a fully symmetric state at $\alpha = 1$ where the input state will be equal to $\ket{P_{\text{end}}} = \frac{1}{\Omega'} \sum_{\sigma \in S_n}\ket{\textbf{p}_{\sigma_1}}\ket{\textbf{p}_{\sigma_2}}...\ket{\textbf{p}_{\sigma_n}}$ as initially proposed in \cite{heredge2023permutation}. If we are dealing with data that is known to be permutation symmetric, then the global optimal model for the input data $\tilde{f}(\mathbf{x})_{\text{global}}$ (which could be classical or quantum) must be within the fully permutation symmetric subspace of models. However, the precise model that yields the global optimum remains unknown and may not correspond to the function $f(\cdot)$ used in this particular case, suggesting that an intermediate value of $\alpha$ might align more closely with the global optimum than the fully symmetric projection at $\alpha = 1$. In the figure depicted, the red line illustrates the deviation between the global model solution and the nearest solution obtained at an intermediate value of $\alpha$ in which the input state is denoted $\ket{P_{\text{opt}}}$ .
 }\label{fig/explanation_figure}
\end{figure}

The framework introduced previously allows the implementation of any linear combination of projections to an irreducible representation subspace with full control of the ratios of these projections. One can consider decomposing a state into its irreducible representation components, as shown in Figure~\ref{fig/state_amplification} and gaining control over the relative ratios between the amplitude components, allowing them to be adjusted to suit the model and data.

\begin{figure}[h]%
\centering
\includegraphics[width=0.8\linewidth]{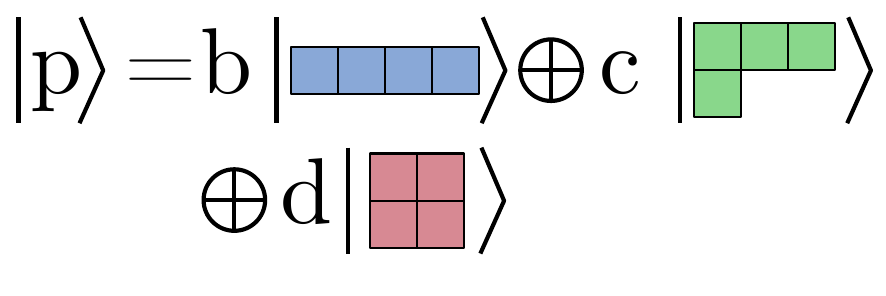}
\caption{A state can be viewed as a linear combination of its projection onto its irreducible representations. Utilising this projection algorithm we are able to change the relative weightings of $b, c$ and $d$, effectively amplifying and reducing the portion of the original state that lies in the respective irreducible representation subspace. }\label{fig/state_amplification}
\end{figure}

Via the LCU framework we previous showed in Theorem~\ref{thm:lcu_irrep_projection_general} that it is possible to implement a projection onto a quantum state in the following form
\begin{equation}
  \sum_r a_r \hat{P}_r \ket{\psi},
\end{equation}
where the $a_r$ parameters can be controlled via the initial state of the ancilla qubits as
\begin{equation}
  \ket{c} = \frac{1}{\Omega}\sum_{r=1}^{R} a_r n_r \ket{b_r}.
\end{equation}
In order to implement the symmetric subspace amplification algorithm, one can define a parameter $\alpha$ that parameterises the amount of symmetry in the encoding. Then we can assign the following $a_r$ values 
\begin{equation}\label{a_values}
  a_r = 
  \begin{cases}
    a_1 = 1 &\text{ for } r=1 \\
    a_r = (1 - \alpha) & \text{ for } r \neq 1 
  \end{cases}
\end{equation}
Hence, the initial ancilla state will be of the form 
\begin{equation}
  \ket{c} = \frac{1}{\Omega}\big(n_1 \ket{b_1} + (1 - \alpha)\sum_{r=2}^{R} n_r  \ket{b_r} \big).
\end{equation}
The $\alpha$ term can be varied to adjust the amount of permutation symmetry. When $\alpha = 0$ this corresponds to an equal projection on all irreducible representations, since the resulting operation is an equal projection onto all subspaces (weighted by the dimension of the subspace) and hence will leave the state unchanged. In contrast, the case where $\alpha = 1$ will result in projection onto the symmetric subspace only. As $\alpha$ increases, this corresponds to continuously amplifying the portion of the state that lies in the symmetric subspace.

In order to investigate the potential of symmetric subspace amplification, we implemented the model proposed here utilising the aforementioned technique alongside Qiskit \textit{statevector\_simulator} \cite{qiskit}. This was used to encode point cloud data and perform classification on the sphere and torus point cloud dataset as specified in \cite{heredge2023permutation}. More details can be found in Appendix~\ref{sec:symmetric-amplificaiton-numeric-details}. 

We record the average test accuracy over 10 experiments for a given $\alpha$ and then vary $\alpha$ to see which value is optimal for the data. The results presented in Figure~\ref{fig/accuracy_vs_mixing_with_dimension} illustrate that increasing the degree of symmetry yields the highest test accuracy at an intermediate value, demonstrating the potential advantage of this technique. Moreover, it is observed that while a fully symmetric projection ($
\alpha=1$) surpasses the performance of the non-symmetrised configuration ($\alpha = 0$), there is a significant reduction in dimensionality as we approach a fully symmetric state. This reduction corresponds to projecting onto a polynomial size space for the case of symmetric subspaces \cite{heredge2023permutation}. The $\alpha$ parameter is the ratio of the symmetric subspace to all other subspaces. In general one could select any relative weighting for all the irreducible subspaces, at the cost of having more parameters to optimise. We envisage that in practice a grid search would be used to find these hyperparameter values and that they would be optimised using a cross validation set, allowing final classification performance to be assessed on an unseen testing set. 

A key concern with fully symmetric states is their potential for efficient classical simulation due to this polynomially sized space. In contrast, the technique of partial symmetrisation not only achieves higher accuracy but also resides in a higher-dimensional space, possibly making it more challenging to simulate classically. While computer vision data can consist of thousands of points, there are example datasets that may be more suitable in the near term, such as particle physics collision classification tasks in which there may be less than 10 particles (points) in a given event \cite{Heredge21}. This would likely be a nearer term goal for the application of point cloud specific algorithms, although the method described in this paper could be applied to many symmetries across various different data types.

\begin{figure}[h]%
\centering
\includegraphics[width=1\linewidth]{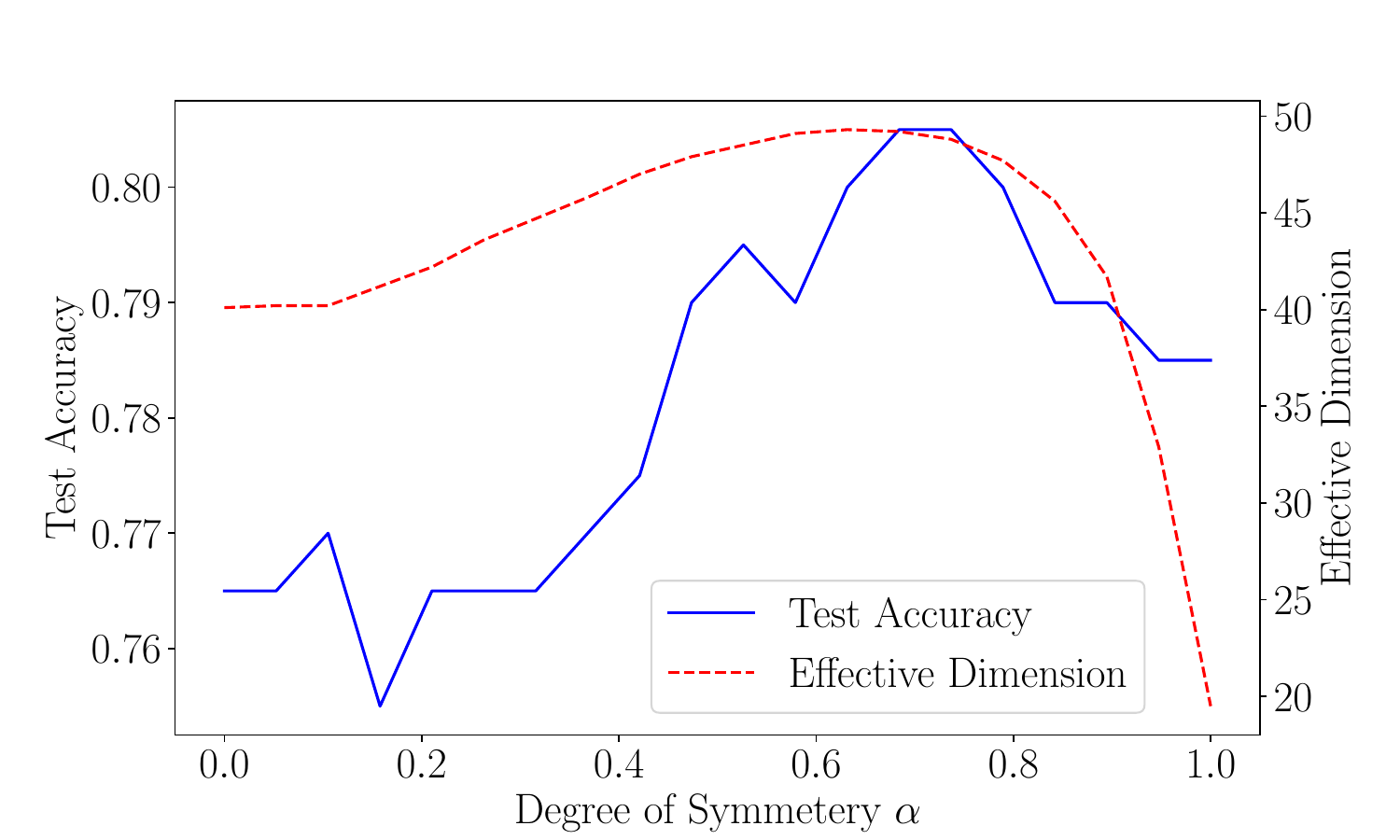}
\caption{Test accuracy and effective dimension as the degree of permutation symmetry is varied. Result is the average over 10 repeated experiments, each with 100 data points each of the sphere and torus classification dataset using quantum kernel estimation followed by a Support Vector Machine classification \cite{havlicek_supervised_2019, heredge2023permutation, Heredge21}. The effective dimension is found by explicitly calculating the higher dimensional encoding, performing principal component analysis, and calculating the amount of dimensions required to explain 95\% of the variance of the dataset. We show that an intermediate value of permutation symmetry around $\alpha = 0.7$ results in the best performance for this classification.}\label{fig/accuracy_vs_mixing_with_dimension}
\end{figure}

\section{Conclusion}

In this study, we have demonstrated the applicability and flexibility of the Linear Combination of Unitaries method in enhancing Quantum Machine Learning architectures. Our work provides several implementations of classical machine learning structures within the quantum domain, achieving not only a foundational translation of these concepts but also demonstrating potential for computational advantages.

The implementation of quantum ResNet demonstrates a procedure that can avoid the trainability issues associated with barren plateaus by allowing shallow depth components to survive in the final loss function. It is still a question of further research as to understanding the exact dynamics of the non-unitary terms. We demonstrated that by parameterising the strength of the residual connections through $\beta_l$ terms, one can increase the lower bound probability of success of the LCU method. This provides a possible avenue to tackle the key problem of the LCU method, which is that the implementations can only be performed probabilistically.

We also demonstrated an implementation of quantum native average pooling layers commonly applied in convolutional neural networks. These layers have demonstrated success in classical techniques and therefore have the potential to improve quantum convolutional neural networks. While classically one would need to calculate $\mathcal{O}(N^2)$ averages, the quantum parallelism inherent in our technique applies the averaging to all pixels (which are amplitude encoded into a quantum state) simultaneously. By generalising our work further, we can recreate quantum convolutional filters as has been previously proposed \cite{wei2021quantumconvolutionalneuralnetwork}, while utilising exponentially fewer controlled unitary operations.

Finally, we demonstrated an integration of irreducible representation projections into a quantum framework, allowing for the encoding of data symmetries directly into the quantum state, which has the potential to enhance the model's generalisation capabilities across various data structures. This general irreducible subspace projection framework includes previous work regarding permutation invariant encodings for point clouds \cite{heredge2023permutation} as a special case, while in Section~\ref{sec:rotat-invariant-encoding} we introduced a novel rotationally invariant encoding for point cloud data using the $S_4$ group. We further introduced in Section~\ref{sec:subspace-amplification} a method of parameterising the amount of symmetry in an encoding as a tuneable hyperparameter. This allows the quantum state to remain in an exponentially large space, while certain subspaces are amplified relative to each other, helping to mitigate some of the simultability concerns associated with fully projecting to a polynomially sized subspace. This added flexibility resulted in an improved performance for point cloud data when encoding an intermediate amount of permutation symmetry when compared to either a fully permutation invariant or noninvariant encoding. These were illustrative examples of how the framework could be utilised and there could be many possible symmetries and datasets for which this framework could be similarly adapted and optimised.

As the field of quantum computing continues to mature, the methods and frameworks presented here offer promising avenues for developing more sophisticated quantum algorithms that leverage both the computational benefits of quantum mechanics and the established successes of classical machine learning architectures. By utilising LCU methods we have shown non-unitary operations can be applied in a QML setting; this research not only extends the theoretical possibilities and flexibility of QML algorithms but also provides practical frameworks for their application, setting the stage for further evolution of quantum machine learning models.

\section*{Author Contributions}

J. Heredge envisaged the project, initially focused on irreducible subspace projections, with M. Sevior and L. Hollenberg as an extension of their previous work \cite{heredge2023permutation}. J. Heredge devised the ResNet and Average Pooling layers applications. The theorems and numerical simulations were developed by J. Heredge with the help of M. West. M. Sevior and L. Hollenberg supervised the overall project. All authors contributed in reviewing the manuscript.

\section*{Acknowledgements}

This research was supported by the Australian Research Council from grant DP210102831. J. Heredge acknowledges the support of the Australian Government Research Training Program Scholarship and the support of the N.D. Goldsworthy Scholarship. This work was supported by the University of Melbourne through the establishment of an IBM Quantum Network Hub at the University. We thank Dougal Davis for early discussions on representation theory that helped aid the development of the irreducible subspace projection work. We also thank Giulio Crognaletti for discussions on current quantum ResNet literature.

\begin{appendices}

\section{Conjugacy Class Implementation}\label{apn:improvement}

In the implementation of irreducible subspace projections detailed in Section~\ref{sec:irrep_projections}, an operator $\hat{\chi}$ was required that contained the character of every element in the group $G$. This meant that the dimension of the matrix scaled with $\rvert G \rvert$, which can be problematic for some groups, such as the permutation group $S_n$ where $\rvert S_n \rvert = n!$. To tackle this, in this section we will consider a unitary operator $\tilde{\chi}$ that contains the character for each conjugacy class in the group, with additional ancilla qubit registers per conjugacy class prepared in an equal superposition that then acts to distribute these characters to the corresponding group elements. This effectively only requires encoding the character table of the group into a unitary operator, and hence $\tilde{\chi}$ scales with the number of conjugacy classes of $G$. This has the potential to drastically reduce the size of the unitary that must be applied, at the expense of requiring additional ancilla qubits. We show that utilising this circuit structure recreates the result of Theorem~\ref{thm:lcu_irrep_projection_general} and is therefore a valid alternative approach.

The character orthogonality theorem \cite{Huppert1998} stated previously in Equation~\ref{eqn:charorththm} relates to a sum over all the elements of the group. As the character is the same for all elements in a conjugacy class, this can be adjusted to a sum over all conjugacy classes $\nu_i \in \mathcal{C}$ by introducing a term $d_{\nu_i} = \rvert \nu_i \rvert$ that represents the number of group elements in the conjugacy class $\nu_i$. Therefore, the character orthogonality theorem can be written
\begin{equation}
  \frac{1}{\rvert G \rvert }\sum_{{\nu_i} \in \mathcal{C}} d_{\nu_i} \chi_j({\nu_i}) \chi_k({\nu_i})^* = \delta_{jk}.
\end{equation}
Therefore, to create a matrix with orthonormal columns (to ensure $\tilde{\chi}$ is unitary) the characters for a conjugacy class $\nu$ must be weighted by square root of the number of elements in the conjugacy class $\sqrt{d_{\nu_i}}$. Hence, we require that
\begin{equation}
\tilde{\chi}\ket{b_r} = \frac{1}{\sqrt{\rvert G \rvert}}\sum_{{\nu_i} \in \mathcal{C}} \sqrt{d_{\nu_i}} \chi_r({\nu_i})^* \ket{{\nu_i}},
\label{eqn:charoperator-alternative}
\end{equation}
where in this case the conjugacy classes $\nu_i \in \mathcal{C}$ are labelling the basis states $\ket{\nu_i} \equiv \ket{b_i}$ defined as $\{ \ket{b_j} \}_{j \in [1, 2^k]} \in (\mathbb{C}^2)^{\otimes k}$ the basis states of the $2^k$ dimensional Hilbert space for the $k$ qubit ancilla register denoted by $\mathcal{H}= (\mathbb{C}^2)^{\otimes k}$. These can be taken to be the computational basis states.

If we consider $S_3$ with its character table given in Table~\ref{tab:char_table_s3} then we see that in this case the appropriate unitary matrix to be constructed would be
\begin{equation}
  \tilde{\chi} = \frac{1}{\sqrt{6}}
  \begin{bmatrix}
    1 & 2 & 1 & 0 \\
    \sqrt{3} & 0 & -\sqrt{3} & 0\\
    \sqrt{2} & -\sqrt{2} & \sqrt{2} & 0 \\
    0 & 0 & 0 & 1
  \end{bmatrix}
\end{equation}
\begin{table}[]
\caption{Character table for the group $S_3$. There are three irreducible representations with their characters denoted by $\chi_1, \chi_2, \chi_3 $. The conjugacy classes of $S_3$ are $C_{(I)} = \{ I \}$, $C_{(12)} = \{ \sigma_{12}, \sigma_{13}, \sigma_{23} \}$ and $C_{(123)} = \{ \sigma_{231}, \sigma_{312} \}$. 
   \vspace{0.2cm}}
  \centering
    \begin{tabular}{|l||l|l|l|}
    \hline
     Conjugacy Class $\nu_i$& $C_{(I)}$& $C_{(12)}$& $C_{(123)}$\\
     \hline
     Number of Elements $d_{\nu_i}$&
     1 & 3 & 2 \\\hline  
     $\chi_1$& 1 & 1 & 1\\
     $\chi_2$& 2 & 0 & -1\\
     $\chi_3$& 1 & -1 & 1\\     
    \hline
    
    \hline
    \end{tabular}
  \label{tab:char_table_s3}
\end{table}

This successfully allows for the construction of a unitary matrix $\tilde{\chi}$ that scales in size with the number of conjugacy classes as opposed to the number of group elements. We therefore have satisfied the constraint that $\tilde{\chi}$ is unitary and can be implemented on a quantum device. In general $\tilde{\chi}$  can be written
\begin{equation}
\tilde{\chi} = \frac{1}{\sqrt{\rvert G \rvert}}\sum_{r=1}^R \sum_{\nu_i \in \mathcal{C}} \sqrt{d_{\nu_i}} \chi_r({\nu_i})^* \ket{{\nu_i}} \bra{b_{r}} + \sum_{r=R+1}^{2^k}(...)\bra{b_r},
\label{eqn:chartildeoperator-full}
\end{equation}
where $(...)$ collects terms associated with $\ket{b_r}, r > R$ which will not be used in the construction.

Similarly to the main text, the first step is the pre-initialisation of the ancilla qubits to some combination of states that index the different irreducible representations
\begin{align}
\hat{\gamma}\ket{b_1}\ket{\psi} =\frac{1}{\Omega} \sum_{r =1}^R a_r n_r \ket{b_r} \ket{\psi},
\end{align}
where $\ket{b_1}$ is the initial state of the ancilla qubit register, usually assumed to be $\ket{0}^{\otimes k}$. By applying $\tilde{\chi}$ to this pre-initialised ancilla state, one can write
\begin{align}
 &\tilde{\chi}\hat{\gamma}\ket{b_1} \ket{\psi} = \tilde{\chi} \frac{1}{\Omega} \sum_{r =1}^R a_r n_r \ket{b_r} \ket{\psi} \nonumber \\
 & = \frac{1}{ \Omega \sqrt{ \rvert G \rvert}}\sum_{r =1}^R a_r n_r \sum_{\nu_i \in \mathcal{C}} \sqrt{d_{\nu_i}} \chi_r({\nu_i})^* \ket{{\nu_i}} \ket{\psi} ,
\end{align}
where as previously stated $\ket{\nu_i} \equiv \ket{b_i}$ are the same basis states, usually taken to be the computational basis states, but we have switched from $\ket{b_r}$ being indexed by the irreducible representation $r$, to the basis states being labelled by the conjugacy class $\nu_i$. This contrasts to $\hat{\chi}$ in the main text, where the group elements label the ancilla states. After application of $\tilde{\chi}$ the ancilla qubits now have basis states corresponding to the conjugacy classes $\ket{\nu_i}$. The selection operator requires the creation of additional qubit ancilla registers, denoted by $\ket{\omega_{\nu_i}}$, for each conjugacy class which are encoded into an equal superposition of $d_{\nu_i} = \rvert {\nu_i} \rvert$ states
\begin{equation}
  \ket{\omega_{\nu_i}} = \frac{1}{\sqrt{d_{\nu_i}}}\sum_{l_{\nu_i} = 1}^{d_{\nu_i}} \ket{l_{\nu_i}},
\end{equation}
where $\ket{l_{\nu_i}}$ are basis states $\{ \ket{l_{\nu_i}} \} \in (\mathbb{C}^2)^{\otimes k_{\nu_i}}$ of the $2^{k_{\nu_i}}$ dimensional Hilbert space for the $k_{\nu_i}$ qubit ancilla register $\ket{\omega_{\nu_i}}$, which can be taken to be the computational basis states. The groups elements $l_{\nu_i}$ that are contained in the conjugacy class $\nu_i$ now label the basis states in register $\ket{\omega_{\nu_i}}$. Each register has the requirement $k_{\nu_i} \geq \log(d_{\nu_i})$ to ensure that there are sufficient qubits in the register such that each element in ${\nu_i}$ has a corresponding basis state $\ket{l_{\nu_i}}$. The $\ket{\omega_{\nu_i}}$ are prepared on a $k_{\nu_i}$ qubit register which is initially in a basis state $\ket{b_1^{\nu_i}}$ through the operation
\begin{equation}
    \hat{\omega}_{\nu_i} = \frac{1}{\sqrt{d_{\nu_i}}}\sum_{l_{\nu_i} = 1}^{d_{\nu_i}} \ket{l_{\nu_i}} \bra{b_1^{\nu_i}} + \sum_j(...)\bra{b_j^{\nu_i}},
\end{equation}
where $(...)$ summarises terms associated with $\bra{b_j^{\nu_i}}, j \geq 2$ which will not be present in any calculations.

Overall the operations up to this point can be written as
\begin{align}
 \bigotimes_{{\nu_i}'} \ket{\omega_{{\nu_i}'}} & \otimes \tilde{\chi}\hat{\gamma}\ket{b_1} \otimes \ket{\psi} \nonumber \\
 = \frac{1}{\Omega \sqrt{\rvert G \rvert}} & \Big(  \big( \bigotimes_{{\nu_i}'} \frac{1}{\sqrt{d_{{\nu_i}'}}} \sum_{l_{{\nu_i}'} = 1}^{d_{{\nu_i}'}} \ket{l_{{\nu_i}'}} \big) \nonumber \\
 & \otimes \sum_{r =1}^R a_r n_r \sum_{\nu_i \in \mathcal{C}} \sqrt{d_{\nu_i}} \chi_r({\nu_i})^* \ket{{\nu_i}} \ket{\psi} \Big).
\end{align}
The controlled operations are performed similarly to the main text, except in this case the group element $g \in G$ is indexed by both ${\nu_i}$ which determines the conjugacy class and $l_{\nu_i}$ which indexes the element within the given conjugacy class. In this case $S_\text{SELECT} \equiv \hat{S}$ will be controlled by both the conjugacy class state $\ket{{\nu_i}}$ and the element selector within the conjugacy class $\ket{l_{\nu_i}}$ of the register $\ket{\omega_{\nu_i}}$ associated with the conjugacy class ${\nu_i}$. The action of this selection operator can be defined as
\begin{equation}
  \hat{S} \ket{l_{{\nu_i}}} \ket{{\nu_i}} \ket{\psi} = \ket{l_{{\nu_i}}} \ket{{\nu_i}} U_{{\nu_i}, l_{{\nu_i}}} \ket{\psi},
\end{equation}
which is implemented by each unitary $U_{{\nu_i}, l_{{\nu_i}}}$ being controlled by both the $ \ket{l_{{\nu_i}}} $ and $\ket{{\nu_i}}$ states. Applying this operator one can see the effect is 
\begin{align}
 & \hat{S} \big( \bigotimes_{{\nu_i}'} \ket{\omega_{{\nu_i}'}} \otimes \tilde{\chi}\hat{\gamma}\ket{b_1} \otimes \ket{\psi} \big)
 \nonumber \\
 &= \frac{1}{\Omega \sqrt{\rvert G \rvert}} \Big( \sum_{\nu_i} \big( \bigotimes_{{\nu_i}' \neq {\nu_i}} \frac{1}{\sqrt{d_{{\nu_i}'}}}\sum_{l_{{\nu_i}' } = 1}^{d_{{\nu_i}' }} \ket{l_{{\nu_i}'}}\big) \nonumber \\
 & \otimes \sum_{r =1}^R \sum_{l_{\nu_i} = 1}^{d_{\nu_i}} a_r n_r\chi_r({\nu_i})^* \ket{l_{\nu_i}}\ket{{\nu_i}} U_{{\nu_i}, l_{\nu_i}} \ket{\psi} \Big) \nonumber \\
 & = \sum_{\nu_i} \bigotimes_{{\nu_i}' \neq {\nu_i}} \ket{\omega_{{\nu_i}'}}
 \sum_{r =1}^R \sum_{l_{\nu_i} = 1}^{d_{\nu_i}}  \frac{a_r n_r \chi_r({\nu_i})^*}{\Omega \sqrt{\rvert G \rvert}}  \ket{l_{\nu_i}}\ket{{\nu_i}} \nonumber \\
 & \hspace{1.4cm} \otimes U_{{\nu_i}, l_{\nu_i}} \ket{\psi} ,
\end{align}
where we are being flexible with the precise positioning in the tensor product of the unused registers $\ket{\omega_{{\nu_i}' \neq {\nu_i}}}$ for each ${\nu_i}'$ term for the sake of readability. The notation we are using is that for a given $\nu_i$ element in the sum over all conjugacy classes, all $\ket{\omega_{\nu_i'}}$ registers which will not affect the $S_{\text{SELECT}}$ operator because ${\nu_i}' \neq {\nu_i}$ are written first, with the case ${\nu_i}' = {\nu_i}$ written to the right of this. In reality the register position will be different for each $\nu_i$, but this becomes difficult to denote in the notation.

We can now uncompute all $\hat{\omega}_{\nu_i}$ registers by noting that
\begin{equation}
    \hat{\omega}_{\nu_i}^\dagger = \frac{1}{\sqrt{d_{\nu_i}}}\sum_{l_{\nu_i} = 1}^{d_{\nu_i}}  \ket{b_1^{\nu_i}} \bra{l_{\nu_i}} + \sum_j(...)\ket{b_j^{\nu_i}}.
\end{equation}
This means that if we require these registers to be measured in the $\ket{b_1^{\nu_i}}$ then we can ignore the terms summarised by $(...)$ associated with $\ket{b_j^{\nu_i}}, j \geq 2$. Application of $\hat{\omega}_{\nu_i}^\dagger$ to all such registers gives
\begin{align}
 & \bigotimes_{{\nu_i}} \hat{\omega}_{\nu_i}^\dagger \hat{S} \big( \bigotimes_{{\nu_i}'} \ket{\omega_{{\nu_i}'}} \otimes \tilde{\chi}\hat{\gamma}\ket{b_1} \otimes \ket{\psi} \big)
 \nonumber \\
 & =  \sum_{\nu_i} \bigotimes_{ {\nu_i}} \ket{b_1^{\nu_i}}
 \sum_{r =1}^R \sum_{l_{\nu_i} = 1}^{d_{\nu_i}}  \frac{a_r n_r \chi_r({\nu_i})^*}{\Omega \sqrt{ d_{\nu_i}\rvert G \rvert}}  \ket{{\nu_i}}  \otimes U_{{\nu_i}, l_{\nu_i}} \ket{\psi}  \nonumber \\
 & + \sum_{j \geq 2}(...)\ket{b_j^{\nu_i}}
 \end{align}

 We measure the registers which previously contained $\hat{\omega}_{\nu_i}$ states and discard unless we measure all $\ket{b_1^{\nu_i}}$ states $\forall \nu_i$. This discards the terms collected by the $(...)$ in the previous equation.

The final steps consist of applying $\tilde{\chi}^\dagger$ and measuring the conjugacy class $\ket{{\nu_i}}$ ancillas to be in the initial state $\ket{b_1}$. Due to the fact that the state $\ket{b_1}$, which is commonly assumed to be $\ket{0}^{\otimes k}$, is used to index the first representation, we can write
\begin{equation}
\bra{b_1}\tilde{\chi}^\dagger = \frac{1}{\sqrt{\rvert G \rvert}}\sum_{{\nu_i} \in \mathcal{C}} \sqrt{d_{\nu_i}} \chi_1({\nu_i}) \bra{{\nu_i}}.
\label{eqn:charoperator-alternative-bra}
\end{equation}
where noticeably $\chi_1({\nu_i}) = 1, \forall {\nu_i}$ due to the fact that we are free to define $r=1$ to always correspond to the trivial representation when constructing $\tilde{\chi}$. Applying this operator to the conjugacy class ancilla qubits in the circuit therefore results in

\begin{align}
 &\Big( \bra{b_1}\tilde{\chi}^\dagger \Big) \cdot \Big( \bigotimes_{{\nu_i}} \bra{b_1^{\nu_i}} \hat{\omega}_{\nu_i}^\dagger   \hat{S} \big( \bigotimes_{{\nu_i}'} \ket{\omega_{{\nu_i}'}} \otimes \tilde{\chi}\hat{\gamma}\ket{b_1} \otimes \ket{\psi} \big) \Big)
 \nonumber \\
 & = \frac{1}{\Omega' }\sum_{\nu_i}
 \sum_{r =1}^R \sum_{l_{\nu_i} = 1}^{d_{\nu_i}}  a_r \frac{n_r}{\rvert G \rvert} \chi_r({\nu_i})^*  U_{{\nu_i}, l_{\nu_i}} \ket{\psi},
\end{align}
where $\Omega' = \Omega \sqrt{\pi_S}$ is the normalisation constant of the final state. Due to the fact that conjugacy classes partition the group $G$, the labels of the conjugacy class ${\nu_i}$ along with the labels of the elements within each conjugacy class $l_{\nu_i}$ will uniquely index each group element. In addition, all elements in the same conjugacy class have the same character i.e., $\chi_r(g_1) = \chi_r(g_2)$ for $g_1, g_2 \in {\nu_i}$. Hence, we can relabel the above expression in terms of group elements $g$, rather than ${\nu_i}$ and $l_{\nu_i}$, to explicitly see that it can be written as
\begin{align}
  \frac{1}{\Omega }
 \sum_{r =1}^R a_r \frac{n_r}{\rvert G \rvert} \sum_{g \in G} \chi_r(g)^*  U_{g} \ket{\psi}.
\end{align}
This is the precise form of Equation~\ref{eqn:theorem-lcu-project} in Theorem~\ref{thm:lcu_irrep_projection_general}. Hence, we have provided another framework that satisfies the theorem in the main text. As the final step consisted only of reordering the indexing, it follows that probability of success $\pi_S$ will be the same as in the main text.

The dimension of $\tilde{\chi}$ scales with the number of conjugacy classes of the group $G$ as opposed to $\hat{\chi}$ in the main text, which scales with the number of elements in the group. This means that $\tilde{\chi}$ could potentially be much easier to implement than $\hat{\chi}$. However, a caveat is that an additional ancillary qubit register $\ket{\omega_{\nu_i}}$ is required for each conjugacy class ${\nu_i}$, where each register will contain a number of qubits of the order $\mathcal{O}(\log(d_{\nu_i}))$. This presents a trade-off between unitary implementation difficulty and the number of ancilla qubits required. The true difficulty in implementing either unitary will ultimately depend on the group, but we provide both of these general techniques as possible starting points for future implementations.

\section{Rotational Invariance via Schur-Weyl}\label{sec:rot-invariance-schur-weyl}

In this section, we demonstrate how certain irreducible representations of $S_n$ could correspond to rotationally invariant encoding states when using the encoding specified in Section~\ref{sec:rotat-invariant-encoding}. Following the work and explanations of \cite{Kirby_2018, ragone2023representation} and in particular the thesis \cite{kirby_thesis} we reproduce an overview of Schur-Weyl duality and the concept of the Schur basis and add discussion at points linking this to how certain irreducible representation subspaces would correspond to rotationally invariant point cloud encodings in the case considered in Equation~\ref{eqn:rotat-inv-encoding}.

\subsection{Introduction to Schur Transforms}

The rotational invariant point cloud encoding we propose in Section~\ref{sec:rotat-invariant-encoding} relies on a property of the probabilistic irreducible subspace projection circuit when considering the group $S_n$, which in effect performs a projection to basis states of the Schur basis. Circuits that implement quantum Schur transforms have previously been introduced, such as those that implement them through the unitary group by iteratively applying the Clebsch-Gordan transforms \cite{bacon_efficient_2006} and alternative methods that implement it by considering the action of the symmetric group using quantum Fourier transforms \cite{krovi_efficient_2019}. Indeed, further investigation on whether the use of these existing algorithms could be adapted to more efficiently refine our procedure would be an interesting further research direction, as our implementation derives from the most general case of Theorem~\ref{thm:lcu_irrep_projection_general} which applies to any group $G$ and therefore may not be the most efficient in practice. 

The Schur transform is related to the concept of Schur-Weyl duality. In general, we can consider $n$ qudits of dimension $d$ inside a vector space $(\mathbb{C}^d)^{\otimes n}$ with a computational basis written as $\ket{i_1}\otimes\ket{i_2}\otimes...\otimes\ket{i_n}$. There are two representations on this space that are related to each other by Schur-Weyl duality. One of the representations is that of the Symmetric Group $S_n$, whose elements consist of all possible permutations of $n$ objects. The representation of $S_n$ on this vector space would simply consist of permuting the qudits. For a given group element $\sigma \in S_n$ one can write the action of the representation $P(\sigma)$ on this vector space as
\begin{equation}
  P(\sigma)\ket{i_1}\otimes...\otimes\ket{i_n} = \ket{i_{\sigma^{-1}(1)}}\otimes...\otimes\ket{i_{\sigma^{-1}(n)}}.
\end{equation}
Hence, each $\sigma \in S_n$ simply corresponds to a different permutation of the $n$ qudits. 

The second group to consider is the group of all $d \times d$ unitary operators $U \in \mathcal{U}_d$. As each qudit forms a $d$-dimensional vector, we see that the natural representation $Q(U)$ of the group $\mathcal{U}_d$ acting on the vector space $(\mathbb{C}^d)^{\otimes n}$ would correspond to the $n$-fold product action, where the same $U$ is applied to each qudit. This can be written as
\begin{equation}
  Q(U)\ket{i_1}\otimes...\otimes\ket{i_n} = U\ket{i_1}\otimes...\otimes U\ket{i_n}.
\end{equation}
The two actions $Q(U)$ and $P(\sigma)$ are fully reducible and hence can be written as a direct sum of their irreducible representations
\begin{equation}
  P(\sigma) \rightarrow \bigoplus \bm{I}_{n_\alpha} \otimes p_\alpha(\sigma),
\end{equation}

\begin{equation}
  Q(U) \rightarrow \bigoplus \bm{I}_{m_\beta} \otimes q_\beta (U).
\end{equation}
Note that the actions $P(\sigma)$ and $Q(U)$ commute with each other. The Schur-Weyl duality therefore states that there exists a basis which simultaneously decomposes the action of $P(\sigma)$ and $Q(U)$ into irreducible representations as
\begin{equation}
  Q(U)P(\sigma) \rightarrow \bigoplus_{\lambda} q_\lambda (U) p_\lambda (\sigma).
\end{equation}
We can write this Schur basis as $\ket{\lambda} \ket{q_\lambda} 
 \ket{p_\lambda}_{Sch}$ which decomposes the action of $P(\sigma)$ and $Q(U)$ as
 \begin{equation}
   Q(U) \ket{\lambda} \ket{q_\lambda} \ket{p_\lambda}_{Sch} = \ket{\lambda} (q^d_\lambda (U) \ket{q_\lambda}) \ket{p_\lambda}_{Sch},
 \end{equation}
 \begin{equation}
   P(\sigma) \ket{\lambda} \ket{q_\lambda} \ket{p_\lambda}_{Sch} = \ket{\lambda} \ket{q_\lambda} (p_\lambda (\sigma) \ket{p_\lambda})_{Sch}.
 \end{equation}
This action will decompose the vector space $(\mathbb{C}^d)^{\otimes n}$ as
\begin{equation}
  (\mathbb{C}^d)^{\otimes n} \rightarrow \bigoplus_{\lambda \in \Lambda} \mathcal{Q}_\lambda ^d \otimes \mathcal{P}_\lambda.
\end{equation}
Note that since these subspaces $\mathcal{Q}_\lambda ^d$ and $\mathcal{P}_\lambda$ are irreducible, the action of the representation $Q(U)$ or $P(\sigma)$ on a vector in spaces will keep the vector in that space. We will focus on building the Schur basis through considering the action of $P(\sigma)$ for $\forall \sigma \in S_n$.

\subsection{Rotational Invariance for Two Points is Trivial}

If we wish to find states that are rotationally invariant as we rotate points in a 3 dimensional point cloud, we should first consider what this means on the point cloud data. Each point can be represented by a vector in $\mathcal{R}^3$. If the entire point cloud is rotated, this amounts to every point being transformed by the same rotation, which is equivalent to applying a rotation matrix from the group $SO(3)$ to each point.

A natural way we choose to represent rotations of points in a quantum encoding is as rotations on the Bloch sphere representation of a qubit as was detailed in Section~\ref{sec:rotat-invariant-encoding}. This is further motivated by the fact that $SU(2)$ is the double cover of $SO(3)$, which means that every point in $SO(3)$ corresponds to two points in $SU(2)$, which means that we can use $SU(2)$ to model rotations in three-dimensional space classified by $SO(3)$. To produce an encoding like this, we simply need to apply the parameterised rotation gates $R_x(\theta)$ and $R_z(\phi)$ to an initial $\ket{0}$ state, where $(\theta, \phi)$ are the angle coordinates of a point in radial coordinates. We are now left with the angular direction of the point represented in the Bloch sphere of a single qubit. If we do this for all $n$ points in the point cloud, we will have $n$ qubits. A rotation of the entire point cloud can now be represented by the application of any $SU(2)$ unitary to every qubit simultaneously. This is a tensor representation of the $SU(2)$ group and can be written explicitly as $\bigotimes_{i=1}^n U$ for $U \in \mathrm{SU}(2)$. 

To further study this, we shall resort to the case consisting of only two points $n=2$. We again follow standard explanations from reference texts \cite{Kirby_2018, ragone2023representation, kirby_thesis}  while relating these points back to the rotationally invariant encoding we propose throughout the discussion. In this case rotation of the point cloud corresponds to applying the representation $U \otimes U; U \in SU(2)$, to the two qubits. Note that this construction commutes with the SWAP operator $[U \otimes U, \mathrm{SWAP}] = 0$, where SWAP corresponds to permuting the two qubits. This fact means that $U \otimes U$ and SWAP can be simultaneously diagonalised in a common basis.

If we consider two qubits such that $V=(\mathbb{C}^2)^{\otimes 2}$ and the group $S_2$ then a natural representation on two qubits would be $\{\mathbb{I}, SWAP\}$. If we change from the computational basis to the Schur basis $ \{ \ket{00}, \frac{1}{\sqrt{2}}(\ket{01} + \ket{10}), \ket{11}, \frac{1}{\sqrt{2}}(\ket{01} - \ket{10}) \} $, then we can see that this is a basis of eigenvectors for the SWAP operator that is completely diagonalised in this basis. We can show this explicitly as
\begin{align}
&SWAP \ket{00} = \ket{00}, \nonumber \\
&SWAP \frac{1}{\sqrt{2}}(\ket{01} + \ket{10}) = \frac{1}{\sqrt{2}}(\ket{01} + \ket{10}), \nonumber \\
&SWAP \ket{11} = \ket{11},\nonumber \\
&SWAP \frac{1}{\sqrt{2}}(\ket{01} - \ket{10}) = -\frac{1}{\sqrt{2}}(\ket{01} - \ket{10}).
\end{align}
As both $\{ \mathbb{I}, SWAP \}$ have the same action when applied to the basis states $\{ \ket{00}, \ket{11}, \frac{1}{\sqrt{2}}(\ket{01} + \ket{10}) \}$, this means it corresponds to the trivial representation $\textbf{1}$ of $S_2$. The action on this subspace leaves states unchanged and hence they have an eigenvalue of $1$. The action of SWAP when applied to the basis state $\{ \ket{01} - \ket{10} \}$, leads to a sign flip (eigenvalue of -1) which is therefore denoted as the $\textbf{sign}$ representation of $S_2$, where the sign of the state gets flipped by SWAP and remains unchanged under $\mathbb{I}$.

We can simultaneously block diagonalise $U \otimes U$ and SWAP as they both commute $[U \otimes U, SWAP]=0$. Hence $U \otimes U$ is block diagonalised in this basis. This means that if we have a state in the space spanned by $\{ \ket{00}, \frac{1}{\sqrt{2}}(\ket{01} + \ket{10}), \ket{11} \}$ that it will in this space under the operation of $U \otimes U$. It is possible to identify for $U \otimes U$ a symmetric eigenspace, denoted $\mathrm{Sym}^2(\mathbb{C}^2)$, with eigenvalue $\lambda_{\mathrm{sym}} = 1$ along with an anti-symmetric eigenspace, denoted $\mathrm{\wedge}(\mathbb{C}^2)$, with an eigenvalue $\lambda_{\mathrm{ant}} = -1$.

If we want to find a state that is invariant under $S_2$, which in this case corresponds to permuting the order of the qubits, then the state needs to be in the eigenspace of the trivial representation of $S_2$, this corresponds to the state formed with the basis $\{ \ket{00}, \ket{01} + \ket{10}, \ket{11} \}$, as in this basis both $\mathbb{I}$ and SWAP both leave the state unchanged,hence the state is permutation invariant. Projections to this subspace is how permutation invariant encodings are performed, as initially suggested in \cite{heredge2023permutation}.

However, while the symmetric space is closed under the action $U \otimes U$, the application of the unitaries will still change the state. If we have a unitary matrix given by
\begin{equation}
  U = 
  \begin{pmatrix}
    a & -b^* \\
    b & a*
  \end{pmatrix},
\end{equation}
where $\rvert a \rvert^2 + \rvert b\rvert^2 = 1$. If one applies a Schur transform to $U \otimes U$ which transforms it into the basis described previously it has been shown that it takes a block diagonal form \cite{kirby_thesis, Kirby_2018} 
\begin{equation}
  \mathrm{Sch} (U \otimes U)\mathrm{Sch}^\dagger \negthinspace = \negthinspace 
  \begin{pmatrix}
    a^2 & -\sqrt{2}ab^* & (b^*)^2 & 0 \\
    \sqrt{2}ab & \rvert a \rvert^2 - \rvert b \rvert^2 & -\sqrt{2}a^*b^* & 0 \\
    b^2 & \sqrt{2}a^*b & (a^*)^2 & 0 \\ 
    0 & 0 & 0 & 1    
  \end{pmatrix}
\end{equation}
and the SWAP operator can be written as
\begin{equation}
  \mathrm{Sch} (\mathrm{SWAP}) \mathrm{Sch}^\dagger \negthinspace = \negthinspace
  \begin{pmatrix}
    1 & 0 & 0 & 0 \\
    0 & 1 & 0 & 0 \\
    0 & 0 & 1 & 0 \\    
    0 & 0 & 0 & -1    
  \end{pmatrix}
\end{equation}

The symmetric subspace consists of the upper left block. $U \otimes U$ will transform the state, but keep it within the symmetric subspace. This also means that any variational circuit of the form $U \otimes U$ will similarly keep the state symmetric. This is the intuition behind geometric QML techniques in which variational layers can be constructed, such as $U \otimes U$ in this case, that preserve the symmetry of the state they act on while still allowing variational parameters to be adjusted \cite{meyer22}. Although in this work we do not consider creating variational circuits that preserve symmetry, instead we focus on projecting quantum encoded states to irreducible subspaces, which strictly enforces the symmetry invariance in the model as all information in the quantum state is deleted except the portion lying in the chosen irreducible subspace before the training and classification step are considered. It is important to note that the multiplicity of the subspace of $S_n$ corresponds to the dimension of the representation of $U \otimes U$ in that subspace and vice versa.

If we require a state to be rotationally invariant using the encoding previously described, we require that the state lies in the subspace corresponding to the trivial representation of $SU(2)$, as under the trivial representation the state will be unchanged by the action $U \otimes U$. In the case $n=2$, this would correspond to the basis consisting of a single eigenvector $\frac{1}{\sqrt{2}}(\ket{01} - \ket{10})$. This is now the trivial representation for $SU(2)$ but corresponds to the $\textbf{sign}$ representation in $S_2$. This means that the state is invariant under $SU(2)$ but the SWAP operator changes its sign, hence it is not invariant under permutation. As the dimension and multiplicity of the space is 1, the only normalised state that can exist in this space would be the state $\frac{1}{\sqrt{2}}(\ket{01} - \ket{10})$. Therefore, any projection onto the antisymmetric space for $n=2$ would correspond to projection to the state $\frac{1}{\sqrt{2}}(\ket{01} - \ket{10})$, which would provide rotational invariance, but result in a trivial encoding as all information about the data would be lost. Hence we show that as $U \otimes U$ and $S_2$ can be simultaneously block diagonalised, that we can perform projections for $S_2$ that take the state to a space that is the trivial representation of $U \otimes U$ and is therefore a rotationally invariant state. Note that it would not be possible in this set-up to have a state which is both permutation invariant and rotationally invariant, as they correspond to orthogonal irreducible subspaces.

We also note that if we consider the case where one attempts to create a equivariant variational quantum model that is rotationally invariant in a similar manner to how permutation equivariant models are created in this context, then the variational layer would have to be composed of SWAP gates. These clearly do not have the ability to be parameterised by any variational parameters and therefore it is not possible to create such a variational circuit if strictly following the line of reasoning presented here, although they could be implemented with a different encoding and approach. This highlights a further difference between the rotationally invariant projections proposed here, which enforces symmetry in the state itself, and equivariant VQC models.

\subsection{Rotationally Invariant Encoding for $S_4$}

The case presented for $n=2$ demonstrated a rotationally invariant projection is possible, but it always corresponds to the state $\frac{1}{\sqrt{2}}(\ket{01} - \ket{10})$ and hence is deleting all information about the encoded state. The space is technically one-dimensional, but when considering the normalisation condition of quantum states, the projection is effectively to a zero-dimensional space in terms of the information that is retained. Although the motivation for performing these projections is to reduce the dimensionality of the encoding to aid the generalisation ability of the overall model, this is a trivially large reduction in dimensionality which has no use in practice. We should now investigate the irreducible representations of $S_n$ for $n>2$ to find a less trivial example. We would like to find a subspace where for $S_n$ the dimension of the irreducible representation is $d$ and the multiplicity is $1$. This means that the corresponding subspace of $U^{\otimes n}$, will have dimension $1$ and multiplicity $d$. This would mean that $U^{\otimes n}$ is still in the trivial representation, meaning the state is unchanged under the action $U^{\otimes n}$, however it gives a wider range of possible states within this subspace, and hence some information about the encoded state is preserved.
\begin{figure}\label{fig:schurdecomp}
\begin{center}
  \setcounter{MaxMatrixCols}{20}
  \scalebox{0.9}{\rotatebox{90}{$\begin{bmatrix}
  a^{4} & - 2 a^{3} \overline{b} & \sqrt{6} a^{2} \overline{b}^{2} & - 2 a \overline{b}^{3} & \overline{b}^{4} & 0 & 0 & 0 & 0 & 0 & 0 & 0 & 0 & 0 & 0 & 0 \\2 a^{3} b & a^{2} \left(4 \rvert a \rvert^{2} - 3\right) & \sqrt{6} a \left( 1 - 2 \rvert a \rvert^{2}\right) \overline{b} & \left(4 \rvert a \rvert^{2} - 1\right) \overline{b}^{2} & - 2 \overline{a} \overline{b}^{3} & 0 & 0 & 0 & 0 & 0 & 0 & 0 & 0 & 0 & 0 & 0\\\sqrt{6} a^{2} b^{2} & \sqrt{6} a b \left(2 \rvert a \rvert^{2} - 1\right) & 6 \rvert a \rvert^{4} - 6 \rvert a \rvert^{2} + 1 & \sqrt{6} \left( 1 - 2 \rvert a \rvert^{2}\right) \overline{a} \overline{b} & \sqrt{6} \overline{a}^{2} \overline{b}^{2} & 0 & 0 & 0 & 0 & 0 & 0 & 0 & 0 & 0 & 0 & 0\\2 a b^{3} & b^{2} \left(4 \rvert a \rvert^{2} - 1\right) & \sqrt{6} b \left(2 \rvert a \rvert^{2} - 1\right) \overline{a} & \left(4 \rvert a \rvert^{2} - 3\right) \overline{a}^{2} & - 2 \overline{a}^{3} \overline{b} & 0 & 0 & 0 & 0 & 0 & 0 & 0 & 0 & 0 & 0 & 0\\b^{4} & 2 b^{3} \overline{a} & \sqrt{6} b^{2} \overline{a}^{2} & 2 b \overline{a}^{3} & \overline{a}^{4} & 0 & 0 & 0 & 0 & 0 & 0 & 0 & 0 & 0 & 0 & 0\\0 & 0 & 0 & 0 & 0 & a^{2} & - \sqrt{2} a \overline{b} & \overline{b}^{2} & 0 & 0 & 0 & 0 & 0 & 0 & 0 & 0\\0 & 0 & 0 & 0 & 0 & \sqrt{2} a b & 2 \rvert a \rvert^{2} - 1 & - \sqrt{2} \overline{a} \overline{b} & 0 & 0 & 0 & 0 & 0 & 0 & 0 & 0\\0 & 0 & 0 & 0 & 0 & b^{2} & \sqrt{2} b \overline{a} & \overline{a}^{2} & 0 & 0 & 0 & 0 & 0 & 0 & 0 & 0\\0 & 0 & 0 & 0 & 0 & 0 & 0 & 0 & a^{2} & - \sqrt{2} a \overline{b} & \overline{b}^{2} & 0 & 0 & 0 & 0 & 0\\0 & 0 & 0 & 0 & 0 & 0 & 0 & 0 & \sqrt{2} a b & 2 \rvert a \rvert^{2} - 1 & - \sqrt{2} \overline{a} \overline{b} & 0 & 0 & 0 & 0 & 0\\0 & 0 & 0 & 0 & 0 & 0 & 0 & 0 & b^{2} & \sqrt{2} b \overline{a} & \overline{a}^{2} & 0 & 0 & 0 & 0 & 0\\0 & 0 & 0 & 0 & 0 & 0 & 0 & 0 & 0 & 0 & 0 & 1 & 0 & 0 & 0 & 0\\0 & 0 & 0 & 0 & 0 & 0 & 0 & 0 & 0 & 0 & 0 & 0 & a^{2} & - \sqrt{2} a \overline{b} & \overline{b}^{2} & 0\\0 & 0 & 0 & 0 & 0 & 0 & 0 & 0 & 0 & 0 & 0 & 0 & \sqrt{2} a b & 2 \rvert a \rvert^{2} - 1 & - \sqrt{2} \overline{a} \overline{b} & 0\\0 & 0 & 0 & 0 & 0 & 0 & 0 & 0 & 0 & 0 & 0 & 0 & b^{2} & \sqrt{2} b \overline{a} & \overline{a}^{2} & 0\\0 & 0 & 0 & 0 & 0 & 0 & 0 & 0 & 0 & 0 & 0 & 0 & 0 & 0 & 0 & 1
  \end{bmatrix}$}} 
  \caption{The action $(U \otimes U \otimes U \otimes U)$ written in the Schur basis \cite{kirby_thesis, Kirby_2018}. Blocks correspond to irreducible subspaces of $SU(2)^{\otimes 4}$. The dimension of a block is the dimension of the irreducible subspace and the amount of repeats is the multiplicity. The trivial representation of $SU(2)^{\otimes 4}$ corresponds to blocks of $[1]$, with multiplicity $2$. This corresponds to a $2$-dimensional subspace in $S_n$ with multiplicity $1$.}
\end{center}
\end{figure}

Using the work of \cite{kirby_thesis, Kirby_2018} we consider the group $S_4$. In this case the action of $U \otimes U \otimes U \otimes U$ when written in the Schur basis is shown in Figure~\ref{fig:schurdecomp}. Within this decomposition, we can identify that there are two basis elements in the Schur basis that are invariant under the action of $(U \otimes U \otimes U \otimes U)$. This is the trivial representation of $SU(2)^{\otimes 4}$ which has dimension equal to 1, but has a multiplicity equal to 2. This correspond to the basis states
\begin{align}
  \ket{d_1} = \frac{\sqrt{3}}{6}( & 2\ket{0011} + 2\ket{1100} - \ket{0101} \nonumber \\
  & - \ket{1010} - \ket{0110} - \ket{1001}),
\end{align}
 and
 \begin{equation}
   \ket{d_2} = \frac{1}{2}(\ket{0101} + \ket{1010} - \ket{0110} - \ket{1001}).
 \end{equation}
Due to the Schur-Weyl duality, we know that one of the irreducible subspaces of $S_n$ will correspond to the same basis states. However, in this case the multiplicity of the irreducible representation of $S_n$ will be 1 and the dimension will equal 2. One can find that by projecting onto the $r = 3$ representation of $S_4$ (as specified in Table~\ref{tab:char_table_s4}) using the projection circuit proposed in Section~\ref{sec:irrep_projections}, the quantum input state $\ket{\psi}$ will projected onto some state $\ket{\psi_{\text{rot}}} = \lambda_1 \ket{d_1} + \lambda_2 \ket{d_2} $. In this case there are now two basis states that can be used. However, due to the normalisation constraint $\rvert \lambda_1 \rvert^2 + \rvert \lambda_2 \rvert^2 = 1$ the effective dimension is reduced to be equal to 1 overall. This means that at least some information from the original input state is retained when the projection is performed and is therefore a non-trivial example, although effective projection to a $1$-dimensional space may be too great a dimensionality reduction in practice.

In general, if it is possible to calculate the multiplicity of the trivial representation of dimension 1 of $SU(2)^{\otimes n}$ to be $m_{SU}$, then by Schur-Weyl duality there exists an irreducible subspace of $S_n$ of dimension $m_{SU}$ and multiplicity of $1$. Projecting to this subspace will therefore result in reducing the dimensionality of the encoded data to $m_{SU} - 1$, after considering the normalisation condition. We provide this rotationally invariant encoding as an example to demonstrate how the irreducible subspace projection circuit could be utilised for point cloud data, complementing previous work on permutation invariant encodings of point cloud data \cite{heredge2023permutation}. It may be the case that this is not feasible or useful for $n>4$ and such investigations could be subject to further research. However, the main motivation was to provide an additional example of how the projection framework could be used to encode symmetry in a model. In general, the irreducible subspace projection framework we introduced can be used for any representation of any finite group $G$ along with any encoding procedure for $\ket{\psi}$ and hence there exists ample flexibility for application to a wide array of problems beyond those mentioned as examples in this work. 

\section{Point Cloud Numeric Details}\label{sec:symmetric-amplificaiton-numeric-details}

This section specifies the details of the numerical results reported in Section~\ref{sec:subspace-amplification}. To generate the data, $n = 3$ points were sampled from the surface of a shape, either a sphere or a torus as originally used in \cite{heredge2023permutation}, to create point clouds. The process is repeated with an equal amount of point clouds from each shape until the desired number of total samples is reached. These are split into training sets (80\%) and testing sets (20\%). We evaluated the performance of the model on the test set and averaged the accuracy in 10 experiments in which a newly generated dataset is used each time. Both the sphere and torus are centred at the origin, with the torus scaled to match the sphere's average point magnitude. All data is normalised between $-\frac{\pi}{2}$ and $\frac{\pi}{2}$ to be encoded as rotation angles.

To encode a point cloud, each point $\mathbf{p}_i = (x_i, y_i, z_i)$ is first encoded into a two qubit quantum state $\ket{\mathbf{p}_i}$ using two repeated layers of the encoding circuit shown in Figure~\ref{fig:encoding_per_point}. Then the corresponding statevector is found using Qiskit \textit{statevector\_simulator} \cite{qiskit}. From here we can write the entire point cloud as $\ket{P} = \ket{\mathbf{p}_1} \otimes \ket{\mathbf{p}_2} \otimes \ket{\mathbf{p}_3}$. In order to perform the simulations we then found the projection of $\ket{P}$ onto each irreducible subspace classically utilising the result specified in Theorem~\ref{thm:irreducible representationProj} and the character table for $S_3$. From here the $a_r$ parameters specified in Equation~\ref{a_values} are used to produce the correct linear combination of projections and arrive at the partially symmetric subspace amplified state $\ket{P}_\alpha$, with the amount of symmetry parameterised by $\alpha$. We then implement a quantum support vector machine classification by calculating the inner product between the data points to form a kernel matrix, which is then input to a classical support vector machine for the final classification \cite{havlicek_supervised_2019, heredge2023permutation}. We recorded the average test accuracy over 10 experiments for a given $\alpha$ and then repeated with different $\alpha$ values and plotted the results to see which value of $\alpha$ is optimal for the data.

\begin{figure}[h]%
\centering
\includegraphics[width=1\linewidth]{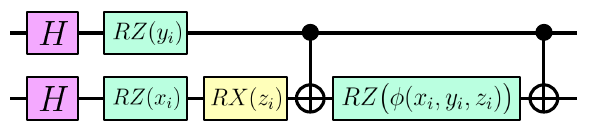}
\caption{One layer of the encoding circuit used to encode an individual point $\mathbf{p}_i = (x_i, y_i, z_i)$ in the symmetric subspace amplification model. The function $\phi(x_i, y_i, z_i) = \frac{2}{\pi^2}(\pi - x_i)(\pi - y_i)(\pi - z_i)$ means the entanglement of the state is dependent on some polynomial of all input variables. To generate the encoded state $\ket{\mathbf{p}_i}$ two repeated layers of this circuit are used \cite{havlicek_supervised_2019, Heredge21}.}\label{fig:encoding_per_point}
\end{figure}

\section{Dimensionality Reduction in Pooling Layers}\label{apn:dimensionality-reduction-pooling}

Regarding the average pooling layers implemented in this work in Section~\ref{sec:cnn}, it is crucial to understand that we have confined our operations to averaging quantum states across image translations. Notably, binary addition operates cyclically, where adding 1 to the highest value loops back to the lowest value, meaning our tiling would wrap around the edges of the image to perform the averaging. This may be an undesirable effect, unless the image has some periodic feature. It is also the case that the average pooling described does not provide any dimensionality reduction, as we can calculate the pooling window average for every pixel at no additional computational cost. 

To perform the dimensionality reduction, and to prevent the image being treated as periodic, one method could involve deleting any states in which $x_i > N - L \text{ or } y_j > N - L$ after the averaging part of the pooling layer is implemented by implementing
\begin{equation}
  \ket{x_i}\ket{y_j} \rightarrow \text{DELETE if } x_i \text{ or } y_j > N - L.
\end{equation}
To perform this without any a priori information of the data or architecture, one can define the $F_{\text{FLAG},x}$ and $F_{\text{FLAG},y}$ operators alongside two ancilla flag qubits, one for the $x$ and $y$ flags, respectively.
\begin{align}
 F_{\text{FLAG},x}& \ket{x_i}\ket{y_j} \ket{0}_x \ket{0}_y \nonumber \\
&= 
\begin{cases}
  \ket{x_i}\ket{y_j}\ket{0}_x \ket{0}_y \text{ if } x_j \leq N - L \\
  \ket{x_i}\ket{y_j}\ket{1}_x \ket{0}_y \text{ if } x_j > N - L 
\end{cases}
\end{align}
and
\begin{align}
 F_{\text{FLAG},y}& \ket{x_i}\ket{y_j}\ket{0}_x \ket{0}_y \nonumber \\
 & = \begin{cases}
  \ket{x_i}\ket{y_j}\ket{0}_x \ket{0}_y \text{ if } y_j \leq N - L \\
  \ket{x_i}\ket{y_j}\ket{0}_x \ket{1}_y \text{ if } y_j > N - L 
\end{cases}
\end{align}
One can now effectively remove unwanted states to reduce the dimensionality by measuring the ancilla warning qubits and disregarding results unless $\ket{0}_x \ket{0}_y$ is measured. This will delete any states in which the average pooling layer will have wrapped around the sides of the image. 

Clearly as this is a probabilistic procedure we expect that significantly more efficient implementations are possible depending on the exact specifics of the problem. For example, if it is known that some of the most significant qubits correspond to pixels that should not be used, then they can simply be discarded without needing to implement flag operators and extra ancilla qubits. Implementation of the discarding part of the pooling process will therefore be dependent on the exact requirements of the model, although in the worst-case scenario resorting to using $F_{\text{FLAG}}$ operators, as described above, to remove specific states would succeed in reducing the image dimensionality. We refer the reader to \cite{wei2021quantumconvolutionalneuralnetwork} for more details on implementing pooling within a full quantum convolutional neural network framework.

\section{Subtraction Circuit Scaling}\label{apn:decrementscaling}

The subtraction / decrement operator we consider is controlled by a single ancilla qubit meaning that overall we consider $n+1$ qubits consisting of a total of $n$ multi-controlled Toffolli gates, with the largest being controlled by $n$ qubits, and the amount of controlling qubits decreasing by $1$ each time such that the final gate is a CNOT gate. For $n+1$ total qubits the $n$ qubit multi-controlled Toffoli can be decomposed into at worst $\mathcal{O}(n^2)$ basic operations \cite{Barenco_1995}, which may be improved if ancillas are used \cite{baker2019decomposingquantumgeneralizedtoffoli}. The $m$ qubit controlled Toffoli with $m < n-1$ can be created by $\mathcal{O}(m)$ basic operations \cite{Barenco_1995}. As there are $n - 1$ gates of this type in total, their overall contribution will be bounded by $\mathcal{O}(n^2)$. Hence, overall we can upper bound the required number of basic operations at $\mathcal{O}(n^2)$ to create the basic controlled subtraction gate (Although it has also been suggested linear scaling may be possible for increment/decrement gates using extra ancillas \cite{gidney, khattar2024riseconditionallycleanancillae, gidney2018factoringn2cleanqubits}). Note that as we consider subtraction operators where the subtraction amount doubles each time, this means subsequent subtraction operators can be implemented by ignoring an additional least significant qubit as the subtraction amount doubles. This means that in practise subsequent subtraction operators will use even fewer gates, although we will not include this in our complexity calculation. As $n=\log(N)$ and as we require $\mathcal{O}(\log(D))$ operators in total then overall we require $\mathcal{O}(\log(D)(\log(N))^2)$ basic operations to implement the average pooling.

\section{Degeneracy Avoidance in Average Pooling}\label{apnd:degeneracy-prevention}

If window length $D < 2^L$ where $L$ is the number of ancilla qubits and controlled operators then we need to change the final subtraction operator value. For example if $D = 8$ then we would we would generate all $8$ operators $\mathbb{I} + \hat{T}_1+ \hat{T}_2+...+ \hat{T}_7$ by using only three controlled unitaries $\hat{T}_1, \hat{T}_2, \hat{T}_3$. However if we had $D=7$ then we could attempt to implement this using $\hat{T}_1, \hat{T}_2, \hat{T}_3$ but writing this out in full one will see this implements
\begin{equation}
    \mathbb{I}+\hat{T}_1+\hat{T}_2+ 2\hat{T}_3 + \hat{T}_4 + \hat{T}_5 + \hat{T}_6
\end{equation}
which correctly has only $D=7$ terms, but there is a degeneracy for $\hat{T}_3$ meaning that it has double weighting and we do not have an equal superposition. This may however be preferable if one wishes to create a general convolutional filter, where in this case the pixel corresponding to $\hat{T}_3$ in the convolutional filter is supposed to be given a weighting double that of all others. 

If one wishes to avoid this degeneracy though, then it can be achieved by avoiding an equal superposition of all ancilla qubits and instead initialising them to avoid repeating terms. In the above example both the ancilla states $\ket{100}$ and $\ket{110}$ result in an implementation of $\hat{T}_3$. Setting the amplitude of either of these states to zero would remove the degeneracy and correctly implement the average pooling layer again.

\section{Input Skip Connections Only ResNet}\label{apn:input-focussed-quantum-resnet}

In this section we shall discuss a less general but more qubit efficient implementation of quantum ResNet where the skipped connection is not between each layer, but only between the input and every other layer in the circuit (until immediately before the final layer).

\begin{theorem}[Input Skip Only Quantum ResNet]
  It is possible to probabilistically implement a quantum ResNet that only includes skipped connections between the input and every other layer by utilising the LCU method within a quantum variational model. In this context, the implementation of a Quantum Native ResNet for $L$ layers is defined as the implementation of the operator
\[ R_{\text{RES},L} \ket{\phi(\mathbf{x})} = \frac{1}{\Omega'}\sum_{f=1}^{L} \prod_{l = f}^{L} \rvert \gamma_f \rvert^2 W_l(\mathbf{\theta}_l) \ket{\phi(\mathbf{x})} , \]
 where $W_l(\mathbf{\theta}_l)$ corresponds to a unitary variational gate implemented in layer $l$ and $\Omega'$ is a normalisation constant. The coefficients $\gamma_f$ correspond to the desired weighting contribution for each layer and can be freely adjusted subject to the requirement that $\sum_f \rvert \gamma_f \rvert^2 = 1$.

\end{theorem}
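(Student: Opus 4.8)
The plan is to recognise the target operator as a Linear Combination of Unitaries and then invoke the machinery of Section~\ref{sec:lcu-framework} essentially verbatim, with a preparation and selection operator tailored so that only $\mathcal{O}(\log L)$ ancilla qubits are required. Writing $V_f \equiv \prod_{l=f}^{L} W_l(\mathbf{\theta}_l) = W_L(\mathbf{\theta}_L)\cdots W_f(\mathbf{\theta}_f)$, each $V_f$ is a product of unitaries and hence itself unitary, and the operator to be implemented is the linear combination $R_{\text{RES},L} = \frac{1}{\Omega'}\sum_{f=1}^{L}|\gamma_f|^2 V_f$. The hypothesis $\sum_f |\gamma_f|^2 = 1$ will make the ancilla normalisation constant $\Omega = \sum_f |\gamma_f|^2$ equal to unity, so that $\Omega'$ depends only on the post-selection probability.

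First I would fix an ancilla register of $k = \lceil \log L \rceil$ qubits and define the preparation operator through its action $P_{\text{PREP}}\ket{b_1} = \sum_{f=1}^{L}\gamma_f \ket{f}$, where $\ket{f}$ is the computational basis state encoding the integer $f$; since $\sum_f |\gamma_f|^2 = 1$ this is a normalised single-register state preparation and is therefore realisable by a unitary (the unused columns completed by Gram--Schmidt exactly as was done for $\hat{\chi}$ earlier). The selection operator is where the efficiency and essentially the only real content lie: rather than implementing each $V_f$ as a separate multi-controlled unitary, I would define
\[
 S_{\text{SELECT}} = C_{\le L}\big[W_L(\mathbf{\theta}_L)\big]\cdots C_{\le 2}\big[W_2(\mathbf{\theta}_2)\big]\, C_{\le 1}\big[W_1(\mathbf{\theta}_1)\big],
\]
so that the $l=1$ conditional acts on the target first, where $C_{\le l}[W_l]$ applies $W_l$ to the target register if and only if the integer $f$ stored in the ancilla satisfies $f \le l$. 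The key observation is a telescoping one: for a fixed basis state $\ket{f}$ the gates that fire are exactly those with $l \ge f$, and they fire in the order $W_f, W_{f+1}, \dots, W_L$, so that $S_{\text{SELECT}}\ket{f}\ket{\phi(\mathbf{x})} = \ket{f}\, W_L\cdots W_f \ket{\phi(\mathbf{x})} = \ket{f}\,V_f\ket{\phi(\mathbf{x})}$, which is precisely the required controlled action.

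With these operators in hand the remainder is a direct application of the LCU framework. Preparing the ancilla, applying $S_{\text{SELECT}}$, applying $P_{\text{PREP}}^\dagger$ and post-selecting the ancilla in $\ket{b_1}$ yields, by the same computation as in the proof of Theorem~\ref{thm:resnet-main}, the coefficient $\gamma_f \gamma_f^* = |\gamma_f|^2$ on each branch, leaving the unnormalised state $\sum_f |\gamma_f|^2 V_f \ket{\phi(\mathbf{x})}$; normalising by $\Omega' = \sqrt{\pi_S}\,\Omega = \sqrt{\pi_S}$ gives the claimed $R_{\text{RES},L}\ket{\phi(\mathbf{x})}$, with $\pi_S$ the probability of measuring the ancillas in $\ket{b_1}$.

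The hard part will not be the LCU bookkeeping but establishing that $S_{\text{SELECT}}$ is both correct and cheap to build. Correctness reduces to the ordering argument above, which I would verify carefully and could alternatively phrase as an induction on $L$ in the style of Theorem~\ref{thm:resnet-main}. Cheapness requires that each conditional $C_{\le l}$ be realised as an integer comparison of the $\mathcal{O}(\log L)$-qubit ancilla register against the constant $l$, controlling a single application of $W_l$; assembling these comparators is exactly what keeps the ancilla count at $\mathcal{O}(\log L)$ rather than the $\mathcal{O}(L)$ of the per-layer construction, and confirming their unitarity and implementability on a quantum device is the main technical point to nail down.
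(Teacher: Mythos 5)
Your proposal is correct and follows essentially the same route as the paper's proof: identical preparation operator $P_{\text{PREP}}\ket{b_1}=\sum_{f=1}^{L}\gamma_f\ket{b_f}$ with Gram--Schmidt completion, the same abstract selection action $S_{\text{SELECT}}\ket{b_f}\ket{\phi(\mathbf{x})}=\ket{b_f}\,\prod_{l=f}^{L}W_l(\mathbf{\theta}_l)\ket{\phi(\mathbf{x})}$, and the same post-selection algebra yielding the $\rvert\gamma_f\rvert^2$ coefficients, success probability $\pi_S$, and $\Omega'=\sqrt{\pi_S}$ since $\Omega=\sum_f\rvert\gamma_f\rvert^2=1$. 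Your only departure is a point in your favour: where the paper merely asserts that $S_{\text{SELECT}}$ is implementable because each $\prod_{l=f}^{L}W_l(\mathbf{\theta}_l)$ is unitary, you give an explicit telescoping realisation as a cascade of comparison-controlled gates $C_{\le L}[W_L]\cdots C_{\le 1}[W_1]$, each $W_l$ appearing exactly once under an integer comparator on the $\mathcal{O}(\log L)$-qubit ancilla register, which is correct (the gates firing for basis state $\ket{f}$ are precisely $l\ge f$ in increasing order, giving $W_L\cdots W_f$) and makes concrete the qubit- and gate-efficiency that the paper's Figure~\ref{fig:quantum_resenet_input_skip_only} only illustrates.
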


\begin{proof}

The preparation operator can be implemented on $k$ ancilla qubits as a unitary operator 
\begin{equation}
  P_\text{PREP} = \sum_{f=1}^{L} \gamma_f \ket{b_f} \bra{b_1} + \sum_{j = 2}^L \sum_{f=1}^{L} u_{j,f} \ket{b_f} \bra{b_j},
\end{equation}
where the states $\{ \ket{b_f} \}_{f \in [1, 2^k]} \in (\mathbb{C}^2)^{\otimes k}$ are basis states of the $2^k$ dimensional Hilbert space for the $k$ qubit ancilla register denoted by $\mathcal{H}= (\mathbb{C}^2)^{\otimes k}$. This is usually taken to be the computational basis with $\ket{b_1} \equiv \ket{0}^{\otimes k}$. We are only concerned with the $\bra{b_1}$ term, as that operator is only applied to $\ket{b_1}$ such that
\begin{equation}
  P_\text{PREP}\ket{b_1} = \sum_{f=1}^{L} \gamma_f \ket{b_f} ,
\end{equation}

subject to the condition that $\sum_f \rvert \gamma_f \rvert^2 = 1$. Subsequently, the selection operator can be applied as
\begin{equation}
S_\text{SELECT} \ket{b_f} \ket{\phi(\mathbf{x})} =  \ket{b_f} \Big( \prod_{l = f}^{L} W_l(\mathbf{\theta}_l) \ket{\phi(\mathbf{x})} \Big),
\end{equation}
where $ \prod_{l = f}^{L} W_l(\mathbf{\theta}_l)$ is a product of unitary operators $ W_l(\mathbf{\theta}_l)$ and hence clearly implementable on a quantum circuit. Applying the LCU framework we can see that
\begin{align}
  & S_\text{SELECT} P_\text{PREP}\ket{b_1} \ket{\phi(\mathbf{x})} \nonumber \\
  & = \sum_{f=1}^{L} \gamma_f \ket{b_f} \Big( \prod_{l = f}^{L} W_l(\mathbf{\theta}_l) \ket{\phi(\mathbf{x})} \Big).
\end{align}
We can apply the conjugate preparation operator which in general can be written as
\begin{equation}
  (P_\text{PREP})^\dagger = \sum_{f=1}^{L} \gamma_f^* \ket{b_1} \bra{b_f} + \sum_{j = 2}^L \sum_{f=1}^{L} u_{jf}^* \ket{b_j} \bra{b_f},
\end{equation}
to find that
\begin{align}
  & P_\text{PREP}^\dagger S_\text{SELECT} P_\text{PREP}\ket{b_1} \ket{\phi(\mathbf{x})} \nonumber \\ 
  & = \sum_{f=1}^{L}  \rvert \gamma_f \rvert^2 \ket{b_1} \prod_{l = f}^{L} W_l(\mathbf{\theta}_l) \ket{\phi(\mathbf{x})} + \sum_{j = 2}^L \ket{b_j}(...).
\end{align}
The next step is to measure the ancilla qubits and discard results when the ancilla is not in the $\ket{b_1}$ state, hence we can ignore terms $ \ket{b_j}$ for $j \geq 2$. The probability $\pi_S$ of measuring the ancillas in the $\ket{b_1}$ state corresponds to
\begin{align}
  \pi_S = \rvert \sum_{f=1}^{L} \rvert \gamma_f \rvert^2 \prod_{l = f}^{L} 
 W_l(\mathbf{\theta}_l) \ket{\phi(\mathbf{x})} \rvert^2,
\end{align}
and will hence be dependent on the architecture and initial data encoding. Requiring the ancillas to be in the $\ket{b_1}$ state we see that
\begin{align}
  & \bra{b_1} P_\text{PREP}^\dagger S_\text{SELECT} P_\text{PREP}\ket{b_1} \ket{\phi(\mathbf{x})} \nonumber \\ 
  & = \frac{1}{\Omega'} \sum_{f=1}^{L}  \rvert \gamma_f \rvert^2 \prod_{l = f}^{L} W_l(\mathbf{\theta}_l) \ket{\phi(\mathbf{x})},
\end{align}
where $\Omega' = \sqrt{\pi_S}$, as required.
\end{proof}

This LCU method will successfully prepare the desired residual network. Hence we have shown that VQC models can be built with skipped connections under the quantum ResNet framework by utilising the LCU method to implement these non-unitary operations. An example circuit implementation for the four-layer case $L=4$ is shown in Figure~\ref{fig:quantum_resenet_input_skip_only}. Notably, this method is more qubit efficient requiring $\mathcal{O}(\log(L))$ qubits for $L$ layers, as opposed to $\mathcal{O}(L)$ qubits in the previous case.

 \begin{figure}[h]%
\centering
\includegraphics[width=1\linewidth]{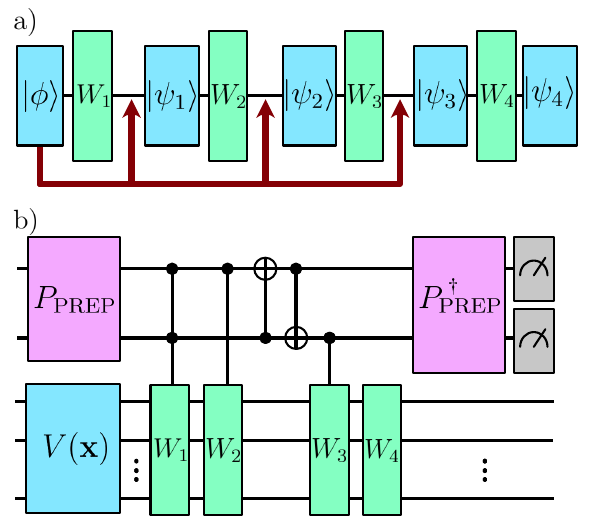}
\caption{a) ResNet conceptual illustration in which residual connections are only performed from the input layer to all subsequent layers (stopping before the final layer). b) Circuit implementation of this ResNet using the LCU method, showing the four gate case can be implemented with only two ancilla qubits. }\label{fig:quantum_resenet_input_skip_only}
\end{figure}

In terms of the probabilistic scaling we see
\begin{align}
  \pi_S = & \rvert \sum_{f=1}^{L}  \rvert \gamma_f \rvert^2 \prod_{l = f}^{L} W_l(\mathbf{\theta}_l) \ket{\phi(\mathbf{x})} \rvert^2 \nonumber \\
  = & \sum_{f, h } \rvert \gamma_h \gamma_f \rvert^2 \Big( \bra{\phi(x)} (\prod_{l = h}^{L}W_{L - l + h}(\mathbf{\theta}_h)^\dagger) \nonumber \\
  & (\prod_{l = f}^{L}W_l(\mathbf{\theta}_l))\ket{\phi(\mathbf{x})})\Big).
\end{align}
Hence, the probability of success depends on the encoding state $\ket{\phi(\mathbf{x})})$ and the variational layers, but can also be adjusted through the strength of the residual connection using the $\gamma$ parameters.

\subsection{Two Gate Input Worked Example}\label{sec:two-gate-resenet}

In this section we give a brief worked example for the two-layer case of our LCU quantum ResNet with equal weightings.

Consider the encoded quantum state as $\ket{\phi(\mathbf{x})}$ which will be evolved by the first layer of a variational circuit $W_1(\mathbf{\theta}_1)$ resulting in the state $ \ket{\psi_1(x, \mathbf{\theta}_1)} = W_1(\mathbf{\theta}_1)\ket{\phi(\mathbf{x})}$. 

If we wish to introduce a quantum ResNet type skipped connection between the input to the first layer then one realisation of this would be implementing the evolution
\begin{equation}
  W_2(\mathbf{\theta}_2)\big( \ket{\phi(\mathbf{x})} + \ket{\psi_1(x, \mathbf{\theta}_1)} \big).
\end{equation}
In general, this is not a unitary operation. However, the desired ResNet skipped connection can indeed be written as a linear combination of unitary operations
\begin{equation}
  \ket{\psi_2(x, \mathbf{\theta}_1, \mathbf{\theta}_2)} = \big( W_2(\mathbf{\theta}_2) + W_2(\mathbf{\theta}_2)W_1(\mathbf{\theta}_1) \big) \ket{\phi(\mathbf{x})}.
\end{equation}
This can be implemented using the LCU method by defining $P_{\text{PREP}}$ equal to the Hadamard gate such that

\begin{equation}
  P_\text{PREP} \ket{0} = \frac{1}{\sqrt{2}}(\ket{0} + \ket{1}),
\end{equation}
and
\begin{equation}
S_\text{SELECT}\ket{0}\ket{\phi(\mathbf{x})} = \ket{0}W_2(\mathbf{\theta}_2)\ket{\phi(\mathbf{x})} ,
\end{equation}
\begin{equation}
S_\text{SELECT}\ket{1}\ket{\phi(\mathbf{x})} = \ket{1}W_2(\mathbf{\theta}_2)W_1(\mathbf{\theta}_1)\ket{\phi(\mathbf{x})}.
\end{equation}
After applying the $P_\text{PREP}$ and $S_\text{SELECT}$ operators and measuring the ancilla qubit to be in the $\ket{0}$ state, then the appropriate $\ket{\psi_2(x, \mathbf{\theta}_1, \mathbf{\theta}_2)}$ can be prepared. We show this explicitly through each step for clarity starting with the preparation operator
\begin{equation}
   P_\text{PREP} \ket{0}\ket{\phi(\mathbf{x})} = \frac{1}{\sqrt{2}}(\ket{0} + \ket{1})\ket{\phi(\mathbf{x})},
\end{equation}
applying the selection operator
\begin{align}
   & S_\text{SELECT} P_\text{PREP} \ket{0}\ket{\phi(\mathbf{x})} \nonumber \\
   & = \frac{1}{\sqrt{2}}\Big( \ket{0} W_2(\mathbf{\theta}_2) \ket{\phi(\mathbf{x})} + \ket{1} W_2(\mathbf{\theta}_2)W_1(\mathbf{\theta}_1) \ket{\phi(\mathbf{x})} \Big),
\end{align}
applying the inverse preparation operator, which is a Hadamard gate we find
\begin{align}
   P_\text{PREP}^\dagger & S_\text{SELECT} P_\text{PREP} \ket{0}\ket{\phi(\mathbf{x})} \nonumber \\
    = \frac{1}{2}\Big(& \ket{0} \big( (W_2(\mathbf{\theta}_2) + W_2(\mathbf{\theta}_2)W_1(\mathbf{\theta}_1) \big) \ket{\phi(\mathbf{x})} \nonumber \\
   + & \ket{1} \big( W_2(\mathbf{\theta}_2) - W_2(\mathbf{\theta}_2)W_1(\mathbf{\theta}_1) \big) \ket{\phi(\mathbf{x})} \Big),
\end{align}
measuring the ancilla and requiring it to be in the $\ket{0}$ state gives
\begin{align}
  & \bra{0}P_\text{PREP}^\dagger S_\text{SELECT} P_\text{PREP} \ket{0}\ket{\phi(\mathbf{x})} \nonumber \\
  & = \frac{1}{\sqrt{\pi_S}}\big( W_2(\mathbf{\theta}_2) + W_2(\mathbf{\theta}_2)W_1(\mathbf{\theta}_1) \big) \ket{\phi(\mathbf{x})},
\end{align}
which is probabilistically prepared with a probability equal to
\begin{align}
  \pi_S & = \frac{1}{4}\bra{\phi(x)}(2\mathbb{I} + W_1(\mathbf{\theta}_1) + W_1^\dagger(\theta_1))\ket{\phi(\mathbf{x})} \nonumber \\
  & = \frac{1}{2}\Big(1 + \text{Re}\big(\bra{\phi(x)} W_1(\mathbf{\theta}_1) \ket{\phi(\mathbf{x})}\big)\Big).
\end{align}
Note that the important part of the probability scaling is the real component of $\text{Re}\big(\bra{\phi(x)} W_1(\mathbf{\theta}_1) \ket{\phi(\mathbf{x})} \big)$. In the case that $W_1(\mathbf{\theta}_1) = \mathbb{I}$ then success is guaranteed, and in the case $W_1(\mathbf{\theta}_1) = -\mathbb{I}$ then success is impossible. For general $W_1(\mathbf{\theta}_1)$ matrices and states $\ket{\phi(\mathbf{x})}$ then the success will vary. It may therefore in practice be worth placing restrictions on $W_1(\mathbf{\theta}_1)$ to ensure that $\text{Re}\big(\bra{\phi(x)} W_1(\mathbf{\theta}_1) \ket{\phi(\mathbf{x})} \big)$ does not approach too close to -1 resulting in extremely low success probabilities. However, as mentioned in the main text, if the strength of the residual connections are adjusted it can be possible to increase the lower bound. This example used equal strength in the residual connections for simplicity, which corresponds to the worst-case lower bound.

\end{appendices}

\clearpage

\bibliographystyle{unsrtnat}

\bibliography{bibliography}

\end{document}